\documentclass[11pt]{article}
\usepackage{fullpage}
\usepackage{amssymb,latexsym,amsmath}     
\usepackage [latin1]{inputenc}
\usepackage{graphicx}
\usepackage{authblk}
\usepackage{dsfont}
\usepackage{amsthm}
\usepackage{color}
\usepackage{float,todonotes}
\usepackage{amsmath}
\providecommand{\DP}{\ensuremath{\mathsf{DP}}}
\usepackage{comment}
\usepackage{tcolorbox}
\usepackage[colorlinks=true,linkcolor=blue, citecolor=blue,urlcolor=blue]{hyperref}
\newcommand{\tododiya}[1]
{\todo[linecolor=orange,backgroundcolor=orange!50,bordercolor=pink]{#1}}  
\usepackage{algorithm}

\usepackage{algpseudocode}
\theoremstyle{definition}
\newtheorem{definition}{Definition}[section]
\theoremstyle{plain}
\newtheorem{theorem}{Theorem}[section]
\newtheorem{lemma}{Lemma}[section]
\newtheorem{claim}{Claim}[section]
\newtheorem{observation}{Observation}

\theoremstyle{remark}

\newtheorem*{observation*}{Observation}

\begin{document}

\title{Exploiting Graph Structure for Near-Optimal Broadcasting}

\author[1]{Rudranarayan Kar}
\author[2]{Praneet Kumar Patra}
\author[1]{Diya Roy}
\author[1]{Abhishek Sahu}

\affil[1]{NISER Bhubaneswar, An OCC of Homi Bhabha National Institute, Mumbai}
\affil[2]{IISER Pune}

\date{}
\maketitle

\begin{abstract}
   Telephone broadcasting is a well-studied model for spreading information in a network as fast as possible. Here, the network is modeled as a connected graph $G(V,E)$, and a piece of information starts at a source vertex $s$. The only rule is that, in one time step, an informed vertex can inform exactly one of its uninformed neighbors. Decision version asks, given a source $s$ and a positive integer $t$, whether the information can be spread to every vertex of $G$ within $t$ time steps; we call this the broadcast time and denote it $b(G,s)$. A related version does not fix the source in advance; here we look at the worst-case broadcast time over all possible sources, $b(G) = \max_{u \in V} b(G,u)$. Both versions of the problem are known to be NP-hard. Also mark that the broadcast time of any graph on $n$ vertices must be at least $\lceil \log_2 n \rceil$. Recently, Fomin \textit{et al.} \cite{fomin2023parameterized} showed that this problem is fixed-parameter tractable (FPT) with respect to certain structural parameters of the graph. In this paper, our goal is : instead of insisting on an optimal broadcast schedule, we look for algorithms that find a valid (possibly non-optimal) broadcast schedule much faster. We study the Broadcasting problem from a parameterized and exact-algorithms perspective, and improve two results of Fomin \textit{et al.}~\cite{fomin2023parameterized}. First, we improve their exact algorithm running in time $\mathcal{O}^*(3^n)$ to $\mathcal{O}^*(3-f(x))^n$ for a +$x$ approximation where $f(x)$ is upperbounded by a constant for every constant value of $x$. Second, we gave a poly time $+2k$ approximation algorithm parameterized by vertex integrity(Vi). Complementing these positive results, we establish hardness for several parameters, including the two parameters that are smaller than the vertex cover number, vertex cover above maximum matching ($\mathrm{VC}-\mathrm{MM}$) and dominating set size, as well as the graph diameter, which is asymptotically bounded by the vertex cover number. Our results show that the problem is unlikely to be fixed-parameter tractable with respect to any of these parameters Finally, we present a $+2$-additive approximation algorithm running in time $\mathcal{O}^*(2^{O(k \log k)})$, improving upon a prior $\mathcal{O}^*(2^{O(k^2)})$-time exact algorithm for the distance-to-clique parameter, and we give a $2$-factor approximation algorithm parameterized by distance to path running in XP time.
\end{abstract}

\section{Introduction} 
Broadcasting is a fundamental communication problem that models the dissemination of information through a network. Given a graph and a source vertex initially possessing a message, the objective is to inform every vertex of the graph in as few rounds as possible, under the restriction that each informed vertex may transmit the message to at most one neighbor in every round. Despite its simple formulation, the problem captures an important trade-off between the underlying network topology and the limited communication capacity of individual vertices. Variants of the broadcasting problem have been studied in distributed computing, communication networks, and parallel algorithms, where efficient information dissemination is a central objective.

The broadcasting problem is computationally challenging on general graphs, motivating the search for efficient algorithms on restricted graph classes and under suitable parameterizations. During the past two decades, parameterized complexity has emerged as a successful framework for designing algorithms whose running time depends primarily on a structural parameter of the input rather than its overall size. Numerous graph problems that remain intractable in general admit fixed-parameter tractable algorithms when parameterized by measures such as treewidth, vertex cover number, feedback vertex set, or cluster deletion number.

\noindent \textbf{Literature.}
The Telephone Broadcast problem was introduced by Slater \textit{et al.}\cite{slater1981information} in 1981 and showed that it is NP-complete in general but can be solved in poly-time when the graph is a tree. Later, Elkin and Kortsarz\cite{elkin2002combinatorial} found that approximating it within a factor of $3-f(x)$ for any $x > 0$ is NP-hard, which rules out good polynomial-time approximations and motivates the search for exact and parameterized approaches. 

On general graphs, Fomin \textit{et al.}\cite{fomin2023parameterized} gave an exact exponential algorithm for the problem via dynamic programming over connected vertex subsets, which runs in time $3^n\cdot n^{\mathcal{O}(1)}$. This improves the earlier brute-force approach that checks all vertex permutations, obtaining a running time of $n! \cdot n^{\mathcal{O}(1)} $.

Fomin, Fraigniaud, and Golovach\cite{fomin2023parameterized} were the first to study the Telephone Broadcast problem from the perspective of parameterized complexity. They showed that the problem is fixed-parameter tractable (FPT) when parameterized by the size of a feedback edge set, the size of a vertex cover, or by $k=n-t$, where $t$ is the broadcast deadline. Their algorithm for the vertex cover parameter is based on two simple structural observations: the time needed to inform the vertex cover and the number of vertices outside the vertex cover that can be informed within a given number of rounds. They further asked whether the problem remains FPT when parameterized by treewidth, noting that no NP-hardness result was known even for treewidth at most two.

This question was negatively answered by two independent and closely related works. First, Tale\cite{tale2024double} showed that the straightforward brute-force algorithm is essentially optimal. Since the number of informed vertices can at most double in each round, the algorithm runs in $2^{2^{O(t)}} \cdot n^{\mathcal{O}(1)}$ time when parameterized by the broadcast deadline $t$. Tale\cite{tale2025telephone} further proved that, assuming the ETH, no algorithm can achieve a significantly better dependence on $t$, establishing a tight double-exponential lower bound.

Egami \textit{et al.}\cite{egami2025broadcasting} further strengthened these hardness results by proving that the problem remains NP-complete on graphs of bounded tree-depth and on cactus graphs that can be transformed into a forest of paths by deleting a single vertex. In response to these hardness results, they studied the problem under other structural parameters. They obtained FPT algorithms parameterized by vertex integrity, extending the earlier result for vertex cover, and by distance to a clique. They also designed FPT approximation algorithms parameterized by clique cover number and cluster vertex deletion.

Beyond vertex cover, related work has looked at whether even smaller structural parameters could still lead to efficient algorithms. One natural candidate is the size of a minimum dominating set, since requiring a graph to have a small dominating set is less restrictive than requiring a small vertex cover. Therefore, an FPT algorithm parameterized by the dominating set number would significantly extend the known results. However, Tale \cite{tale2025telephone} showed that this is not always possible. In particular, the problem is para-NP-hard when parameterized by several smaller structural parameters, including the feedback vertex set number and treewidth. As a result, these parameterizations do not admit XP algorithms unless ($\mathrm{P}=\mathrm{NP}$), answering an open question posed by Fomin \textit{et al.} \cite{fomin2023parameterized}. As we show in this work, the same phenomenon also holds for the dominating set number.

Polar graphs were introduced by Tyshkevich and Chernyak \cite{tyshkevich1985decomposition}as a generalization of split graphs, as well as bipartite and co-bipartite graphs, by allowing one side of the partition to induce a disjoint union of cliques rather than a single clique. Although recognizing whether a graph is polar is NP-complete, once a polar partition is given, the graph has sufficient structure to be exploited algorithmically. To the best of our knowledge, no broadcast-specific algorithm has previously been developed for polar graphs. In this paper, we present the first polynomial-time algorithm for the broadcasting problem in this graph class.

On the hardness side, recent work has shown that the boundary between tractable and intractable cases extends to very sparse, path-like graphs. Aminian \textit{et al.} \cite{aminian_et_al:LIPIcs.ICALP.2025.10} proved that the problem is NP-hard on cactus graphs of pathwidth two, while Egami \textit{et al.} \cite{egami2025broadcasting} showed that it remains NP-complete even on cactus graphs that can be transformed into a disjoint union of paths by deleting just one vertex. These results suggest that the broadcasting problem remains hard even on graphs that are very close to paths. This naturally motivates studying graph parameters that measure such closeness, such as the vertex distance to a path, which we investigate in this paper.
\subsection*{Our Contributions}
In this paper, we study the \textsc{Telephone Broadcasting} problem and advance the state of the art along several complementary axes: exact/approximate algorithms on general graphs, parameterized exact and approximate algorithms, polynomial-time solvability on a structured graph class, and hardness results. Our contributions can be summarized as follows.

\noindent\textbf{A constant additive approximation on general graphs.}
We design an algorithm that achieves a constant additive $x$ approximation to the optimum broadcast time on \emph{general} graphs, running in $\mathcal{O^*}({3-f(x)})^n)$ time, where $f(x)>0$ when $x>0$. The key algorithmic idea is a \emph{round-splitting} technique: instead of allowing a single round to inform an arbitrarily large set of uninformed vertices from the currently informed set, we show that it suffices to break a constant number of rounds into several sequential sub-rounds, each of which informs only a restricted subset of vertices. This restructuring shrinks the search space that needs to be explored, while provably costing only an additive constant number of extra rounds, yielding the improved runtime bound.

\begin{theorem}\label{thm:improved-exact}
Given an instance $(G, s, t)$ of the Broadcasting problem, there exists an exact algorithm that solves the problem in time $\mathcal{O}^*(3 - f(x))^n$ for some $x > 0$, improving upon the previously best known running time of $\mathcal{O}^*(3^n)$ due to \cite{fomin2023parameterized}.
\end{theorem}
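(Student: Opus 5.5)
The plan is to keep the subset dynamic programming of \cite{fomin2023parameterized} and shrink the part of its state space that actually yields the $3^n$ bound. That DP keeps, for each round $i$, the family $\mathcal{R}_i$ of vertex sets that can be fully informed after $i$ rounds. The transition tries every pair $(S,T)$ with $S\in\mathcal{R}_i$ and $T\subseteq N(S)\setminus S$, and checks in polynomial time, via bipartite matching, whether $T$ can be matched into $S$. The pairs $(S,T)$ with $S\cap T=\emptyset$ number $3^n$. A standard entropy estimate shows that the count of such pairs with $|S|=\alpha n$, $|T|=\beta n$ is $2^{(H(\alpha,\beta,1-\alpha-\beta)+o(1))n}$. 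This is maximised only near $\alpha=\beta=\tfrac13$.

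So I split the pairs into two kinds. A pair is \emph{balanced} if $|S|,|T|\in[(\tfrac13-\delta)n,(\tfrac13+\delta)n]$, and \emph{unbalanced} otherwise. Here $\delta=\delta(x)>0$ is a small constant still to be fixed. Unbalanced pairs are enumerated exactly as before. By the entropy bound their total number is $\mathcal{O}^*((3-f)^n)$ for some $f=f(\delta)>0$.

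For balanced pairs the first observation is structural. Since $|S\cup T|\ge(\tfrac23-2\delta)n$, after one balanced transition fewer than $(\tfrac13+2\delta)n$ vertices remain uninformed. Hence at most one balanced round can occur in any schedule, and it occurs in one of the last $\mathcal{O}(1)$ rounds. The same holds in an optimal schedule. I would therefore not enumerate balanced transitions. Instead, I would handle the tail of the schedule from the opposite end.

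For each candidate set $Y$ with $|Y|\ge(\tfrac23-2\delta)n$, I decide by a backwards DP whether the uninformed remainder $V\setminus Y$ can be finished within the remaining $r$ rounds. This backwards step only enumerates subsets of a set of size at most $(\tfrac13+2\delta)n$, paired with their senders, and so costs at most $3^{(1/3+2\delta)n}\cdot 2^{n}$. Separately, I must decide whether $Y$ is reachable in $t-r$ rounds through one balanced step. I would do this by the meet-in-the-middle device of computing $\mathcal{R}_{t-r-1}$ restricted to sets of size near $n/3$. For each such $S$, I would then iterate only over $Y\supseteq S$ that are in the backwards-feasible family, instead of over all $T$.

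The parameter $x$ enters as the number of tail rounds treated this way. The round-splitting idea from the contributions paragraph would be applied only to certify which tail sets $Y$ need to be considered. It would not be used to change the round count, so the final answer remains exact. Choosing $\delta$ small in terms of $x$ then balances all three costs below $3^n$, which would give the bound $\mathcal{O}^*((3-f(x))^n)$.

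I expect the main obstacle to be the balanced step itself. For a fixed $S$ of size about $n/3$, the number of relevant $Y$ can still be about $2^{2n/3}$. Naively this gives $2^{H(1/3)n}\cdot 2^{2n/3}\approx 3^n$ again. What I would try first is to restrict $Y$ to sets that are \emph{minimal} backwards-feasible. I would also use fast zeta transforms over the lattice of subsets of $V\setminus S$ to test the existence of a matchable $T$ in aggregate rather than pair by pair. Proving that this aggregate test is a Hall-type condition expressible as a subset-sum convolution is the step I am least sure of.
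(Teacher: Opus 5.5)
The step that would actually beat $3^n$ is missing, as you acknowledge yourself. Discarding unbalanced pairs buys nothing on its own. The count $2^{H(\alpha,\beta,1-\alpha-\beta)n}$ peaks at $\alpha=\beta=\tfrac13$, which lies inside your balanced window, so the balanced pairs alone already number $\Theta^*(3^n)$. Everything you build around them still has to examine essentially every pair $(S,Y)$ with $|S|\approx|Y\setminus S|\approx n/3$: the backwards DP on $V\setminus Y$, the family $\mathcal{R}_{t-r-1}$ near size $n/3$, and the scan over backwards-feasible $Y\supseteq S$. That is exactly your own computation $2^{H(1/3)n}\cdot 2^{2n/3}=3^n$.

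The two remedies you suggest are not yet arguments. You give no bound on the number of minimal backwards-feasible $Y$. The predicate ``$Y\setminus S$ has a matching into $S$'' is a Hall condition that couples $S$ and $Y\setminus S$ through all neighbourhoods. So the aggregate test over $S\subseteq Y$ is not of the product form $\sum_{S\subseteq Y} f(S)\,g(Y\setminus S)$ that zeta/M\"obius transforms and fast subset convolution handle, and you give no reduction to such a form. As it stands, the proposal relocates the $3^n$ bottleneck rather than removing it.

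There is also a secondary error: the unique balanced round need not lie among the last $\mathcal{O}(1)$ rounds. Take a clique on about $2n/3$ vertices containing $s$, with a path on about $n/3$ vertices hanging from $s$. The optimal schedule starts the path in round $1$ and then doubles inside the clique. Its balanced round comes near round $\log n$ and is followed by $\Theta(n)$ path rounds. Computing the exact finishing time of every large $Y$ backwards would make this harmless, but then $x$ plays no role in your scheme.

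Note also that the paper does not establish an exact algorithm here. Its argument (Lemmas~\ref{lem:degree-bound} and~\ref{lem:broadcast-to-path}) yields only an additive $+x$ approximation, and it gets below $3^n$ precisely by giving up the exactness you insist on. It keeps only transitions $S\to S'$ with $|S'\setminus S|\le n/x$; since the matching condition (C3) forces $|S'\setminus S|\le|S|$, condition (C2) says exactly this. That leaves at most about $2^n\sum_{i\le n/x}\binom{n}{i}$ arcs, which is below $3^n$ for a large enough constant $x$, and BFS then finds a shortest path.

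It then shows that any $t$-round schedule yields a path of length at most $t+x$ in this state graph. A round whose new batch exceeds $n/x$ is split into $\lceil |S_{i+1}\setminus S_i|\,x/n\rceil$ sub-rounds. Each chunk is still matched into $S_i$, hence into the larger current informed set. Because the batches are disjoint, their sizes sum to at most $n$, so at most $x$ rounds are added overall.

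Round-splitting therefore works only because it is allowed to change the round count. An exact sub-$3^n$ algorithm would need a genuinely new way to process the balanced transitions, which neither your proposal nor the paper supplies.
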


\noindent\textbf{An improved $poly(n)$ time algorithm with parameter vertex integrity.} We give a $poly(n)$ tractable approximation algorithm with ad additive factor $+2k$, when the modulator $S$ is connected and, $k$ is the vertex integrity of the input graph. This improves upon the previously best-known FPT runtime $2^{({vi}^2log{(vi)})}.n^{O(1)}$ for this parameterization due to  ~\cite{bonnet2026fasterparameterizedbroadcasting}.

\begin{theorem}
There is a $poly(n)$ time approximation algorithm which solves \textsc{Telephone Broadcasting} problem on graphs of bounded vertex integrity with an additive $+2k$ factor, where $k$ is the vertex integrity. 
\end{theorem}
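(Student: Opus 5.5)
We give a direct reduction to trees, where broadcasting from a fixed source is solvable in polynomial time by the classical algorithm of Slater \emph{et al.}~\cite{slater1981information}. Let $S$ be a modulator witnessing vertex integrity $k$: $|S|+\max_C |C|\le k$, where $C$ ranges over the components of $G-S$. As in the preceding paragraph, we assume $S$ is connected; such an $S$ can be found in $poly(n)$ time for fixed $k$, or we take it as part of the input. The algorithm has three phases.

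\textbf{Phase 1: inform $S$.} Starting from $s$, we first reach $S$. If $s\notin S$, $s$ lies in some component $C_0$ of size at most $k$, and we spend at most $|C_0|\le k$ rounds (a shortest path inside $C_0\cup S$) to inform one vertex of $S$. Since $S$ is connected, a BFS spanning tree $T_S$ of $G[S]$ lets us inform all of $S$ in at most $|S|-1$ further rounds by sequential forwarding along $T_S$.

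\textbf{Phase 2: build a spanning tree for the rest.} We contract $S$ conceptually into a single informed root $r$. For every component $C$ of $G-S$ not yet informed, we pick one attachment edge from $S$ into $C$ and a BFS tree of $C$ from the attachment vertex. This gives a tree $T$ rooted at $r$, where the children of $r$ are the components, each with depth at most $|C|\le k-|S|$. Every vertex of $S$ can call in parallel, which is at least as good as $r$ calling alone. We compute the optimal schedule on $T$ from $r$ using the tree algorithm, and simulate each call of $r$ by the corresponding vertex of $S$.

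\textbf{Phase 3: compare with OPT.} The total time is at most (time to inform $S$) $+\,b(T,r)$. The first term is at most $k$, and $\mathrm{OPT}\ge 0$. For the second term, we must show $b(T,r)\le \mathrm{OPT}+k$. Take any optimal schedule of $G$ from $s$ and project it. Each component $C$ is informed in OPT first through some vertex at some time $\tau_C$, and is finished by $\mathrm{OPT}$. In $T$ we can instead start $C$ at time $\tau_C$, using that the number of components initiated by time $\tau$ via calls from $S$ is bounded as in OPT. We then finish $C$ within $|C|\le k$ rounds by naive sequential informing along its BFS tree.

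The main obstacle is exactly this charging. Calls into components in OPT may come from other components or from $s$'s component, and not from $S$. We handle this by observing that a component can only be entered from $S$ or from within itself, since components are pairwise non-adjacent. Hence every first entry into a component other than $C_0$ is a call from a vertex of $S$. The sequence of these calls in OPT is therefore a valid schedule for $r$'s children, delayed by at most the at most $k$ rounds needed before all of $S$ is informed in our schedule. Summing the delays gives $\mathrm{OPT}+k$ for the tree part plus $k$ for the intra-component completion, for a total of $\mathrm{OPT}+2k$.
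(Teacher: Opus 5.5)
There is a genuine gap in your Phase 2/3 comparison. The inequality $b(T,r)\le \mathrm{OPT}+k$ is false, because contracting $S$ to a single root $r$ discards exactly the parallelism that OPT uses. The root $r$ makes one call per round, so $b(T,r)$ is at least the number $m$ of non-source components. In OPT, however, up to $|S|$ vertices of $S$ seed components in the same round. So the claim that ``the sequence of these calls in OPT is a valid schedule for $r$'s children'' does not hold: the number of components seeded by time $\tau$ in OPT is bounded by $|S|\tau$, not by $\tau$.

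Concretely, let $G[S]$ be a path $v_1\cdots v_a$ with $s=v_1$, and give each $v_j$ exactly $L$ pendant leaves. The vertex integrity is at most $a+1$ and $\mathrm{OPT}\le (a-1)+L$, but $b(T,r)\ge aL$. The excess $(a-1)(L-1)$ is unbounded for fixed $a\ge 2$.

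Reading ``simulate each call of $r$ by the corresponding vertex of $S$'' as letting the vertices of $S$ call in parallel does not rescue the argument, because your attachment edges are chosen arbitrarily. Take $G[S]$ a clique on $a$ vertices plus $m$ further vertices, each adjacent to all of $S$. An unlucky choice attaches all $m$ singleton components to $v_1$, costing $m$ rounds, while $\mathrm{OPT}\le\lceil\log_2 a\rceil+\lceil m/a\rceil$.

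The missing idea is to choose which vertex of $S$ seeds which component so that the load is balanced, and to compare that load directly with OPT; no tree algorithm is needed. The paper computes the least $t$ for which every non-source component can be assigned to an adjacent vertex of $S$ with no vertex of $S$ receiving more than $t$ components. This is a bipartite $b$-matching, solved by max-flow plus binary search on $t$.

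Your own observation that every non-source component is first entered from a vertex of $S$ is exactly what shows $t\le\mathrm{OPT}$. The first-informer map of an optimal schedule is such an assignment, and each vertex of $S$ makes at most $\mathrm{OPT}$ calls. Informing $S$ and the final flooding of the components each cost at most $k-1$ rounds. The total is therefore at most $(k-1)+t+(k-1)\le\mathrm{OPT}+2k-2$.
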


\noindent\textbf{Parameterized approximation by distance to clique.} We study the problem from a parameterized-approximation perspective and provide an algorithm that achieves an additive $2$-approximation to the optimum number of rounds when parameterized by the distance to a clique. Compared to the algorithm of Egami et al.~\cite{egami2025broadcasting}, our algorithm offers a substantially improved running time, at the cost of a mild relaxation in the approximation guarantee on the number of rounds.

\begin{theorem}
    \textsc{Telephone Broadcasting} admits a +2 additive approximation when parameterized by the distance to clique parameter $k$ with the runtime $\mathcal{O}^*(2^{\mathcal{O}(k\log k)})$.
\end{theorem}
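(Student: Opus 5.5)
The plan is to let $K$ be a deletion set of size $k$ such that $C = V \setminus K$ induces a clique. I assume $K$ is given, or compute it in $\mathcal{O}^*(2^k)$ time by branching on non-edges, since deleting to a clique is vertex cover in the complement. The idea is to guess the \emph{order and rough timing} in which the vertices of $K$ get informed, and to handle the clique $C$ greedily. Inside a clique, the number of informed clique vertices at least doubles every round once any clique vertex is informed. So the clique part behaves like a free doubling process, and the only delicate interaction is how $K$-vertices are reached: either from other $K$-vertices or from clique vertices adjacent to them.

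\textbf{Step 1 (guessing).} Guess the permutation $\pi$ of $K$ giving the order in which its vertices become informed in an optimal schedule; there are $k! = 2^{\mathcal{O}(k\log k)}$ choices. For each $v\in K$, also guess whether its informer lies in $K$ (and which one, at most $k$ choices) or in $C$. This adds a factor $k^{k}=2^{\mathcal{O}(k\log k)}$. I deliberately do not guess exact times, since these can be as large as $\log n$. Instead, given the guesses, I compute the earliest feasible informing times greedily. Each $K$-vertex informed from $K$ is scheduled as soon as its guessed parent is informed and free. Each $K$-vertex informed from $C$ must be matched to a distinct clique vertex adjacent to it.

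\textbf{Step 2 (clique with slack).} The $C$-informed $K$-vertices impose the hard part: each needs a private clique neighbour that spends one round on it. I would relax by spending two extra rounds. First run the schedule restricted to $K$ and the clique as a doubling process, ignoring the $C$-to-$K$ calls. Then argue that the optimal schedule, restricted to the clique, informs at least as many clique vertices by time $t$ as the relaxed process by time $t-1$. Finally use one extra round (doubling once more) to absorb the clique vertices ``wasted'' on calls into $K$.

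The assignment of $C$-informed $K$-vertices to clique vertices, in the guessed order, becomes a bipartite matching with deadlines. For this I process $K$-vertices in $\pi$-order and use a Hall-type check, or simply max-flow with time layers of size $\mathcal{O}(\log n)$, which is polynomial. The last additive round covers the fact that a $C$-vertex may be informed only after it has been used, so we can delay by one round.

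\textbf{Main obstacle.} The crux is the exchange argument showing that the relaxed greedy clique process loses at most $2$ rounds against the optimum. Calls from clique vertices into $K$ reduce the clique's doubling. I would try to show that at most $k$ such calls occur, and that the doubling process absorbs any loss of $k$ vertices in a single extra round once the informed clique has size at least $k$. Before that point, at most $\lceil\log k\rceil$ rounds are involved, and these are exactly covered by the exhaustive guessing.

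Correctness then follows by taking the minimum over all guesses, and the running time is $2^{\mathcal{O}(k\log k)}\cdot n^{\mathcal{O}(1)}$.
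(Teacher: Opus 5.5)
\textbf{There is a genuine gap here.} The decisive step is unproved: you say yourself that the exchange argument bounding the loss by two rounds is something you ``would try to show'', and the ingredients you set up do not support it.

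\textbf{What works.} The absorption estimate is sound once at least $k$ clique vertices are informed. If $m\ge k$ of them are informed and at most $k$ calls are diverted into $K$ afterwards, then $j$ rounds later at least $2^{j}m-2^{j-1}k\ge 2^{j-1}m$ clique vertices are informed. That is only one round of delay.

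\textbf{Where it fails.} The diverted calls that really hurt are those made while the informed part of $C$ is still small. A single diverted call while only one clique vertex is informed delays the whole clique by a full round.
\begin{itemize}
\item That early phase need not last only $\lceil\log k\rceil$ rounds. The clique may be reached late, or the optimum may spend early clique calls on a critical path through $K$.
\item It is also not ``covered by the exhaustive guessing''. You guess only $\pi$ and the informer types, not times and not which clique vertices act as informers.
\end{itemize}

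\textbf{A counterexample to greedy timing.} Your rule of informing each $K$-vertex as early as possible is unsafe. Let $K=\{s,u_1,\dots,u_{k-1}\}$ with $s$ the source, and let every vertex of $K$ have a single neighbour $c\in C$ and no neighbours in $K$.
\begin{itemize}
\item Greedy makes $c$ call $u_1,\dots,u_{k-1}$ in rounds $2,\dots,k$ before any doubling in $C$ starts. It finishes at round $k+\lceil\log n\rceil$.
\item Letting $c$ first inform one more clique vertex, and only then call the $u_i$, finishes by round $\max(k+1,\,2+\lceil\log(n-1)\rceil)$.
\item With $n=2^k$ these are $2k$ versus $k+2$, so the gap grows linearly in $k$.
\end{itemize}

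\textbf{The matching requirement is wrong.} In the same example, your requirement that each $C$-informed $K$-vertex be matched to a \emph{distinct} clique neighbour is infeasible outright, so the correct guess is rejected. Distinctness can only be required within a round, because one clique vertex may inform several $K$-vertices in different rounds. This is what the paper's sets $S_i$ with a common informer $s_i$ capture.

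\textbf{The missing idea: the paper's split on $n$ versus $k2^k$.} The paper decouples the two processes instead of bounding their interference.
\begin{itemize}
\item If $n\ge k2^k$, then $T_{\text{opt}}\ge\log n\ge k+\log k$, so the processes never need to be interleaved. After two rounds, two clique vertices are informed and the schedule forks into two vertex-disjoint branches:
  \begin{itemize}
  \item One clique vertex informs the at most $k$ helper clique vertices required by the guessed round-partition of $M$, in about $\log k$ rounds. These helpers then inform $M$ part by part through per-round matchings, in at most $k$ rounds.
  \item The other clique vertex floods the rest of $C$ alone in $\lceil\log n\rceil$ rounds.
  \end{itemize}
  The total is $2+\max(\log n,\,k+\log k)\le T_{\text{opt}}+2$, with no exchange argument needed.
\item If $n\le k2^k$, every relevant round number is $O(k)$, so the exact rounds are guessed. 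Colour coding with at most $2k$ colours, derandomised by perfect hashing, then locates the $O(k)$ special clique vertices, and the same extra-branch trick gives the $+2$.
\end{itemize}

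\textbf{Guessing times is affordable.} Your reason for not guessing times is not a real obstruction: $(k+\lceil\log n\rceil)^k\le 2^{O(k\log k)}\cdot n^{O(1)}$. Without either the split or exact time guessing, your single interleaved schedule has no way to decide \emph{when} a clique vertex should stop doubling and call into $K$. That decision is exactly where the additive loss arises.
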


\noindent\textbf{An XP approximation algorithm parameterized by distance to path.} We also provide a $2$-approximation XP algorithm for the problem parameterized by the distance to a path, extending efficient approximation algorithms to this structural parameter.
\begin{theorem}
    \textsc{Telephone Broadcasting} admits a $2$-factor approximation when parameterized by the distance to path parameter $k$ with the runtime $n^{\mathcal{O}(k)}$.
\end{theorem}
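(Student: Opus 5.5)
The plan is to guess a small amount of structure about an optimal schedule near the modulator, broadcast along the path greedily, and then bound the cost by splitting the optimum into a part spent reaching modulator vertices and a part spent covering path segments.

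Let $X$ with $|X|\le k$ be a distance-to-path modulator, so $G-X$ is a path $P=p_1p_2\cdots p_m$. If $X$ is not given, we find one in $n^{\mathcal{O}(k)}$ time by trying all subsets of size at most $k$ and checking whether the remainder induces a path.

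Fix an optimal schedule of length $t^*=b(G,s)$. Its broadcast tree restricted to $P$ splits $P$ into maximal subpaths that are informed "from the inside". Each such subpath is entered either at a single \emph{entry} vertex coming from $X$, or from $s$. Once entered, information on a subpath can only travel along it, two directions at a time. This gives the lower bound
\[
t^* \ \ge\ \tau(v)+\max(\text{left length},\text{right length}),
\]
where $\tau(v)$ is the time the entry vertex $v$ is informed.

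Each vertex of $X$ can create at most $t^*$ entry points, which is too many to guess. So I would only guess, for each $x\in X$, the first path vertex it informs, together with the order in which the vertices of $X$ are informed and their parents. That is $n^{\mathcal{O}(k)}$ guesses in total.

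For a fixed guess, the algorithm runs in two phases.

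\textbf{Phase 1: inform $X$.} We inform $X$ following the guessed skeleton. Between two consecutive modulator vertices the information travels along a path segment of $P$, and this takes at most the time the optimum spent on that segment.

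\textbf{Phase 2: cover $P$.} Every informed vertex then pushes the message outward along $P$ in both directions. In addition, each $x\in X$ repeatedly informs its path neighbours that lie in still-uncovered segments. We choose these neighbours greedily, always taking the midpoint of the longest uncovered gap. This greedy choice is the classic optimal strategy for covering a line with sources that start at staggered times, and it can be analysed against the optimum interval by interval.

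The approximation bound comes from the two phases. Phase 1 finishes within $t^*$. In Phase 2, the greedy gap-splitting from the informed set finishes within another $t^*$: any path vertex $p$ is at distance at most $t^*$ along $P$ from some entry point of the optimal schedule whose owner is in $X\cup\{s\}$, and the greedy process reproduces a covering at least as good. Altogether this gives at most $2t^*$.

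The main obstacle is making the Phase 2 argument rigorous. The greedy process does not know the optimum's entry points, and one vertex $x$ may serve many segments. I would first try an exchange argument: charge each call made by $x$ in the optimal schedule at time $i$ to the $i$-th greedy call made by $x$ after Phase 1. Then show by induction that after $i$ greedy rounds the longest uncovered gap adjacent to $N(x)$ is no longer than in the optimum. If that induction fails, the fallback is to lose the extra factor elsewhere. In that case we run only Phase 2, from $s$, and replace the exact skeleton guess with distance arguments. This should still fit within $2t^*$, since $\operatorname{dist}(s,x)\le t^*$ for every vertex $x$.
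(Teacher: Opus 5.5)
There is a genuine gap, and it sits where you suspect: Phase 2. Covering $P$ by intervals centred at neighbours of the modulator vertices, where each $x\in X$ has its own neighbourhood and its own call budget, is the whole combinatorial difficulty. Your midpoint-of-the-longest-gap rule is asserted, not proved, and it is not a known optimal strategy here:
\begin{itemize}
\item the midpoint of a gap need not be a neighbour of $x$;
\item several modulator vertices acting in the same round may all target the same gap, while a segment reachable by only one of them is neglected;
\item your proposed induction (``the longest uncovered gap adjacent to $N(x)$ is no longer than in the optimum'') compares nothing meaningful, because the optimum may deliberately leave a long gap for a different modulator vertex to close.
\end{itemize}
The sequential accounting makes this worse. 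Phase 2 starts only after Phase 1 ends at some $T_1\le t^*$. The $j$-th Phase-2 call of $x$ then gives an interval of radius about $2t^*-T_1-j$. This dominates the optimum's $j$-th call of $x$ only index by index, with no slack at all for $x=s$ when $T_1=t^*$. So Phase 2 must essentially solve the exact covering problem. Even an exact left-to-right DP would have to remember which radii each $x$ still has, not just how many calls remain. The fallback does not help either: $\mathrm{dist}(s,x)\le t^*$ for each $x$ does not let you inform all of $X$ within $t^*$ rounds, since branching costs rounds, and it says nothing about the covering.

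The missing idea, which the paper uses, is to spend the factor $2$ differently.
\begin{itemize}
\item Do not inform $X$ first and cover $P$ afterwards. Instead, reproduce the optimum's modulator-informing calls at their original times within the first $t^*$ rounds, and place all other calls of modulator vertices within those same $t^*$ rounds.
\item In a schedule of length $2t^*$, a path vertex informed at time $i\le t^*$ reaches at least $t^*$ vertices on one side and $t^*-1$ on the other. Its interval therefore contains any interval the optimal schedule produces from the same centre (Observations~\ref{obs:path-replacement} and~\ref{2factobser}).
\item Hence all free intervals can be taken to have the same length $2t^*-1$. A DP over $(n_1,\dots,n_k,p_i)$ then decides exactly whether the path can be covered; here $n_j$ is the number of intervals still available to $m_j$ and $p_i$ is the end of the covered prefix.
\item This DP replaces your greedy. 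For $k=1$ the leftmost-first greedy suffices, by the exchange argument of Lemma~\ref{lem:exchange}.
\end{itemize}

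Your Phase 1 guess is also too weak. Knowing ``the first path vertex each $x$ informs'' does not tell you how the message reaches the path-vertex parent of a modulator vertex. The paper guesses, for each $m_i$:
\begin{itemize}
\item its informing time and its parent $v_i$;
\item the modulator vertex $vm_i$ and the centre $u_i$ of the subpath that reaches $v_i$;
\item the time at which $u_i$ is informed.
\end{itemize}
This is still $n^{\mathcal{O}(k)}$ guesses, and it is what lets the modulator skeleton be reproduced on time.
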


\noindent\textbf{Hardness results.} On the negative side, we show that the problem remains para-\textsf{NP}-hard when parameterized by (i) the vertex cover number above maximum matching size, (ii) the minimum dominating set size, and (iii) the diameter of the graph. These results delineate the boundary of tractability and indicate that it is unlikely to yield even \textsf{XP} algorithms when parameterized by these parameters, thereby complementing our positive algorithmic results.
\begin{theorem}
    \textsc{Telephone Broadcasting} is Para-NP-hard when parameterized by the vertex cover number - the cardinality of maximum matching i.e. $VC-MM$.
\end{theorem}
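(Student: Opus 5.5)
The plan is to show that \textsc{Telephone Broadcasting} is already NP-hard on graphs where the vertex cover number equals the size of a maximum matching, so that $\mathrm{VC}-\mathrm{MM}=0$. This gives para-NP-hardness immediately: if an algorithm ran in time $f(k)\cdot n^{\mathcal{O}(1)}$, or even $n^{g(k)}$, then fixing $k=0$ would yield a polynomial-time algorithm for an NP-hard problem. The natural class to use is bipartite graphs, since by K\H{o}nig's theorem every bipartite graph satisfies $\mathrm{VC}=\mathrm{MM}$. The whole task therefore reduces to establishing NP-hardness of the decision problem, given $(G,s,t)$, on bipartite inputs.

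\textbf{Step 1: NP-hardness on bipartite graphs.} I would first try to invoke the literature. Jakoby, Reischuk and Schindelhauer proved that broadcasting remains NP-complete on bipartite planar graphs of maximum degree three. If one wants a self-contained argument, I would instead revisit the classical reduction of Slater \textit{et al.}\ from \textsc{3-Dimensional Matching} (equivalently, exact cover by 3-sets). That reduction builds a source vertex with paths or trees leading to ``set'' vertices, which in turn are adjacent to ``element'' vertices. I would check that all edges join consecutive layers of a BFS-style layering from $s$. If so, coloring vertices by the parity of their layer is a proper 2-coloring. Wherever an edge inside a layer appears, for instance edges among set gadgets, I would subdivide it and adjust the deadline $t$ and the gadget path lengths accordingly.

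\textbf{Step 2: preserving the deadline argument.} This is the main obstacle. Subdividing edges or shifting path lengths changes the arrival times of the message, so I must re-verify both directions of the reduction. In the forward direction, an exact cover lets every element vertex be informed by time $t$. In the backward direction, any schedule meeting deadline $t$ must inform the set vertices in a pattern that encodes an exact cover. I would handle this by padding every source-to-set path to the same length. This makes the set vertices reachable at a uniform time, so the counting argument of the original proof goes through unchanged: each informed set vertex can inform at most three elements in the remaining rounds.

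\textbf{Fallback.} If Step 2 becomes messy, I would use a pendant construction instead. Attach one private pendant vertex to every vertex of a hard instance $G$. The pendant edges form a perfect matching, and $V(G)$ is a vertex cover of the same size, so $\mathrm{VC}-\mathrm{MM}=0$ even though the result need not be bipartite. It then remains to show that broadcasting on this graph is NP-hard. I would argue this directly, by reducing from a hard restricted family such as the bounded-degree instances above, and tracking the one extra round each vertex spends on its pendant.

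Either route gives hardness at parameter value $0$, which proves the theorem.
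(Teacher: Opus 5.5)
Your primary route is correct, but it differs from the paper's. You get parameter value $0$ on a whole graph class. Every bipartite graph has $\mathrm{VC}=\mathrm{MM}$ by K\H{o}nig's theorem, so NP-hardness of single-source broadcasting on bipartite inputs (the result you attribute to Jakoby, Reischuk and Schindelhauer) gives para-NP-hardness at once.

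The paper instead gives a self-contained reduction from \emph{arbitrary} instances $(G,s,t)$:
\begin{itemize}
\item It attaches a private pendant $v'$ to every $v\in V(G)\setminus\{s\}$ and sets $t'=t+1$.
\item The pendant edges form a matching of size $n-1$ in a graph on $2n-1$ vertices, and $V(G)\setminus\{s\}$ is a vertex cover of the same size, so $\mathrm{VC}-\mathrm{MM}=0$.
\item Forward direction: run the schedule for $G$ for $t$ rounds, then inform all pendants simultaneously in round $t+1$, which is possible because they form a matching.
\item Backward direction: each $v'$ forces $v$ to be informed by round $t$. Pendants are leaves and never relay, so the schedule restricted to $V(G)$ is a valid schedule on $G$.
\end{itemize}
Your route is more conceptual and yields a stronger structural statement (hardness already on bipartite graphs). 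However, it rests entirely on an external theorem, and you must make sure that theorem is stated for a single fixed source, as in this paper's problem. The paper's reduction needs only the general NP-hardness of Slater et al.

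Two remarks on your backups. First, your self-contained Steps~1--2 are not a proof. You have not checked that the classical construction is layered from $s$. Subdividing edges or equalising source-to-set path lengths changes arrival times, so the backward (counting) direction would have to be re-proved, and the proposal does not do this.

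Second, your fallback is exactly the paper's construction. The extra pendant on $s$ is harmless and makes the matching perfect. You overestimate its difficulty, though: no restricted hard family (such as bounded-degree instances) is needed, and no per-vertex accounting of ``the extra round spent on the pendant'' is needed either. With deadline $t+1$, all pendants are called in one common final round, and the two-line argument above works from any instance.
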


\begin{theorem}
    \textsc{Telephone Broadcasting} is Para-NP-hard when parameterized by the minimum dominating set.
\end{theorem}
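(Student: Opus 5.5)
Since Telephone Broadcasting is NP-hard in general, I will show that it stays NP-hard on graphs whose minimum dominating set has size at most a fixed constant (a single dominating vertex, or at most two). That is exactly para-NP-hardness. The plan is a polynomial reduction from an NP-hard Broadcasting instance $(G,s,t)$, which I am free to choose restricted, e.g. the cactus or bounded-treedepth instances of Egami et al.~\cite{egami2025broadcasting}. I build $(G',s',t')$ with $\gamma(G')\le 2$.

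\textbf{Construction.} The naive move is to add a universal vertex $u$, but then $u$ can inform everyone and the broadcast collapses to roughly $\log_2 n$ rounds. So the universal vertex must be made useless as a relay. I will let the source itself be the dominator. Take $G'$ to be $G$ together with a new vertex $s'$ adjacent to every vertex of $G$, make $s'$ the source, and attach to $s'$ a large set $P$ of pendant leaves, each adjacent only to $s'$. Then $\{s'\}$ dominates $G'$, so $\gamma(G')=1$. Since leaves can only be informed by $s'$, the source must spend $|P|$ of its calls on $P$. The plan is to set $|P| = t'-1$ with $t'=t+1$, so that $s'$ has exactly one spare call. A simple exchange argument should show that its best use is to inform $s$ in round 1 and then serve leaves in rounds $2,\dots,t'$. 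After round 1 the rest of $G$ must be informed from $s$ using only edges of $G$ within $t$ rounds, because $s'$ is busy for the whole remaining time.

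\textbf{Correctness.} The forward direction is immediate: $s'$ calls $s$ first, the schedule for $G$ runs in rounds $2,\dots,t+1$, and the leaves are called in rounds $2,\dots,t+1$. The reverse direction is the main obstacle. An optimal schedule might instead have $s'$ call some other vertex $v\neq s$ of $G$ first, or spend its spare slot later. The goal is to argue that $s'$ has exactly one non-leaf call, which must happen at some round $r$; any later position is weakly worse, so we may assume $r=1$. The resulting schedule then broadcasts $G$ from some vertex $v$ in $t$ rounds. That gives $b(G,v)\le t$ for some $v$, not necessarily $v=s$.

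To close this gap, I will first pin the start vertex inside $G$: attach to $s$ a gadget that can only be finished in time if it is informed immediately. My first attempt is to hang a long path off $s$ whose length exactly matches the deadline. The difficulty is that $s'$ is adjacent to every vertex of the path. A cleaner fix is to make $s'$ adjacent only to $V(G)$ and not to the gadget vertices, and to add a second dominator, giving $\gamma\le 2$. The second dominator should itself be forced to be useless, for example by giving it a pendant load of its own. If this gets messy, a fallback is to reduce from the "worst-case source" version $b(G)\le t$, which is also NP-hard: the free choice of $v$ then matches exactly, though forcing every $v$ needs $G'$ to have the same property from every source, so the $s$-fixed gadget route is the one I would pursue first. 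Either way $\gamma(G')$ is bounded by a constant, which yields para-NP-hardness.
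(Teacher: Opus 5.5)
There is a genuine gap, and it comes from your choice of source. Because the universal vertex $s'$ is the source, your reverse direction only yields $b(G,v)\le t$ for whichever vertex $v\in V(G)$ receives the single spare call of $s'$. As it stands, your construction therefore reduces from the question ``is $\min_{v} b(G,v)\le t$?'', not from the fixed-source instance $(G,s,t)$.

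Neither repair you propose closes this. The fallback is mismatched. The worst-case version asks whether $b(G)=\max_{v} b(G,v)\le t$, which is a universal condition, while your reduction characterizes an existential one. So the YES-instances of the two problems do not correspond.

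The gadget route is only a sketch. As you note, a path hanging off $s$ whose vertices are adjacent to $s'$ can be entered directly by $s'$. Once you add a second dominator with its own pendant load, you would still have to do three things, none of which you carry out:
\begin{itemize}
\item retune the leaf counts and the deadline;
\item account for the round in which $s$ calls into the gadget, which delays the whole $G$-broadcast;
\item prove that the spare call of $s'$ is forced onto $s$.
\end{itemize}
This can be done, e.g.\ with a vertex $q$ adjacent only to $s$ and $d$, $t-1$ leaves on $d$, $t+1$ leaves on $s'$, and deadline $t+2$. But that argument is not in your proposal.

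The missing idea, which is exactly what the paper uses, is to keep the original $s$ as the source and put the pendant load on the new universal vertex. Add $s_1$ adjacent to every vertex, and $t$ leaves $s_{11},\dots,s_{1t}$ adjacent only to $s_1$, with deadline $t+1$. The leaves force $s_1$ to be informed by the end of round $1$. Since only $s$ is informed at time $0$, the first call must be $s\to s_1$. After that, $s_1$ is busy with its leaves in every round $2,\dots,t+1$. Hence all of $V(G)$ must be informed from $s$ along edges of $G$ within $t$ rounds, i.e.\ $b(G,s)\le t$. This pins the start vertex for free, $\{s_1\}$ is a dominating set, and no gadget or second dominator is needed.
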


\begin{theorem}
    \textsc{Telephone Broadcasting} is Para-NP-hard when parameterized by the diameter of the graph.
\end{theorem}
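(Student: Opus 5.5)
The plan is a polynomial reduction from an NP-hard broadcasting instance to one with constant diameter. The target is a graph whose diameter is bounded by an absolute constant, with the deadline shifted by a computable amount. Membership in NP is immediate, so para-NP-hardness follows once hardness is established for a single fixed diameter value.

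I would start from \textsc{Telephone Broadcasting} restricted to a class where it is NP-hard, e.g.\ general graphs (Slater et al.\ \cite{slater1981information}) with source $s$ and deadline $t$. Write $n=|V(G)|$.

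\textbf{Construction.} Add a new universal-type hub to shrink the diameter without creating shortcuts that help broadcasting:
\begin{enumerate}
\item Take $G$.
\item Add a new vertex $h$ adjacent to every vertex of $G$. This makes the diameter at most $2$.
\item Make $h$ useless by forcing it to be busy. Attach to $h$ a large number $L$ of pendant vertices; the natural choice is $L \ge n+t$, possibly grown into stars or complete binary-tree-like gadgets.
\item Redesign the source: the new source $s'$ is adjacent to $s$ and to $h$.
\end{enumerate}
The gadgets make $h$ spend essentially all rounds from its informing time until the deadline informing its own private leaves. The deadline becomes $t' = t + c$ for a suitable constant or computable shift.

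A cleaner alternative is to give each original vertex $v$ its own hub neighbour only through $h$, and tune $L$ so that $h$ must be informed at round $1$ and must then spend exactly $t'-1$ rounds on its $t'-1$ pendant leaves. Here $L=t'-1$ pendants attached to $h$ are enough. Any schedule that lets $h$ send even once into $G$ then leaves a pendant uninformed, since each pendant has $h$ as its only neighbour.

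\textbf{Correctness.} I would carry out the two directions as follows.
\begin{itemize}
\item \emph{Forward.} A schedule for $(G,s,t)$ lifts to one for the new graph. At round $1$, $s'$ informs $h$. Then $h$ informs its pendants in rounds $2,\dots,t'$, while $s'$ informs $s$ at round $2$. From round $3$ the original schedule runs, shifted by $2$. So $t' = t+2$ and $L = t+1$.
\item \emph{Backward.} Each pendant of $h$ must receive the message from $h$, and $h$ has at most $t'-1$ available rounds (it can be informed at round $1$ at the earliest). Hence $h$ is informed at round exactly $1$, by $s'$, and never transmits into $G$. Consequently $s$ is informed at round $2$ or later. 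Removing $h$, $s'$ and the pendants leaves a valid broadcast of $G$ from $s$ within $t'-2=t$ rounds.
\end{itemize}
One subtlety is that vertices of $G$ might inform $h$. This is harmless, because $h$ being informed at round $1$ is forced.

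\textbf{Diameter.} Every vertex of $G$ and $s'$ is adjacent to $h$, and the pendants are at distance $1$ from $h$. So the diameter is at most $4$, a constant.

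The main obstacle is the backward direction: ruling out schedules in which $h$, being adjacent to everything, acts as a shortcut. The tight pendant count handles this, and I would double-check the off-by-one in the round at which $h$ can first be informed.
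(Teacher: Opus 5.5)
Your final construction (a hub $h$ adjacent to all of $V(G)$ and to a new source $s'$ adjacent to $s$, exactly $t+1$ pendants on $h$, deadline $t+2$) is correct and is essentially the paper's reduction. The paper instead keeps $s$ as the source, makes $s_1$ universal with $t$ pendants, and uses deadline $t+1$; in both versions pendant counting forces the hub to be informed in round $1$ and never to call into $G$. Two small points: you must drop the exploratory choice $L \ge n+t$, since with $t'=t+2$ it would make every instance a NO-instance. Also, because $h$ is universal, the diameter is in fact at most $2$, not merely $4$.
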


\noindent\textbf{Polynomial-time solvability on polar graphs.} We identify a graph class which we call \emph{polar graphs} in which the vertex set can be partitioned into an induced cluster subgraph and a single isolated (apex-like) vertex, with arbitrary edges permitted between the two parts. We show that \textsc{Telephone Broadcasting} is solvable in polynomial time on this class. This case is of natural practical relevance: it models settings with a single informer vertex disseminating information to entities that are themselves organized into tightly-knit communities (clusters).

\begin{theorem}
    \textsc{Telephone Broadcasting} admits a polynomial time algorithm running in time $\mathcal{O}({n^3})$ on polar graphs.
\end{theorem}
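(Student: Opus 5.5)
Under the paper's definition, a polar graph has vertex set $V = C \cup \{a\}$, where $G[C]$ is a disjoint union of cliques $Q_1,\dots,Q_m$ and $a$ is a single apex-like vertex adjacent to an arbitrary subset of $C$. Since $G$ is connected, every clique $Q_i$ contains at least one neighbor of $a$. The plan is to compute $b(G,s)$ exactly by reducing to a scheduling problem over the cliques. I treat two cases: $s=a$, and $s\in Q_j$ for some $j$.

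\textbf{Case $s=a$.} Once some vertex of $Q_i$ is informed at time $r$, the rest of $Q_i$ is handled best by internal doubling, possibly assisted by $a$. I will first prove an exchange lemma: there is an optimal schedule in which $a$ makes each call to a previously untouched clique, i.e.\ $a$ never "helps" a clique it has already entered. The argument is that a call from $a$ into an already-informed clique can be swapped for a call opening a new clique. This swap does not delay anything, because inside a clique of size $q$ first informed at time $r$ the completion time is exactly $r+\lceil\log_2 q\rceil$, and extra help saves at most one round. That last inequality is the step I need to verify carefully.

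Granting the lemma, $a$ calls cliques in some order, and the clique called in round $r$ finishes at $r+\lceil \log_2 |Q_i|\rceil$. Sorting cliques by decreasing $\lceil\log_2|Q_i|\rceil$ is optimal by the standard exchange argument. This gives $b(G,a)=\max_r \bigl(r+\lceil\log_2 |Q_{\pi(r)}|\rceil\bigr)$.

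\textbf{Case $s\in Q_j$.} The message must first reach $a$, through some neighbor $v\in Q_j$ of $a$. I guess the round $\tau\le n$ at which $a$ is first informed. Before round $\tau$, $Q_j$ doubles internally, and after that $Q_j$ continues on its own. From round $\tau+1$ onward, $a$ behaves as in the first case on the remaining cliques. The exchange lemma also lets me assume that $a$ never calls into $Q_j$.

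The subtle point is whether it can pay to delay informing $a$, since $Q_j$ spends one call of its doubling budget on $a$. Sending $a$ at round 1 when $s$ is adjacent to $a$, or at round 2 via a neighbor otherwise, is not obviously optimal. For this reason I will just try every $\tau$ and every first-hop neighbor. For each guess, $Q_j$'s completion time is computed in closed form: with one vertex diverted at round $\tau$, the number informed in $Q_j$ after round $r$ is $2^r-2^{r-\tau}$. The other cliques are handled by the sorted formula shifted by $\tau$, and I take the minimum over all guesses. Enumerating $O(n)$ values of $\tau$ and $O(n)$ neighbors, with an $O(n)$ evaluation for each, gives $O(n^3)$.

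\textbf{Main obstacle.} The hardest step is the exchange lemma in full generality: showing that vertices inside cliques never usefully route back through $a$, and that $a$ helping an already-entered clique is never beneficial. Routing back is easy, because $a$ is a single vertex and can only make one call per round. For the second part, I will prove the bound $\lceil\log_2(q)\rceil - \lceil\log_2(q-1)\rceil\le 1$ and compare it with the at-least-one-round loss suffered by every later clique in $a$'s order. Both cases also need to be checked for the tie situation in which the clique opened is a singleton.
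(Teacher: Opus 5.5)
\textbf{There is a genuine gap, and it is in the exchange lemma you flagged: the lemma is false.} The bound $\lceil\log_2 q\rceil-\lceil\log_2(q-1)\rceil\le 1$ is true. But your swap argument needs every call of $a$ into an already-entered clique to be exchangeable for an opening call that gains at least as much. That fails whenever no unopened clique is left, or when the clique that would be opened earlier has slack.

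\begin{itemize}
\item Take a single clique $Q_1=K_3$ with $a$ adjacent to all of it. Then $G=K_4$ and $b(G,a)=2$: round 1 is $a\to x_1$; round 2 is $a\to x_2$ and $x_1\to x_3$. Your formula gives $1+\lceil\log_2 3\rceil=3$.
\item The failure is not an artifact of $m=1$. Take two disjoint copies of $K_3$, with $a$ adjacent to everything. The schedule $a\to x_1$; then $a\to y_1$, $x_1\to x_2$; then $a\to y_2$, $x_1\to x_3$, $y_1\to y_3$ finishes in $3=\lceil\log_2 7\rceil$ rounds. Your formula gives $\max_r(r+\lceil\log_2|Q_{\pi(r)}|\rceil)=4$.
\end{itemize}
So the sorted-opening-times formula overestimates $b(G,a)$. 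Your second case rests on the same unjustified assumption, that $a$ never calls into $Q_j$.

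\textbf{The missing idea is that $a$ may legitimately spend several of its rounds on the same clique.} The completion time of a clique then depends on the whole set of rounds in which $a$ calls into it, with the clique's own calls directed first to non-neighbours of $a$. These round sets must be disjoint across cliques.

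This is exactly what the paper's dynamic program handles:
\begin{itemize}
\item Any clique opened by round $t-\lceil\log n\rceil$ finishes on its own (Lemma~\ref{lem:largetime}). So help only matters in the last $\lceil\log n\rceil$ rounds.
\item The DP runs over cliques and subsets $S\subseteq[\lceil\log n\rceil]$ of those rounds; there are only $2^{\lceil\log n\rceil}\le 2n$ such subsets. For each clique it chooses a subset $S'$ of $a$'s rounds that ``covers'' it.
\item The cliques not covered in this last window each need one call from $a$ in the early window.
\end{itemize}
A greedy sorting argument that gives each clique a single call cannot capture this trade-off. Your approach therefore has to be replaced by, or extended to, such an assignment of round sets rather than patched.
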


Together, these results paint a fairly complete picture of the parameterized and approximation landscape of \textsc{Telephone Broadcasting}, improving on prior work in both exact parameterized complexity and approximation guarantees, while also identifying new tractable structural classes and sharpening the known hardness frontier.


\renewcommand{\arraystretch}{1.15}
\begin{table}[H]
\centering

\hspace{2mm}
\begin{tabular}{|c|c|c|}
\hline
\textbf{Parameter}&\textbf{Results}&\textbf{Runtime}\\
 \hline
 General Graph & $+x$ Approx.  & $\mathcal{O}^*(3-f(x))^n$ \\
\hline
 Vertex Integrity  & $+2k$ Approx & $Poly(n)$ \\
 (when the modulator is connected)& & \\
\hline
Dominating Set Size &  Para-NP-hard &  \\
\hline
  VC-MM & Para-NP-hard & \\
\hline
 Diameter & Para-NP-hrad &\\
\hline
Distance to Clique  & $+2$ approximation & $\mathrm{FPT(k)}$ \\
   
\hline
 Distance to Path & $2$ factor Approx. & $\mathrm{XP}$ \\
 
\hline
Polar Graph & Exact Algorithm & $\mathrm{Poly(n)}$\\
\hline
\end{tabular}
\renewcommand{\arraystretch}{1}
\caption{Overview of our results}
\end{table}


\section{Preliminaries and Notations}

\noindent\textbf{\textsc{Graphs.}} We follow standard graph-theoretic notation and terminology, referring to the textbook of Diestel\cite{diestel2024extremal} for any undefined notions. Unless stated otherwise, all graphs considered are finite and undirected. We write $V(G)$ and $E(G)$ for the vertex and edge sets of a graph $G$.  For a set of vertices $X$, the \emph{induced subgraph} $G[X]$ keeps exactly the vertices of $X$ and the edges of $G$ with both endpoints inside $X$; we abbreviate $G - X = G[V(G) \setminus X]$. The \emph{neighborhood} of a vertex $v$ is $N(v) = \{u : \{u,v\} \in E(G)\}$, and its degree is $|N(v)|$. 
 
A \emph{clique} is a set of pairwise adjacent vertices, and an \emph{independent set} is a set of pairwise non-adjacent ones. A \emph{matching} is a set of edges no two of which share an endpoint, and we say a matching \emph{saturates} a set $A$ if every vertex of $A$ is the endpoint of one of its edges. Matchings are the natural language for a single round of broadcasting informing a fresh batch of vertices in one step is precisely matching each new vertex to a distinct informer and this is the connection Observation~\ref{obs:one-round-matching} makes explicit. Finally, the \emph{diameter} of $G$ is the largest shortest-path distance between any two of its vertices.
A \emph{vertex cover} is a set of vertices that contains at least one endpoint of every edge. The size of a minimum vertex cover is denoted by $\text{vc}(G)$. Since each edge in a matching must be covered by a different vertex, the maximum matching number $\text{mm}(G)$ satisfies $\text{mm}\le \text{vc}$. We study the parameter VC-MM(vertex cover over maximum matching) which measures the gap between these two quantities. A \emph{dominating set} is a set $D$ such that every vertex outside $D$ has a neighbor in $D$. Its minimum size is denoted by $\text{dom}(G)$. Since every vertex cover is also a dominating set in graphs without isolated vertices, $\text{dom}(G)\le \text{vc}(G)$.

We also consider graph parameters defined by a \emph{modulator}, that is, a set of vertices whose removal leaves the graph in a simpler graph class. If $G-M$ is a path, then $|M|$ is called the \emph{distance to a path}. Similarly, if $G-M$ is a clique, then $|M|$ is called the \emph{distance to a clique}. In both cases, the parameter $k=|M|$ measures how far the graph is from the corresponding graph class.

\noindent\textbf{Twins and polar graphs.} Two vertices in an independent set are called \emph{twins} if they have the same neighborhood.  Therefore, if $S$ is a vertex cover, the independent set can be partitioned into at most $2^{|S|}$ neighborhood types. A graph is \emph{polar} if its vertex set can be partitioned into two parts, one inducing a disjoint union of cliques and the other an independent set. Such a partition is called a \emph{polar partition}. Polar graphs generalize several well-known graph classes, including split, bipartite, and co-bipartite graphs\cite{tyshkevich1985decomposition}. Since recognizing polar graphs is NP-complete, we assume that a polar partition is given as part of the input in Section~\ref{sec:polar}.
Throughout, $\log$ means the base-$2$ logarithm, and $[p]$ stands for $\{1, 2, \dots, p\}$ for $p\in \mathbb{N}$.

\medskip

\noindent\textbf{The broadcasting model.} Let $G$ be a connected graph and let $s \in V(G)$ be a distinguished source vertex that initially holds a single piece of information (equivalently, a message). Information is spread in a sequence of rounds. In each round, an informed vertex may inform at most one uninformed neighbor, and an uninformed vertex can receive the information from at most one informed neighbor. It also can be represented by a sequence of informed sets $S_0 \subseteq S_1 \subseteq \cdots \subseteq S_\tau$, where $S_i$ is the set of vertices informed by the end of round $i$. Initially, $S_0=\{s\}$ and finally $S_\tau=V(G)$. For every $i$, the newly informed vertices $S_{i+1}\setminus S_i$ must be matchable to distinct vertices in $S_i$ by a matching in the bipartite graph induced by $(S_i,\, S_{i+1}\setminus S_i)$. Similarly, each vertex in $S_{i+1}\setminus S_i$ is informed by a distinct vertex in $S_i$, reflecting the telephone broadcasting rule.
 
The second picture is a \emph{broadcast tree}: a spanning tree rooted at $s$ in which each vertex, once informed, hands the message to its children one per round in some chosen order. We formally define, a \emph{broadcasting protocol} is a pair $\bigl(T, \{C(v) \mid v \in V(T)\}\bigr)$, where $T$ is a spanning tree of $G$ rooted at $s$, and for each $v \in V(T)$, $C(v)$ is an ordered set of the children of $v$ in $T$. As soon as $v$ receives the message, it starts sending it to its children in $T$ in the order defined by $C(v)$.
 
The \emph{length} of a broadcasting protocol is the number of rounds it uses. The minimum number of rounds required to inform all vertices from a source vertex $s$ is called the \emph{broadcast time} of $G$ from $s$, denoted by $b(G,s)$. A protocol that achieves this minimum is called an \emph{optimal} broadcasting protocol. We write $T_{opt}$ for $b(G,s)$ when the source is clear from the context. If the source vertex is not fixed, we consider the worst-case broadcast time, defined as $b(G)=\max_{u\in V(G)} b(G,u)$. Since each informed vertex can inform at most one new vertex in a round, the number of informed vertices can at most double in every round. Therefore, $b(G,s)\ge \left\lceil\log_2 n\right\rceil.$ This bound is tight, as it is achieved by complete graphs.
 
The decision problem we study packages this up as follows.
 
\noindent\textbf{\textsc{Telephone Broadcast}.} Given a connected graph $G$, a source $s$, and an integer $t$, decide whether the message can reach every vertex within $t$ rounds, that is, whether $b(G,s)\le t$.

We call such an instance a \emph{YES-instance} if the deadline can be met, and a \emph{NO-instance} otherwise. There is also a \emph{source-free} version that asks whether $b(G,s) \le t$. Both are NP-complete on general graphs~\cite{slater1981information}, although they become easy on trees; worse, even approximating the broadcast time within a factor $3 - f(x)$ is NP-hard for every $x > 0$~\cite{elkin2002combinatorial}, which is a large part of why parameterized and structural approaches are so appealing here.
 
\noindent\textbf{Parameterized complexity.} We use parameterized complexity in the usual way (see~\cite{fomin2023parameterized} for its use on broadcasting). An instance carries both a size and a \emph{parameter} $k$, and the question is how the difficulty scales with $k$ rather than with the whole input. A problem is \emph{fixed-parameter tractable} (FPT) if it can be solved in time $f(k) \cdot n^{O(1)}$ for some computable function $f$ depending only on $k$ \cite{Cygan2015Parameterized}. We write $O^*(\cdot)$ when hiding polynomial factors keeps the expression readable. A problem belongs to the class $\mathrm{XP}$ if it can be solved in time $n^{f(k)}$, where $k$ is the parameter. Thus, for every fixed value of $k$, the algorithm runs in polynomial time. On the other hand, a problem is \emph{para-NP-hard} if it remains NP-hard even when the parameter is fixed to a constant. Consequently, such a problem can't admit an $\textsf{XP}$ algorithm unless $\mathrm{P}=\mathrm{NP}$.
 
\noindent\textbf{Parameters.} Throughout this paper, we consider several well-studied structural graph parameters that capture different aspects of a graph's complexity. These parameters play a central role in our algorithmic and hardness results and are related by well-known inequalities. We briefly define the parameters used in this work and summarize the relationships between them.

\section{Approximation Algorithm for the Broadcasting Problem on General Graph}
In this section, we present an exact algorithm for the Broadcasting problem that runs in time $(3-f(x))^n$ for some $x > 0$. This improves upon the previously best known algorithm due to \cite{fomin2023parameterized}, which solves the Broadcasting problem exactly in time $3^n$. Before we describe the algorithm, we discuss the overall idea and make the following observation which will be used crucially in the construction.

\subsection{A Brief Overview}
For a given instance of the Broadcasting problem $(G, s, t)$, we construct a new directed graph $G' = (V', E')$ such that each vertex $v_S \in V'$ corresponds to a subset $S \subseteq V(G)$. Recall that in the Broadcasting problem, we wish to broadcast information from the source $s$ to all vertices of $G$ within time $t$. This is equivalent to finding an optimal sequence $\{S_1, S_2, \ldots, S_t\}$ of subsets of $V(G)$, where $S_1 = \{s\}$, $S_t = V(G)$, and $S_i \subsetneq S_j$ for all $i < j$, such that every vertex $v \in S_i$ has received the information by round $i$.

The key intuition behind the construction of $G'$ is to think of each vertex $v_S \in V'$ as a \emph{state}, representing the set of vertices that have been informed so far. We connect these states in such a way that reaching state $v_{S_i}$ at step $i$ implies that all vertices in $S_i$ have received the information within round $i$. Therefore, finding an optimal broadcasting sequence for $G$ reduces to finding a shortest path in $G'$ from the initial state $v_{\{s\}}$ to the final state $v_{V(G)}$.

As we will show, this approach yields an $+x$ approximation algorithm, running in time strictly less than $(3-f(x))^n$ (for some positive increasing function $f(x)$ w.r.t. the constant $x$). This gives us our main theorem, which we prove in the remainder of this section.

\subsection{The Algorithm}

Before we explain our algorithm in two parts, we make the following observation~\ref{obs:one-round-matching}.

\begin{observation}\label{obs:one-round-matching}
For any two subsets $S, S' \subseteq V(G)$ with $|S| < |S'|$, it is possible to broadcast from $S$ to $S'$ in some round $i$ if and only if there exists a matching in the bipartite graph $(S, S' \setminus S)$ saturating all vertices $v \in S' \setminus S$, where the edge set is induced by $G$.
\end{observation}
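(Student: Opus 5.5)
The plan is to prove both directions of Observation~\ref{obs:one-round-matching} directly from the rules of the broadcasting model. Implicitly we take $S \subseteq S'$, which is the only meaningful case since informed vertices stay informed. Throughout, ``broadcast from $S$ to $S'$ in round $i$'' means: at the start of round $i$ exactly the vertices of $S$ are informed, and at the end of round $i$ exactly the vertices of $S'$ are informed.

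For the forward direction, suppose such a round exists. By the model, every vertex $w \in S' \setminus S$ was informed during round $i$ by some neighbor, and this neighbor was informed before the round, so it lies in $S$. Fix one such informer $\pi(w) \in S$ for each $w$. Each informed vertex may call at most one uninformed neighbor per round, so $\pi$ is injective. Hence the edges $\{w,\pi(w)\}$ are edges of $G$ between $S$ and $S'\setminus S$, and they pairwise share no endpoint. They therefore form a matching in the bipartite graph $(S, S'\setminus S)$ that saturates $S'\setminus S$.

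For the converse, take a matching $M$ saturating $S' \setminus S$. In round $i$, let each $u \in S$ matched by $M$ to some $w$ call $w$, and let every unmatched vertex of $S$ stay idle. This schedule is legal on three counts:
\begin{itemize}
\item each informer makes at most one call, because $M$ is a matching;
\item each vertex of $S'\setminus S$ is uninformed before the round and is called by exactly one informed neighbor, because $M$ is a matching in the graph induced by $G$ and saturates $S'\setminus S$;
\item no vertex outside $S'$ is called.
\end{itemize}
After the round the informed set is therefore exactly $S \cup (S'\setminus S) = S'$.

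Nothing here is a real obstacle. The only care needed is in the forward direction: extract a single informer per new vertex, and argue injectivity from the one-call-per-round rule.
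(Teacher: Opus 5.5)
Your proof is correct and is essentially the paper's own reasoning. The paper gives no separate proof of this observation because its description of the broadcasting model in the preliminaries already requires the newly informed set $S_{i+1}\setminus S_i$ to be matched to distinct informers in $S_i$. Your explicit assumption $S\subseteq S'$ is needed, not just convenient: with only $|S|<|S'|$ the ``if'' direction can fail. It also agrees with condition (C1), which is imposed wherever the observation is used.
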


\noindent \textbf{1. Construction of $G'$:} For a given instance $G$ with $n$ vertices, we consider all possible subsets of $V(G)$. For each subset $S \subseteq V(G)$, we introduce a vertex $v_S$ in $G'$. Thus $G'$ has exactly $2^n$ vertices in total, and we think of each vertex $v_S$ as a \emph{state} representing the set $S$ of informed vertices. For each ordered pair of subsets $(S, S')$, we add a directed arc from $v_S$ to $v_{S'}$ in $G'$ if and only if the following three conditions are satisfied.
\begin{itemize}
    \item[\textbf{(C1)}] $S \subset S'$.
    \item[\textbf{(C2)}] Either $|S| \leq n/x$, or 
    $|S' \setminus S| \leq n/x$. 
    \item[\textbf{(C3)}] There exists a matching saturating $S' \setminus S$ in the bipartite graph with bipartition $(S, S' \setminus S)$ whose edge set is induced by $G$.
\end{itemize}
We briefly justify each condition. In the Broadcasting problem, at every intermediate round $t' \leq t$, there is always at least one new vertex that receives the information in round $t'$ but not in round $t' - 1$. Condition (C1) ensures that the set of informed vertices is strictly increasing across any two consecutive rounds. Condition (C2) is imposed to bound the total number of arcs in $G'$, which in turn controls the running time of the algorithm, as we will analyze later. By Observation \ref{obs:one-round-matching}, condition (C3) ensures that adding an arc from $v_S$ to $v_{S'}$ is valid, in the sense that it is actually possible to broadcast from the state $v_S$ (where all vertices in $S$ are informed) to the state $v_{S'}$ (where all vertices in $S'$ are informed) in exactly one round.

\noindent \textbf{2. Finding the Shortest Path in $G'$:} In the second part of the algorithm, we find a shortest path in $G'$ from the initial state $v_{\{s\}}$ to the final state $v_{V(G)}$ using the Breadth First Search (BFS) algorithm. Since BFS runs in time proportional to the number of vertices and edges of the graph it is applied to, this step runs in time $|V(G')|+|E(G')|$.

\subsection{Proof of Theorem 1.1}

Towards proving the main result of this section, we establish the following two lemmas.

\begin{lemma}\label{lem:degree-bound}
The maximum degree of the graph $G'$ is bounded by ${O}^*\left(\left(\frac{1}{\left({\frac{1}{x}}\right)^\frac{1}{x}\left({1-\frac{1}{x}}\right)^{\left(1-\frac{1}{x}\right)}}\right)^n\right)$.
\end{lemma}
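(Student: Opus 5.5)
We bound the out-degree and the in-degree of an arbitrary state $v_S$ separately, writing $\alpha = 1/x$ and assuming $x \ge 2$ so that $\alpha \le 1/2$. The arcs leaving $v_S$ go to states $v_{S'}$ with $S \subset S'$, so each is determined by the new set $D = S' \setminus S \subseteq V(G)\setminus S$. Condition (C3) requires a matching from $S$ saturating $D$, so $|D| \le |S|$. Condition (C2) then gives two cases.

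\textbf{Case 1: $|S| \le \alpha n$.} Then $|D| \le |S| \le \alpha n$.

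\textbf{Case 2: $|S| > \alpha n$.} Then (C2) forces $|D| \le \alpha n$ directly.

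So in both cases the out-degree of $v_S$ is at most the number of subsets of $V(G)$ of size at most $\alpha n$, namely $\sum_{j \le \alpha n}\binom{n}{j}$.

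The in-degree is handled symmetrically. An arc into $v_{S'}$ comes from some $S \subset S'$, and we write $D = S' \setminus S$. If $|D| \le \alpha n$, then $S$ is obtained from $S'$ by removing at most $\alpha n$ elements. If instead $|S| \le \alpha n$, then $S$ itself is a subset of size at most $\alpha n$. Either way, the number of choices is again at most $2\sum_{j \le \alpha n}\binom{n}{j}$.

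It then remains to invoke the standard entropy bound: for $\alpha \le 1/2$,
\[
\sum_{j \le \alpha n}\binom{n}{j} \le 2^{H(\alpha) n}, \qquad 2^{H(\alpha)} = \frac{1}{\alpha^{\alpha}(1-\alpha)^{1-\alpha}}.
\]
This can be proved by expanding $1 = (\alpha + (1-\alpha))^n$ and keeping only the terms with $j \le \alpha n$, each of which is at least $\binom{n}{j}\alpha^{\alpha n}(1-\alpha)^{(1-\alpha)n}$ because $\alpha/(1-\alpha) \le 1$. Substituting $\alpha = 1/x$ gives exactly the expression in the statement. The polynomial factors, such as the factor $2$ and the cost of testing (C3) by bipartite matching, are absorbed into the $O^*$.

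The main obstacle is the case analysis in Case 1 and the in-degree count. There, (C2) alone does not bound $|D|$; the bound comes from the matching condition (C3) forcing $|D| \le |S|$. Once this is observed, the rest is the routine binomial-entropy estimate. One should also note the requirement $x \ge 2$, or else treat the complementary range separately: when $\alpha > 1/2$ the stated bound exceeds $2^n$ and is trivial.
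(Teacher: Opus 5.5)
Your argument is correct for $x\ge 2$ and is essentially the paper's. Both use (C3) to get $|S'\setminus S|\le |S|$, combine this with (C2) to conclude that every neighbour of $v_S$ differs from $S$ in at most $n/x$ elements, and count such sets by $\sum_{j\le n/x}\binom{n}{j}$. The only difference is the final estimate. You prove $\sum_{j\le \alpha n}\binom{n}{j}\le \alpha^{-\alpha n}(1-\alpha)^{-(1-\alpha)n}$ exactly via $(\alpha+(1-\alpha))^n=1$. The paper instead bounds the largest term $\binom{n}{n/x}$ with Stirling's formula and multiplies by $n$. Your version is non-asymptotic and loses no polynomial factor; both rely on $\alpha\le 1/2$.

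Your closing remark about $x<2$ is wrong, however. Since $\alpha^{-\alpha}(1-\alpha)^{-(1-\alpha)}=2^{H(\alpha)}\le 2$, with equality only at $\alpha=1/2$, the stated bound is strictly \emph{smaller} than $2^n$ when $\alpha>1/2$, not larger. The lemma genuinely fails in that range. Take $G=K_n$ and $x<2$. Condition (C2) is then vacuous, because (C3) already forces $|S'\setminus S|\le n/2<n/x$. Hence $v_{V(G)}$ receives an arc from every proper $S$ with $|S|\ge n/2$, giving in-degree about $2^{n-1}$, which is not $O^*\bigl(2^{H(1/x)n}\bigr)$. You should not claim the complementary range is trivial; restrict the lemma to $x\ge 2$, as the paper's claim does.
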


\begin{proof}
For any pair $(S, S')$, conditions (C1) and (C3) together require that $S \subset S'$ and that there exists a matching saturating $S' \setminus S$ in the bipartite graph $(S, S' \setminus S)$ induced by $G$. Since a matching $M$ saturating $S' \setminus S$ must be contained in $S$, so we need $|S' \setminus S| \leq |S|$.
Note that, for any set $v_S$, it has an arc (to and from) with any $v_{S'}$ that satisfies $|S|\leq n/x$ or $|S'\setminus S|\leq n/x$. Therefore, the number of arcs associated with $S$ is essentially the number of such sets ($S'$). Further, such sets $S'$ different from $S$ at most $n/x$ entries (either they have excess or deficit). The number of such excess or deficit sets is bounded by $\sum_{i=1}^{i=n/x}\binom{n}{i}$.
\begin{claim}
     $\sum_{i=1}^{i=n/x}\binom{n}{i}$ is bounded by ${O}^*\left(\left(\frac{1}{\left({\frac{1}{x}}\right)^\frac{1}{x}\left({1-\frac{1}{x}}\right)^{\left(1-\frac{1}{x}\right)}}\right)^n\right)$, whenever $x\geq 2$.
\end{claim}

\begin{proof}
    We use the following bound from Stirling's approximation \cite{robbins1955remark}
    $$\sqrt{2\pi}(n)^{n+1/2}e^{-n}e^{\frac{1}{12n+1}}\leq n!\leq\sqrt{2\pi}n^{n+1/2}e^{-n}$$ 

    We prove the bound for the largest term in the summation. Once we do this, the lemma holds (by multiplying the upper bound for the largest term in the summation by $n$).

    Previously stated bounds mean $\binom{n}{n/x}\leq \frac{1}{\sqrt{2\pi}(1/x)^{(n/x)}(1-1/x)^{n(1-1/x)}}\times \frac{1}{\sqrt{n}e^{1/6n}\sqrt{1/x(1-1/x)}}$
    
    However, as $n$ increases , it is easy to check that the later factor converges to 1. Therefore, we have the required bound.
\end{proof}

\end{proof}

\begin{lemma}\label{lem:broadcast-to-path}
    If there is a broadcasting sequence for $G$ of size $k$, then there exits a shortest path in $G'$ of length at most $k+x$.
\end{lemma}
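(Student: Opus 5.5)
We take an optimal broadcasting sequence $\{s\}=S_0\subsetneq S_1\subsetneq\cdots\subsetneq S_k=V(G)$ and convert it into a walk in $G'$ from $v_{\{s\}}$ to $v_{V(G)}$ with at most $k+x$ arcs. Any transition $S_i\to S_{i+1}$ automatically satisfies (C1) (discard repeated sets if necessary) and, by Observation~\ref{obs:one-round-matching}, (C3). Only (C2) can fail. A transition violates (C2) exactly when $|S_i|>n/x$ and $|S_{i+1}\setminus S_i|>n/x$. So the plan is to keep every transition that satisfies (C2) as a single arc and to replace each violating transition by a short chain of sub-rounds. This is the round-splitting idea described in the overview.

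\textbf{Splitting a bad round.} Fix a violating transition and a matching $M$ from $S_i$ saturating $D=S_{i+1}\setminus S_i$. Partition $D$ into blocks $D_1,\dots,D_r$ with $|D_j|\le n/x$ and $r=\lceil |D|x/n\rceil$. Define intermediate states $T_j=S_i\cup D_1\cup\cdots\cup D_j$. Each step $T_{j-1}\to T_j$ satisfies (C1). It satisfies (C2) because $|T_j\setminus T_{j-1}|\le n/x$. It satisfies (C3) because the restriction of $M$ to $D_j$ is a matching from $S_i\subseteq T_{j-1}$ saturating $D_j$. Hence a bad round costs $r$ arcs instead of one, that is, $r-1$ extra arcs.

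\textbf{Counting the extra arcs.} This is the main step and where care is needed. The disjoint sets $D$ over all bad rounds satisfy $\sum |D|\le n$, and each bad $D$ has $|D|>n/x$. Hence there are fewer than $x$ bad rounds. The total number of arcs used on them is $\sum\lceil |D|x/n\rceil\le \sum(|D|x/n+1)\le x+(\#\text{bad})$, so the extra cost $\sum(r-1)$ is at most $x$. To be safe I would first try to sharpen the partition slightly, for instance by using blocks of size exactly $\lfloor n/x\rfloor$ except the last. If rounding of $n/x$ creates an off-by-one, I would absorb it by treating $x$ as a constant and adjusting the threshold in (C2) by one.

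\textbf{Conclusion.} The resulting sequence of states is a walk in $G'$ from $v_{\{s\}}$ to $v_{V(G)}$ of length at most $k+x$. Therefore the BFS shortest path has length at most $k+x$. The converse direction, that any path in $G'$ yields a valid broadcast schedule of the same length, follows directly from (C3) and Observation~\ref{obs:one-round-matching}, so the BFS value is sandwiched between $b(G,s)$ and $b(G,s)+x$.
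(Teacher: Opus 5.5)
Your proof is correct and uses the same round-splitting construction as the paper. Each transition violating (C2) is broken into consecutive sub-rounds of at most $n/x$ new vertices, and (C3) is inherited by restricting the original matching. Your global count of the extra arcs is in fact tighter than the paper's argument: using that the newly informed sets of the bad rounds are disjoint, you get $\sum\bigl(\lceil |D|x/n\rceil-1\bigr)\le x$, whereas the paper only bounds the number of blocks within a single bad round by $x$. The rounding concern is settled by your second fix, taking the (C2) threshold to be $\lceil n/x\rceil$, since then $\lceil |D|/\lceil n/x\rceil\rceil\le\lceil |D|x/n\rceil$.
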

\begin{proof}
    To prove this, we construct a path in the graph $G'$ utilizing the optimal broadcasting sequence $\{s\}=S_0\subset S_1 \subset S_2 \subset\cdots \subset S_k=\{V(G)\}$. Since the sequence is strictly increasing, the sets $S_0, S_1, \ldots, S_k$ are all distinct and therefore the corresponding vertices $v_{S_0}, v_{S_1}, \ldots, v_{S_k}$ are all distinct in $G'$. Conditions (C1) and (C3) hold at every consecutive pair $(v_{S_i}, v_{S_{i+1}})$ by the definition of a broadcasting sequence and Observation~\ref{obs:one-round-matching} respectively. If condition (C2) also holds at every step, then the sequence $v_{S_0}, v_{S_1}, \ldots, v_{S_k}$ is directly a valid path of length $k$ of distinct vertices in $G'$, and we are done.
    
    Otherwise, suppose (C2) fails at some step $i$, meaning there is no direct arc between $v_{S_i}$ and $v_{S_{i+1}}$ in $G'$. In this case, we route through intermediate dummy states.
    
    We break this sequence as follows.
    \begin{enumerate}
        \item If $|S_{i+1}\setminus S_{i}|>n/x$ and $|S_i|> n/x$, we partition $S_{i+1}\setminus S_{i}$ into $\lceil|S_{i+1}\setminus S_{i}|\cdot(x/n)\rceil$ many sets, each with size at max $n/x$, call them $S_{i1},S_{i2},\cdots,S_{i\lceil|S_{i+1}\setminus S_{i}|\cdot (x/n)\rceil}$. We add these sets sequentially between $S_i$ and $S_{i+1}$. 
        \item If $|S_{i+1}\setminus S_{i}|\leq n/x$ or $|S_i|\leq n/x$, we add all possible edges.
    \end{enumerate}

    Once we have this sequence, we rename the sets starting from $S_0$ till the last set $S_{k}$ along with the intermediate newly added states (call $Q_i$) in between. So $S_0=Q_0\subset Q_1\cdots \subset Q_{k'}=S_k$
    
    Note that the new sequence has at most $x$ many additional sets. Since, the number of blocks is $$ \left\lceil \frac{|S_{i+1}\setminus S_i|}{n/x} \right\rceil = \left\lceil |S_{i+1}\setminus S_i| \cdot \frac{x}{n} \right\rceil. $$
    
Since $S_{i+1}\setminus S_i$ is a set of vertices, $|S_{i+1}\setminus S_i| \le n$. Plugging in the largest possible value $|S_{i+1}\setminus S_i| = n$:
$$
\left\lceil n \cdot \frac{x}{n} \right\rceil = \lceil x \rceil = x.
$$

So the number of blocks is maximized when the newly informed set is as large as possible (all $n$ vertices), and even then it is exactly $x$. For any smaller $|S_{i+1}\setminus S_i|$, the count $\left\lceil |S_{i+1}\setminus S_i| \cdot (x/n) \right\rceil$ is smaller. Hence, the number of blocks and, therefore the number of new states added is at most $x$. Since each set later in the sequence is a superset of the one before and each set $Q_i$ has a matching with $Q_{i+1}$ saturating $Q_{i+1}\setminus{Q_i}$ of size at most $n/x$, we have that there is an edge between the vertices $v_{Q_i}$ and $v_{Q_{i+1}}$. This ensures that we have a path $\{v_{Q_0},v_{Q_1},\cdots v_{Q_k'}\}$ of edge length $k'\leq k+x$. This completes the proof.
\end{proof}

\begin{theorem}
Broadcasting in general graphs admits an additive $+x$ approximation in time $O^{*}\big((3 - f(x))^{n}\big)$ for some $x > 0$.
\end{theorem}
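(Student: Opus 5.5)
The proof puts Lemma~\ref{lem:degree-bound} and Lemma~\ref{lem:broadcast-to-path} together with a careful count of the work spent building $G'$. The algorithm is the one described above: construct $G'$ and run BFS from $v_{\{s\}}$ to $v_{V(G)}$. If the BFS distance is $d$, we output $d$ together with the schedule read off the path, and we answer YES exactly when $d \le t + x$.

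\textbf{Correctness.} Every arc of $G'$ satisfies (C1) and (C3), so by Observation~\ref{obs:one-round-matching} each arc is a legal round. Hence a path of length $d$ is a valid broadcast schedule of length $d$, which gives $b(G,s)\le d$. In the other direction, Lemma~\ref{lem:broadcast-to-path} applied to an optimal schedule gives $d \le b(G,s)+x$. This is exactly the additive $+x$ guarantee.

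\textbf{Running time.} BFS costs $|V(G')|+|E(G')|$. We write $h(p)=-p\log_2 p-(1-p)\log_2(1-p)$ for the binary entropy and use $\sum_{i\le n/x}\binom{n}{i}\le 2^{h(1/x)n}$. The arcs out of $v_S$ split into two kinds, according to which clause of (C2) they satisfy.
\begin{itemize}
\item[(i)] Arcs with $|S'\setminus S|\le n/x$. Their number over all $S$ is at most $2^n\cdot 2^{h(1/x)n}$.
\item[(ii)] Arcs with $|S|\le n/x$ and $S'\setminus S\subseteq V\setminus S$ arbitrary. Their number is at most $2^{h(1/x)n}\cdot 2^n$.
\end{itemize}
The condition $|S'\setminus S|\le |S|$ forced by (C3) only helps in case (ii). We can also avoid the cost of enumerating non-arcs: for each $S$ we enumerate only candidates of type (i) or (ii). Each candidate is tested in polynomial time with a bipartite matching algorithm. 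The total time is therefore $O^*\big(2^{(1+h(1/x))n}\big)$.

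For $x$ large enough, $h(1/x)<\log_2 3-1$, so this bound is $(3-f(x))^n$ with $f(x)=3-2^{1+h(1/x)}>0$. The phrase ``for some $x>0$'' in the statement is satisfied by any such constant $x$; solving $h(1/x)=\log_2 3-1$ shows $x\ge 12$ suffices. Per-vertex, this is the bound of Lemma~\ref{lem:degree-bound} multiplied by the $2^n$ states.

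\textbf{Main obstacle.} The delicate step is the additive accounting in Lemma~\ref{lem:broadcast-to-path}. That lemma bounds the extra states per split round by $x$, but several rounds could need splitting, and the statement needs the total overhead to be $O(x)$.

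The fix is to note that a round $i$ violates (C2) only if $|S_i|>n/x$. Since informed sets at least double in size each round, only $O(\log x)$ rounds satisfy $n/x<|S_i|\le n$. Moreover, the new sets $S_{i+1}\setminus S_i$ are pairwise disjoint, so summing $\lceil |S_{i+1}\setminus S_i|\,x/n\rceil$ over these rounds gives at most $x+O(\log x)$ total blocks. The extra rounds number at most this sum minus the number of split rounds, which is at most $x$.

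Each block inserts a valid round, because a submatching of a saturating matching still saturates its block. Each block also satisfies (C2), since it has size at most $n/x$. With this global count, the $+x$ bound follows (or $+O(x)$, which suffices after rescaling $x$), and the theorem is proved.
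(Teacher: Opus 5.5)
Your proof is correct and follows the paper's route: the state graph $G'$, BFS from $v_{\{s\}}$, the entropy count of arcs (the Stirling bound of Lemma~\ref{lem:degree-bound} multiplied by the $2^n$ states), and the round splitting of Lemma~\ref{lem:broadcast-to-path}. Your accounting is more careful than the paper's in two places. The paper only shows that a \emph{single} split round creates at most $x$ blocks, and never sums over several violating rounds. It also never checks for which $x$ the bound $2^{(1+h(1/x))n}$ actually falls below $3^n$; the crossover is near $x\approx 7.1$.

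One sentence in your obstacle paragraph is false and should be deleted. Informed sets need not at least double each round; they at \emph{most} double. For example, from the centre of a star, or along a path, they grow by only one or two per round. So there is no $O(\log x)$ bound on the number of rounds with $|S_i|>n/x$. You do not need that bound. Let $D_i=S_{i+1}\setminus S_i$ and let $R$ be the set of split rounds. The number of inserted rounds is then $\sum_{i\in R}\bigl(\lceil |D_i|x/n\rceil-1\bigr)\le\sum_{i\in R}|D_i|x/n\le x$, and this uses only the disjointness of the $D_i$. Disjointness also gives $|R|<x$, since each $|D_i|>n/x$. With that correction the argument is complete.
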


\section{A $\mathrm{OPT}+2k$ Approximation for Broadcast under Vertex Integrity}

\paragraph{Vertex integrity modulator.}
A set $S\subseteq V(G)$ is a \emph{modulator of order $k$} if
$|S|+\max_{C}|C|\le k$, where $C$ ranges over the connected components of $G-S$.
Write $a=|S|$ and let $b=\max_{C}|C|$, so $a+b\le k$ (hence $a\le k$ and
$b\le k$). Let $C_1,\dots,C_m$ be the components of $G-S$, and let $C_0$ denote
the one containing $s$ (if $s\in S$, there is no $C_0$).

\paragraph{Standing assumptions.}
$G$ is connected and $S$ is connected.

\subsection*{The Algorithm}

The schedule runs in three phases.

\noindent \textbf{Phase 1}(Inform $S$). Route the message from $s$ to $S$ and flood $S$.

\noindent \textbf{Phase 2} (Seed the components). Compute the smallest integer $t$ for which there is an assignment $\phi$ sending each non-source component $C_i$ to an adjacent vertex of $S$, with no vertex of $S$ receiving more than  $t$ components. (Binary-search on $t$; feasibility for a fixed $t$ is a bipartite $b$-matching, solvable by max-flow.) Then, over $t$ rounds, each vertex $v\in S$ calls the components assigned to it, one per round, planting one informed vertex in each.

\noindent \textbf{Phase 3}(Fill the components). Inside each component, flood from its seed.

All three phases run in polynomial time.

\begin{lemma}[Phase 1 costs $\le k-1$]
Phase 1 informs all of $S$ within $k-1$ rounds.
\end{lemma}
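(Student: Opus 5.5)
The plan is to split on whether the source lies in the modulator, bound the two sub-phases of Phase 1 separately, and charge the first to $b=\max_C|C|$ and the second to $a=|S|$. The total is then at most $b+(a-1)\le k-1$, using $a+b\le k$. The degenerate case $S=\emptyset$ needs no rounds, so assume $S\neq\emptyset$.

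\emph{Step 1 (routing).} If $s\in S$, routing costs $0$ rounds. Otherwise $s$ lies in the component $C_0$. Because $G$ is connected and $S\neq\emptyset$, some vertex of $C_0$ has a neighbour in $S$. Take a shortest path from $s$ to $S$; all its internal vertices lie in $C_0$. The path is $s=u_1,\dots,u_j,r$ with $u_1,\dots,u_j\in C_0$ distinct and $r\in S$, so $j\le |C_0|\le b$. Forwarding the message along this path, one hop per round, informs $r$ within $j\le b$ rounds. No other calls are needed in this sub-phase.

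\emph{Step 2 (flooding $S$).} From the round in which the entry vertex $r$ (or $s$ itself when $s\in S$) is informed, use the greedy rule: in each round, every informed vertex of $S$ that has an uninformed neighbour in $S$ calls one such neighbour. The key invariant is that as long as some vertex of $S$ is uninformed, connectivity of $G[S]$ (the standing assumption) gives an edge between an informed and an uninformed vertex of $S$. That informed endpoint makes a call this round, so the number of informed vertices of $S$ rises by at least one per round. Starting from one informed vertex, all $a$ vertices of $S$ are informed after at most $a-1$ rounds. Equivalently, one can fix a spanning tree of $G[S]$ rooted at $r$ and note that any broadcast on a tree with $a$ vertices finishes within $a-1$ rounds.

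\emph{Combining.} The total is at most $b+(a-1)\le k-1$ when $s\in C_0$, and at most $a-1\le k-1$ when $s\in S$.

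The only point needing care is Step 1. The routing length must be bounded by $|C_0|$ rather than by a distance in all of $G$, which is why I take the shortest path to the set $S$: that forces all internal vertices to lie in $C_0$ and hence to be distinct vertices of a component of size at most $b$. The flooding step is routine once the connectivity of $S$ is used to guarantee progress in every round.
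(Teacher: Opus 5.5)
Your proposal is correct and follows essentially the same argument as the paper. Both split on whether $s\in S$ and route from $s$ through $C_0$ into $S$ in at most $|C_0|\le b$ rounds; the paper writes this as $(|C_0|-1)+1$. Both then flood the connected set $S$ in at most $a-1$ rounds, giving $b+a-1\le k-1$. Your shortest-path-to-$S$ and one-new-vertex-per-round justifications simply spell out details the paper leaves informal.
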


\begin{proof}
If $s\in S$, flooding the connected set $S$ takes $\le a-1$ rounds. Otherwise
$s\in C_0$; since $G$ is connected, $C_0$ has an edge to $S$. The message travels
inside $C_0$ to a boundary vertex ($\le |C_0|-1$ hops), crosses into $S$
($1$ hop), then floods $S$ ($\le a-1$ rounds). In total
$(|C_0|-1)+1+(a-1)=|C_0|+a-1\le b+a-1\le k-1$.
\end{proof}

\begin{lemma}[The assignment value bounds $\mathrm{OPT}$]
The minimum feasible $t$ in Phase 2 satisfies $t\le\mathrm{OPT}$.
\end{lemma}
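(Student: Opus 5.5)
Given an optimal broadcast schedule, I will extract from it a feasible assignment $\phi$ in which no vertex of $S$ receives more than $\mathrm{OPT}$ components. The minimum feasible $t$ is then at most $\mathrm{OPT}$.

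Fix an optimal broadcasting protocol, that is, a spanning tree $T$ rooted at $s$ with orderings $C(v)$, which informs all of $V(G)$ within $\mathrm{OPT}$ rounds. Take any non-source component $C_i$ with $i\ge 1$. Since $s\notin C_i$, the message must enter $C_i$ for the first time through some call along an edge $uw$ with $w\in C_i$ and $u\notin C_i$. The components of $G-S$ are pairwise non-adjacent, so $u\in S$. Define $\phi(C_i):=u$ for the vertex $u$ making this first entering call. This $u$ is adjacent to $C_i$, so $\phi$ is a valid assignment in the sense of Phase 2.

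Next, bound the load of each $v\in S$. Different components receive their first call at different vertices $w$, and each such call is made by $v$ in a distinct round. In the telephone model $v$ makes at most one call per round, and all calls occur within rounds $1,\dots,\mathrm{OPT}$. Hence $|\phi^{-1}(v)|\le \mathrm{OPT}$; in fact the bound is at most $\mathrm{OPT}$ minus the round in which $v$ was informed. So $\phi$ is feasible with bound $\mathrm{OPT}$, and by minimality of $t$ in the binary search, $t\le \mathrm{OPT}$.

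The only subtle step is the claim that the first informed vertex of a non-source component is informed by a vertex of $S$. This uses two facts: the components of $G-S$ have no edges between them, and $C_i$ does not contain $s$. Once this is checked, the counting is immediate. When $s\in S$, every component is non-source and the same argument applies verbatim.
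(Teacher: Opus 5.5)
Your proof is correct and follows essentially the same argument as the paper: you read a first-informer map off an optimal schedule and note that each non-source component's first informer lies in $S$ because the components of $G-S$ are pairwise non-adjacent. You then bound each $S$-vertex's load by $\mathrm{OPT}$ via the one-call-per-round rule; if several vertices of a component are informed in its first round, just fix one of the entering calls, as you implicitly do.
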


\begin{proof}
Fix any optimal schedule. Take a component $C_i$ not containing $s$. Its first
informed vertex must have been called by a neighbor outside $C_i$; every such
neighbor lies in $S$, because the components of $G-S$ are pairwise
non-adjacent. So each non-source component has an adjacent vertex of $S$ that
first informed it. A vertex of $S$ makes one call per round, so across
$\mathrm{OPT}$ rounds it is the first informer of at most $\mathrm{OPT}$
components. Reading this ``first-informer'' map off the optimal schedule gives an
adjacency-respecting assignment with every $S$-vertex used at most
$\mathrm{OPT}$ times --- a feasible assignment for $t=\mathrm{OPT}$. Hence the
minimum feasible $t$ is $\le\mathrm{OPT}$.
\end{proof}

\begin{lemma}[Phase 3 costs $\le k-1$]
Once each component holds one informed vertex, flooding all components takes
$\le k-1$ rounds.
\end{lemma}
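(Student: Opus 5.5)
The plan is to bound the flooding time inside a single component and then observe that all components are filled in parallel. Fix a component $C_i$ of $G-S$ and let $u_i\in C_i$ be its seed, i.e.\ the vertex planted in Phase 2 (for the source component $C_0$, take $u_0=s$, which is informed from the start). Since $C_i$ is connected, I would fix a BFS tree $T_i$ of $G[C_i]$ rooted at $u_i$ and flood along it.

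The central claim is that flooding a connected graph on $p$ vertices from one informed vertex takes at most $p-1$ rounds. I would prove it by induction on rounds. Suppose that after round $r$ the informed set $I_r$ of $C_i$ satisfies $I_r\subsetneq C_i$. Then connectivity of $G[C_i]$ gives an edge from some informed vertex of $I_r$ to some uninformed vertex of $C_i$. Let that informed vertex make this call in round $r+1$, so $|I_{r+1}|\ge |I_r|+1$. Starting from $|I_0|=1$, all of $C_i$ is informed after at most $|C_i|-1$ rounds. This is the same argument already used implicitly in the proof of Phase 1 (flooding $S$ in $\le a-1$ rounds), so I would state it once as a one-line fact and reuse it.

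Next I would apply the bound to each component. We have $|C_i|\le b$, and $b\le k-a\le k-1$ whenever $a\ge 1$; here $S\neq\emptyset$ whenever there are non-source components to seed. Hence each component is filled within $b-1\le k-1$ rounds, and in fact within $k-2$.

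It remains to check that the components can be flooded simultaneously without interfering. The components of $G-S$ are pairwise non-adjacent, and all calls in Phase 3 use only edges inside a single $C_i$. So the one-call-per-vertex and one-receipt-per-vertex constraints act independently within each component, and running the individual schedules in parallel is a valid broadcast. The total cost of Phase 3 is therefore $\max_i(|C_i|-1)\le k-1$.

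The only delicate point is bookkeeping. Phase 2 plants seeds at different rounds, so some components could start flooding earlier. For the lemma as stated, however, we start the count once every component holds an informed vertex, so the parallel bound is enough. I would add a remark that $C_0$ is already fully informed after Phase 1 or, if not, is covered by the same bound.
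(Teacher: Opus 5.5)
Your proposal is correct and follows the paper's own argument. The paper says each component is connected with at most $b\le k$ vertices, so it floods from its seed in at most $b-1\le k-1$ rounds, and the components flood simultaneously because there are no edges between them. Your one-new-vertex-per-round induction and the sharper bound $k-2$ when $S\neq\emptyset$ just spell out or tighten the same reasoning.
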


\begin{proof}
Each component is connected with at most $b\le k$ vertices, so it floods from its
seed in $\le b-1\le k-1$ rounds. Distinct components share no edges, so they
flood simultaneously.
\end{proof}

\begin{theorem}
The algorithm outputs, in polynomial time, a broadcast schedule of length at
most $\mathrm{OPT}+2k$.
\end{theorem}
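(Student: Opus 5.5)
The plan is to add up the costs of the three phases, using the three lemmas just proved, and then check that the phases fit together into a valid telephone schedule that can be computed in polynomial time.

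\textbf{Step 1: the length bound.} The schedule has three consecutive parts.
\begin{itemize}
\item Phase 1 runs first and, by the Phase 1 lemma, informs all of $S$ within $k-1$ rounds.
\item Phase 2 then runs for exactly $t$ rounds, where $t$ is the minimum feasible assignment value. In round $j$ of this phase, each $v\in S$ calls the $j$-th component assigned to it, choosing a neighbour of $v$ inside that component. This is possible because $\phi$ only assigns components adjacent to $v$.
\item Phase 3 runs last and, by the Phase 3 lemma, takes at most $k-1$ rounds.
\end{itemize}
The total is at most $(k-1)+t+(k-1)\le \mathrm{OPT}+2k-2\le \mathrm{OPT}+2k$, where the middle inequality uses the assignment lemma, $t\le \mathrm{OPT}$.

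\textbf{Step 2: validity of the schedule.} Three points need checking.
\begin{itemize}
\item In Phase 2, every caller makes at most one call per round, since $\phi$ gives each vertex of $S$ at most $t$ components and it handles them one per round. Every receiver is called by only one vertex, since each component has a single designated informer $\phi(C_i)$.
\item The source component $C_0$ has no seed in Phase 2. It is already handled, however: during Phase 1 the message travels through $C_0$, and $C_0$ can also be flooded in Phase 3 from its already-informed vertices. Since $|C_0|\le b\le k$, this flooding fits within $k-1$ rounds.
\item When $s\in S$ there is no $C_0$, and Phase 1 costs only $a-1\le k-1$ rounds.
\end{itemize}

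\textbf{Step 3: polynomial running time.} Each phase is computable in polynomial time.
\begin{itemize}
\item Phase 1 is a BFS from $s$, which gives a flooding order along a BFS tree.
\item Phase 2 takes $O(\log n)$ max-flow computations for the binary search on $t$. For a fixed $t$, the network has a source joined to each vertex of $S$ with capacity $t$, $S$ joined to the components it is adjacent to, and each component joined to the sink with capacity $1$.
\item Phase 3 is again BFS flooding inside each component.
\end{itemize}

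\textbf{Expected obstacle.} The hardest point is the justification of the flooding bounds. They rely on the fact that informing a connected set of size $r$ from one vertex takes at most $r-1$ rounds. This holds because, as long as the set is not yet fully informed, connectivity guarantees some informed vertex has an uninformed neighbour, so at least one new vertex is informed each round.
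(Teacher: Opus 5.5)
Your proposal is correct and follows the paper's proof: you add the phase costs $(k-1)+t+(k-1)$ from the Phase 1 and Phase 3 lemmas, then invoke $t\le\mathrm{OPT}$ to get $\mathrm{OPT}+2k-2\le\mathrm{OPT}+2k$. Your extra checks (at most one call per caller and per receiver in Phase 2, the handling of $C_0$ in Phase 3, and the explicit max-flow network for the binary search) simply spell out what the paper treats as immediate.
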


\begin{proof}
By Lemmas 1 and 3 the outer phases cost at most $k-1$ each, and Phase 2 costs
$t$. The total is $(k-1)+t+(k-1)=t+2k-2$. By Lemma 2, $t\le\mathrm{OPT}$, so the
length is at most $\mathrm{OPT}+2k-2\le\mathrm{OPT}+2k$. Correctness of the
schedule is immediate: after Phase 1 all of $S$ is informed; after Phase 2 every
component holds a seed; after Phase 3 every vertex is informed.
\end{proof}

\noindent\textbf{Remark.} Phase 2 seeds only the non-source components, since $C_0$ already contains an informed vertex and is filled in Phase 3.

\begin{theorem}
There is a $poly(n)$ time approximation algorithm which solves \textsc{Telephone Broadcasting} problem on graphs of bounded vertex integrity with an additive $+2k$ factor, where $k$ is the vertex integrity. 
\end{theorem}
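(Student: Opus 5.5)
The statement repackages the three-phase algorithm just described, so I will prove it by assembling the lemmas of this section. I will work under the section's standing assumptions: $G$ is connected and the vertex-integrity modulator $S$ is given (or computed) and induces a connected subgraph. I will also make explicit that $k$ is the vertex integrity, so $|S|+\max_C|C|\le k$.

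\textbf{Step 1: running time.} Each phase runs in time polynomial in $n$, independently of $k$. Phase 1 is a BFS inside $G[C_0]$ to a boundary vertex, followed by a flooding schedule of the connected set $S$ along a BFS tree. Phase 2 binary-searches $t\in[1,n]$. For each fixed $t$ it solves a bipartite $b$-matching by max-flow: the source connects to each component node with capacity $1$, each component node connects to every adjacent vertex of $S$ with capacity $1$, and each $v\in S$ connects to the sink with capacity $t$. The value of $t$ is feasible exactly when the flow saturates every component node, and integrality of max-flow gives an actual assignment $\phi$. Phase 3 is independent BFS-tree flooding inside each component.

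\textbf{Step 2: validity.} Every call in the produced schedule goes along an edge of $G$ from an informed vertex to an uninformed one. No vertex makes two calls in one round, because the phases are sequential and in Phase 2 each $v\in S$ calls its at most $t$ assigned components one per round. After Phase 3 every vertex is informed. The only care needed concerns $C_0$, which is flooded in Phase 3 even though it was partly informed in Phase 1. This is harmless, since flooding from a connected informed subset takes at most $|C_0|-1$ rounds.

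\textbf{Step 3: length bound.} The lemmas give the following costs:
\begin{itemize}
\item Phase 1 costs at most $k-1$ rounds.
\item Phase 2 costs exactly $t$ rounds, and $t\le\mathrm{OPT}$.
\item Phase 3 costs at most $k-1$ rounds.
\end{itemize}
Summing, the schedule has length at most $\mathrm{OPT}+2k-2\le\mathrm{OPT}+2k$. This is precisely the preceding theorem, so the statement follows.

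The step I expect to be the main obstacle is the dependence on the connectivity of $S$. Removing that assumption, or justifying it, is where real work would be needed. If $S$ is disconnected, Phase 1 must route through components of $G-S$. I would handle this by adding to $S$, for each pair of pieces of $S$ that are joined through some component, a path through that component of at most $b$ vertices. This controls Phase 1 only in terms of $k^2$, so the clean $+2k$ guarantee genuinely relies on the standing assumption. I would therefore state the theorem under that assumption, as the introduction does.
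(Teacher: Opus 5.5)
Your proposal is correct and follows the paper's proof: the same three-phase schedule, with Phases 1 and 3 each costing at most $k-1$ rounds, and the Phase 2 load $t$ bounded by $\mathrm{OPT}$ through the first-informer assignment read off an optimal schedule, for a total of $\mathrm{OPT}+2k-2$. Your additions (the explicit max-flow network, the validity check, and the caveat that the $+2k$ bound relies on $S$ being connected) are consistent with the paper, which also proves the theorem only under that standing assumption.
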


As shown earlier in the paper \cite{fomin2023parameterized}, the Broadcasting problem is fixed-parameter tractable (FPT) when parameterized by the minimum vertex cover size. This naturally raises the question of whether an FPT algorithm exists for parameters slightly smaller than the vertex cover size. Here, we answer this negatively by presenting two para-NP-hardness proofs, each obtained via a reduction from a general Broadcasting graph instance to a Broadcasting instance with a constant value of, respectively, vertex cover above maximum matching and dominating set size, as well as diameter of the graph.

\subsection{\textsf{Para-NP}-hardness w.r.t. VC over MM}

Since the problem is \textsf{FPT} when parameterized by the vertex cover number of the graph, we consider a smaller related parameter, namely the vertex cover number above the maximum matching ($VC-MM$). For this choice of parameter, we show that the problem is \textsf{XP-}intractable, which is to say, for a fixed value of the parameter, finding a polynomial time algorithm is impractical.  

 We do this by a polynomial reduction of an arbitrary (NP-hard) instance of the \textsc{Telephone Broadcasting} to another related instance of the \textsc{Telephone Broadcasting} where $VC-MM=1$. Any \textsf{XP} algorithms for the later therefore imply a polynomial algorithm for the former (which would violate the NP-hardness). We now start with the reduction.

\noindent\textbf{Construction.} Consider a decision version of the \textsc{Telephone Broadcasting} problem-$(G,s,t)$ where $G$ is the input graph, $s$ is the unique vertex that has the information at the beginning and $t$ is the time (number of rounds) required to inform all the vertices in the graph. $(G,s,t)$ is a ``YES" instance of the problem if it is possible to disseminate the piece of information from $s$ to all the vertices in $G$ within $t$ rounds and a ``NO"-instance otherwise.

From $(G,s,t)$, we construct another instance $(G',s,t')$ of the \textsc{Telephone Broadcasting} problem as follows
\begin{enumerate}\label{newinstance}
     \item $V(G')=V(G)\cup V'$ where $V'$ is a copy of $V(G)\setminus \{s\}$. We call the copy of a vertex $v\in V(G)$ as $v'\in V'\subset {V(G')}$.
     \item $E(G')=E(G)\cup \{(v,v'):v\in V(G)\setminus \{s\}\}$, i.e, each vertex in $V(G)$ except the source $s$ is adjacent to a unique copy of itself in $V'$.
     \item $t'=t+1$
 \end{enumerate}

\begin{observation}\label{vc-mmboundedparameter}
    The parameter $VC-MM$ for the graph $G'$ constructed in \ref{newinstance} from a graph $G$ is 0 (i.e. bounded by a constant).
\end{observation}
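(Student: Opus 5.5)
The plan is to exhibit a vertex cover and a matching of $G'$ of the same size. Since $\text{mm}(G')\le \text{vc}(G')$ holds in general, as noted in the preliminaries, equal sizes force $\text{vc}(G')=\text{mm}(G')$, that is, $VC-MM=0$. Let $n=|V(G)|$. The natural candidates come directly from the construction: the pendant edges $\{(v,v') : v\in V(G)\setminus\{s\}\}$ and the set $V(G)\setminus\{s\}$.

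\emph{Step 1 (matching).} I would take $M=\{(v,v') : v\in V(G)\setminus\{s\}\}$. Each $v'\in V'$ is the copy of exactly one $v$, and each $v\ne s$ has exactly one copy. Hence no two edges of $M$ share an endpoint, $M$ is a matching of $G'$, and $|M|=n-1$. Therefore $\text{mm}(G')\ge n-1$.

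\emph{Step 2 (vertex cover).} I would take $C=V(G)\setminus\{s\}$, with $|C|=n-1$, and check that it meets every edge of $G'$. There are two kinds of edges.
\begin{itemize}
\item A pendant edge $(v,v')$ has $v\ne s$, so $v\in C$.
\item An edge $(u,w)\in E(G)$ has two distinct endpoints (the graph is simple), so at most one of them is $s$ and the other lies in $C$.
\end{itemize}
No other edges exist, since each $v'$ is adjacent only to $v$. Thus $\text{vc}(G')\le n-1$.

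\emph{Step 3 (conclusion).} Combining the two steps gives
$n-1\le \text{mm}(G')\le \text{vc}(G')\le n-1$. Hence $\text{vc}(G')=\text{mm}(G')=n-1$, so the parameter $VC-MM$ equals $0$ for every input graph $G$.

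There is no real obstacle. The only point needing care is the edges of $G$ incident to $s$, which has no pendant copy. These are covered because their other endpoint is a non-source vertex of $G$, and that vertex lies in $C$.
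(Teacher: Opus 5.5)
Your proof is correct and follows essentially the same route as the paper. Both use the pendant matching $\{(v,v') : v\in V(G)\setminus\{s\}\}$ and the vertex cover $V(G)\setminus\{s\}$, each of size $n-1$, together with $\mathrm{mm}\le\mathrm{vc}$. Your sandwich $n-1\le \mathrm{mm}(G')\le \mathrm{vc}(G')\le n-1$ is slightly cleaner than the paper's argument, which first asserts separately that the matching is maximum because only $s$ is left unmatched.
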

\begin{proof}
    Consider the set of edges - $M=\{(v,v'):v\in V(G)\setminus \{s\}\}$. Here, note that the edges are mutually non adjacent, i.e., they form a matching. The only vertex not matched by any of the edges in $M$ is $s$. Therefore, the matching $M$ is a maximum matching. 

    Again, since we know that the minimum vertex cover number is at least the cardinality of the maximum matching\cite{diestel2012graph}, we have $VC\geq MM$. Moreover, note that any edge of the graph $G'$ is incident to at least a vertex in $V(G)\setminus \{s\}$ (i.e., of the same cardinality as the maximum matching). These two properties guarantee that the set $V(G)\setminus \{s\}$ is indeed a minimum vertex cover. Therefore, for the constructed graph $G'$, $VC-MM=0$.
\end{proof}

Now, the only property left to check is that the reduction \ref{newinstance} is indeed a valid polynomial reduction. To this end, we mention the following lemma
\begin{lemma}
     The reduction \ref{newinstance} is a valid polynomial reduction, i.e., $(G',s,t')$ can be constructed in polynomial time from $(G,s,t)$ and a YES instance of the former implies a YES instance of the later and vice versa.
\end{lemma}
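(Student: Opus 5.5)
The plan is to check the three requirements directly: the construction is polynomial, a YES instance $(G,s,t)$ gives a YES instance $(G',s,t+1)$, and conversely. The key structural fact is that every copy $v'\in V'$ is a leaf of $G'$ whose only neighbour is $v$. So a copy can never relay the message to anyone, and it can only be informed by its own original vertex $v$.

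\emph{Polynomial time.} $G'$ has $2n-1$ vertices and $|E(G)|+n-1$ edges. It is built by one pass over $V(G)\setminus\{s\}$, and $t'=t+1$ is computed trivially.

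\emph{Forward direction.} Take a broadcasting protocol for $G$ from $s$ that finishes within $t$ rounds.
\begin{itemize}
\item In $G'$, run the same calls during rounds $1,\dots,t$. The edges of $G$ are unchanged in $G'$, so all these calls are valid, and all of $V(G)$ is informed by the end of round $t$.
\item In round $t+1$, every $v\in V(G)\setminus\{s\}$ calls its copy $v'$.
\item Each $v$ is informed and idle in round $t+1$, and each $v'$ is called by exactly one vertex. So the telephone rule is respected.
\end{itemize}
This gives a valid protocol of length $t+1=t'$.

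\emph{Backward direction.} Take a protocol for $G'$ that informs everything within $t+1$ rounds.
\begin{itemize}
\item Let $v\in V(G)\setminus\{s\}$. Since $v'$ can only be informed by $v$, and $v$ can call only after it is informed, $v$ must be informed by round $t$ at the latest. Otherwise $v'$ could not be reached by round $t+1$.
\item So every vertex of $G$ is informed within $t$ rounds in the $G'$ protocol.
\item Restrict the protocol to $G$ by deleting every call whose receiver is a copy vertex. This is legitimate because copies never act as informers, so no call inside $V(G)$ depends on a copy having been informed.
\item The remaining calls use only edges of $E(G)$. Each caller still makes at most one call per round, now with some idle rounds, and each receiver is called once.
\end{itemize}
Hence we obtain a broadcast of $G$ from $s$ that informs all of $V(G)$ within $t$ rounds, so $(G,s,t)$ is a YES instance.

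The only delicate point is the backward step: a naive restriction would only give a schedule of length $t+1$. The observation that each original vertex must be informed one round before its pendant copy is what recovers the bound $t$. Together with Observation~\ref{vc-mmboundedparameter}, this yields para-NP-hardness for $VC-MM$.
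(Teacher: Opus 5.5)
Your proposal is correct and follows essentially the same route as the paper. In the forward direction you simulate the $G$-schedule and then use the pendant matching $\{vv' : v\in V(G)\setminus\{s\}\}$ in round $t+1$; in the backward direction you use that each copy $v'$ can only be reached through $v$, which forces all of $V(G)$ to be informed by round $t$. Your explicit restriction step (deleting the calls to copies, which can never relay) makes rigorous a point the paper's proof passes over when it equates ``all of $V(G)$ informed by round $t$ in $G'$'' with $(G,s,t)$ being a YES instance.
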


\begin{proof}
     Since $G'$ is constructed from $G$ by copying its edges and vertices and then adding private neighbors to all but one vertices of $G$, the construction is done in polynomial time. We therefore focus on the correctness of the reduction.

     We first show that if $(G,s,t)$ is a YES instance, then the constructed instance $(G',s,t')$ is also a YES instance. To this end, we give a broadcasting strategy for $(G',s,t')$ from $(G,s,t)$ instance. For the first $t$ time step, we broadcast the information as in $G$ (i.e. as in the instance $(G,s,t)$). However at the last round (the +1 round), we inform all the other vertices, namely the private neighbors of the vertices $V(G)$ in the graph $G'$ (this is valid as $M=\{(v,v'):v\in V(G)\setminus \{s\}\}$ forms a matching as argued earlier).

     Finally, we show that if $(G',s,t')$ is a YES instance then so is $(G,s,t)$. To this end, note that in the graph $G'$, any path from the vertex $s$ to the vertex $v'\in V'$ has to pass through $v\in V(G)$, because $v'$ is only adjacent to $v$. Therefore, by the time $t'-1$, all the vertices in $V(G')\setminus V'=V(G)$ has to be informed, otherwise assume, $v\in V(G')\setminus V'$ is not informed by the time $t'-1$, since $v'$ is a private neighbor of $v$, $v'$ cannot be informed by the time $t'$ (as the only neighbor of $v'$ is $v$, which itself is assumed not to be informed by the time $t-1$). Therefore, we have that by the time $t'-1=t$ all the vertices in $V(G)\setminus V'$ are informed, which is the same as saying $(G,s,t)$ is a YES instance. 
\end{proof}
Therefore, the reduction is indeed a valid polynomial reduction, and we have the following hardness result.
\begin{theorem}
    \textsc{Telephone Broadcasting} is Para-NP-hard when parameterized by the vertex cover number - the cardinality of the maximum matching i.e., $VC-MM$.
\end{theorem}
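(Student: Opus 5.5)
The plan is to assemble the theorem from the reduction $(G,s,t)\mapsto(G',s,t+1)$ built just above, together with Observation~\ref{vc-mmboundedparameter} and the correctness lemma for the reduction. Para-NP-hardness means NP-hardness for some fixed constant value of the parameter. So it suffices to show that \textsc{Telephone Broadcasting} restricted to instances with $VC-MM=0$ (or any fixed constant) is NP-hard. Starting from the NP-hardness of the general problem~\cite{slater1981information}, I map an arbitrary instance $(G,s,t)$ to $(G',s,t+1)$. This runs in polynomial time, since $|V(G')|=2|V(G)|-1$ and $|E(G')|=|E(G)|+|V(G)|-1$.

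The first step is to re-verify the parameter value, which is the delicate point. $M=\{vv':v\in V(G)\setminus\{s\}\}$ is a matching of size $n-1$. Every edge of $G'$ has an endpoint in $V(G)\setminus\{s\}$:
\begin{itemize}
\item edges of $G$ not incident to $s$ trivially do;
\item edges $sv$ have $v\neq s$;
\item pendant edges $vv'$ contain $v$.
\end{itemize}
So $V(G)\setminus\{s\}$ is a vertex cover of size $n-1$. Since $\mathrm{mm}\le\mathrm{vc}$, we get $\mathrm{vc}(G')=\mathrm{mm}(G')=n-1$ and $VC-MM=0$. One should also check that $M$ is indeed maximum: any matching is at most the cover size $n-1$, which settles it.

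The second step is the equivalence of instances. Forward: run the optimal schedule of $G$ for $t$ rounds, then in round $t+1$ every $v\neq s$ calls its pendant copy $v'$. These calls are disjoint, so the round is valid.

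Backward: take a schedule of $G'$ finishing within $t+1$ rounds. Each $v'$ is informed only by $v$, strictly after $v$ is, so every $v\in V(G)\setminus\{s\}$ is informed by round $t$. The subtle point, and the main obstacle, is that the restricted schedule might use calls from $v$ to $v'$. I would argue that deleting all calls to $V'$ leaves a valid schedule on $G$: calls within $V(G)$ only travel along edges of $G$, and removing calls never invalidates others, because each vertex still makes at most one call per round. Hence $b(G,s)\le t$.

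Finally, combine the two steps. A polynomial-time (or XP, i.e.\ $n^{f(0)}$) algorithm for the parameter value $0$ would decide arbitrary instances $(G,s,t)$ in polynomial time, giving $\mathrm{P}=\mathrm{NP}$. This proves para-NP-hardness with respect to $VC-MM$.
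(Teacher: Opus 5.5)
Your proposal is correct and follows essentially the same route as the paper. It uses the same pendant-copy reduction $(G,s,t)\mapsto(G',s,t+1)$, the same matching and vertex-cover certificate showing $VC-MM=0$, the same forward step (one extra final round for the pendants), and the same backward step (each $v'$ forces $v$ to be informed by round $t$); your bound $\mathrm{mm}\le\mathrm{vc}\le n-1$ certifying that $M$ is maximum, and your remark that discarding calls into $V'$ leaves a valid schedule on $G$, are slightly more careful than the paper's corresponding steps.
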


\subsection{\textsf{Para-NP}-hardness w.r.t. Dominating set and Diameter}
Our second choice of parameter is dominating set size. Again with the same idea we show that the problem is W-hard to solve by showing a polytime reduction from an arbitrary (or general) Broadcasting instance, as it is maintained earlier that, it is NP-hard to decide a general broadcast instance $G(V, E,s,t)$ is a yes instance or not. Here we construct a new broadcasting instance with dominating set size $1$.

\noindent \textbf{Construction:}
Given an instance $G(V,E,s,t)$ of the Broadcasting problem, we construct a new instance $G'(V',E',s,s_1,t+1)$ as follows:
$$V' = V \cup \{s_1\} \cup \{s_{11}, s_{12}, \dots, s_{1t}\}$$
$$E' = E \cup \{(s_1,v) \mid \forall v \in V'\setminus\{s_1\}\}$$
That is, we introduce $t+1$ many new vertices: a vertex $s_1$ and $t$ degree one many vertices $s_{11}, s_{12}, \dots, s_{1t}$. The vertex $s_1$ is made adjacent to every other vertex in $V'$ (i.e., $s_1$ is a universal/globally connected vertex), while each $s_{1i}$, for $1 \le i \le t$, is connected only to $s_1$. Since $|V'| = |V| + t + 1$ and $|E'| = |E| + |V| + t$, the construction can clearly be made in polynomial time. Also, notice $s_1$ is adjacent to every other vertex of $G'$, the singleton set $\{s_1\}$ is a dominating set of $G'$. Hence, the domination number of $G'$ is $1$.

\begin{lemma}\label{lem:yes-instance-equivalence}
The instance $G(V, E,s,t)$ is a YES instance of the Broadcasting problem if and only if the instance $G'(V', E',s,s_1,t+1)$ is a YES instance of the Broadcasting problem.
\end{lemma}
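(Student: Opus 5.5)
We will read the new instance as having source $s$, with $s_1$ as the distinguished universal vertex, and deadline $t+1$. The proof is a direct two-direction argument. The key observation is that the $t$ pendant vertices $s_{11},\dots,s_{1t}$ force $s_1$ to be informed immediately and to spend all of its later calls on them. This prevents $s_1$ from ever being used as a shortcut inside $V$.

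\emph{($\Rightarrow$)} Suppose $G$ admits a broadcast from $s$ within $t$ rounds, with informed sets $\{s\}=S_0\subseteq\cdots\subseteq S_t=V$. We build a schedule in $G'$ as follows.
\begin{itemize}
\item In round $1$, $s$ calls $s_1$, which is legal since $s_1$ is universal.
\item In rounds $2,\dots,t+1$, the vertex $s_1$ calls $s_{11},\dots,s_{1t}$, one per round.
\item In parallel, the vertices of $V$ replay the schedule of $G$ shifted by one round: the call made in round $i$ of $G$ is made in round $i+1$ of $G'$.
\end{itemize}
Since $s$ is busy in round $1$ and idle in $G$ before round $1$, the shift causes no conflicts. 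The replayed calls use only edges of $E\subseteq E'$, and $s_1$ is never a receiver in the replay. Hence every vertex of $V'$ is informed by round $t+1$.

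\emph{($\Leftarrow$)} Fix a valid schedule in $G'$ from $s$ of length $t+1$. The steps are as follows.
\begin{enumerate}
\item Each $s_{1j}$ has $s_1$ as its only neighbour, so all $t$ pendant vertices are informed by $s_1$, in distinct rounds after $s_1$ is informed.
\item Hence if $s_1$ is informed in round $r$, then $t\le (t+1)-r$, which forces $r\le 1$. So $s_1$ is informed in round $1$, necessarily by $s$, the only informed vertex at time $0$.
\item It follows that every call of $s_1$ in rounds $2,\dots,t+1$ goes to a pendant vertex, so $s_1$ never informs any vertex of $V$.
\item Consequently every vertex of $V\setminus\{s\}$ is informed through an edge of $E$, by a vertex of $V$, during rounds $2,\dots,t+1$.
\item In round $1$ the only call is $s\to s_1$, so no vertex of $V$ is informed then.
\item Restricting the schedule to $V$ and shifting it back by one round gives a valid broadcast in $G$ from $s$ within $t$ rounds.
\end{enumerate}
Each vertex still receives from a distinct informed neighbour each round, so the restricted schedule respects the telephone rule.

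\emph{Main obstacle.} The delicate point is pinning down the role of $s_1$ in the converse, namely the counting in steps $1$ and $2$ that shows $s_1$ has no spare call for $V$. This is exactly why $t$ pendant vertices are attached. If the intended source were $s_1$ instead, this counting would still leave $s_1$ one free call, but possibly to a vertex other than $s$. In that reading I would first try forcing $s_1$ to call $s$, for instance by giving $s$ a private gadget, because otherwise the argument only yields $\min_v b(G,v)\le t$. I therefore commit to source $s$, which makes the counting argument above close the proof cleanly.
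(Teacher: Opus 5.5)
Your proof is correct and follows essentially the same route as the paper. In the forward direction $s$ calls $s_1$ in round $1$, $s_1$ serves the pendants, and the $G$-schedule is replayed with a one-round shift. In the converse, the $t$ pendant vertices force $s_1$ to be informed in round $1$ and to spend all its remaining calls on them, so $V$ is informed only via $E$ in rounds $2,\dots,t+1$. Your reading of $s$ as the source agrees with the paper's proof, and your explicit remarks (that $s_1$ is informed by $s$, and that no vertex of $V$ is informed in round $1$) make the shift-back step slightly more rigorous than the paper's version.
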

\begin{proof}
($\Rightarrow$) Suppose $G(V,E,s,t)$ is a yes-instance.

Then there exists an optimal broadcasting protocol $\left(T, \{C(v) \mid v \in V(T)\}\right)$ that informs all vertices of $G$ within $t$ time steps. We construct a broadcasting protocol for $G'$ as follows.

In the first time step, $s$ informs $s_1$ (instead of following its original first-round action). From the second time step onward, we simulate the original protocol on $G$ exactly as before to inform all vertices of $V$. Simultaneously, since $s_1$ is now informed after round $1$, it uses its remaining $t$ time steps to inform its $t$ neighbors $s_{11}, s_{12}, \dots, s_{1t}$, activating exactly one such neighbor per time step.

Since the vertices of $V$ are informed within the original $t$ rounds (now occurring in rounds $2$ through $t+1$), and the $t$ vertices $s_{1i}$ are informed one per round over the same span, all vertices of $G'$ are informed within $t+1$ time steps. Hence, $G'(V', E',s,s_1,t+1)$ is a yes-instance.

\noindent($\Leftarrow$) Suppose $G'(V',E',s,s_1,t+1)$ is a yes-instance.

We claim that in any valid broadcasting protocol for $G'$ completing within $t+1$ steps, the vertex $s_1$ must be informed at time step $1$.

Indeed, each vertex $s_{1i}$ is adjacent only to $s_1$, so $s_1$ is the only vertex capable of informing $s_{1i}$. Since there are $t$ such vertices $s_{11}, \dots, s_{1t}$, and each informed vertex can inform at most one new vertex per time step, informing all $t$ of them requires $t$ full time steps after $s_1$ itself becomes informed. As only $t+1$ time steps are available in total, $s_1$ must already be informed by the end of time step $1$- otherwise the $s_{1i}$'s cannot all be informed within the allotted time.

Since $s_1$ becomes informed at time step $1$ and is only adjacent to $s$ and vertices of $V'$ (via its universal connections), and since $s_1$ must devote all of its remaining $t$ time steps to informing the $s_{1i}$'s, $s_1$ cannot participate in informing any vertex of $V$. Consequently, all vertices of $V$ must be informed entirely through the original source $s$, using the remaining $t$ time steps (time steps $2$ through $t+1$).

This is exactly a valid broadcasting protocol for $G$ completing in $t$ time steps. Hence, $G(V, E,s,t)$ is a yes-instance.
\end{proof}

       \begin{theorem}
           \textsc{Telephone Broadcasting} is Para-NP-hard when parameterized by the dominating set size.
       \end{theorem}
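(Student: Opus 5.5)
The plan is a standard para-NP-hardness argument built on the construction and on Lemma~\ref{lem:yes-instance-equivalence}. We reduce from the general decision problem \textsc{Telephone Broadcast} on an instance $(G,s,t)$, which is NP-complete~\cite{slater1981information}. From it we build $G'$ by adding the universal vertex $s_1$ and the $t$ pendant vertices $s_{11},\dots,s_{1t}$ attached to $s_1$. The new instance keeps the source $s$ and uses deadline $t+1$. We may assume $t\le n-1$, since otherwise $(G,s,t)$ is trivially a YES-instance (a BFS-tree schedule needs at most $n-1$ rounds). Under this assumption $G'$ has $|V|+t+1 = O(n)$ vertices and $O(|E|+n)$ edges, so the construction is polynomial even if $t$ is given in binary.

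The argument has three steps.

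\textbf{Step 1: bounded parameter.} Every vertex of $G'$ other than $s_1$ is adjacent to $s_1$, so $\{s_1\}$ is a dominating set and $\mathrm{dom}(G')=1$. The parameter is therefore the constant $1$ for every produced instance.

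\textbf{Step 2: equivalence.} We invoke Lemma~\ref{lem:yes-instance-equivalence}. When re-checking its backward direction, I would sharpen one point. The $t$ pendants $s_{1i}$ can only be informed by $s_1$, one per round, so $s_1$ must be informed by the end of round $1$. In round $1$ only $s$ is informed, so $s$ itself must call $s_1$ in round $1$. In each of rounds $2,\dots,t+1$, $s_1$ must then call a pendant. Hence $s_1$ never informs a vertex of $V$, and no pendant can inform a vertex of $V$ either, since pendants are adjacent only to $s_1$. Therefore every vertex of $V\setminus\{s\}$ is informed along edges of $G$ by calls made in rounds $2,\dots,t+1$. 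Shifting these rounds down by one gives a valid schedule for $G$ from $s$ of length $t$.

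\textbf{Step 3: conclusion.} Suppose some algorithm ran in time $n^{f(\mathrm{dom})}$. On the produced instances, where $\mathrm{dom}=1$, it would run in polynomial time. Composed with the polynomial reduction, this would decide the NP-complete general problem in polynomial time. So the problem is NP-hard already for domination number $1$, i.e.\ it is para-NP-hard, and it admits no XP algorithm unless $\mathrm{P}=\mathrm{NP}$.

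The only delicate step is the backward direction in Step 2. Specifically, one must rule out that the universal vertex $s_1$ can shortcut broadcasting inside $V$. The counting argument above, which shows that $s_1$ is saturated by the pendants in every round after it is informed, settles this. The forward direction and the polynomiality bookkeeping, including the assumption $t\le n-1$, are routine.
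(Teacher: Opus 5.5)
Your proposal is correct and follows the paper's own argument. It uses the same construction (universal vertex $s_1$ plus $t$ pendants, deadline $t+1$, $\mathrm{dom}(G')=1$) and the same counting argument forcing $s$ to call $s_1$ in round $1$ and $s_1$ to spend every later round on the pendants, and your extra remarks---assuming $t\le n-1$ so the construction stays polynomial even for binary-encoded $t$, and noting that pendants cannot relay into $V$---fill in details the paper leaves implicit.
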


       Further, note that the diameter of the constructed graph is bounded by $2$, therefore, we have-

       \begin{theorem}
           \textsc{Telephone Broadcasting} is Para-NP-hard when parameterized by the diameter of the graph.
       \end{theorem}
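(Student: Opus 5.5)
The plan is to reuse the construction from Lemma~\ref{lem:yes-instance-equivalence}: the graph $G'$ obtained from $G$ by adding a universal vertex $s_1$ and $t$ pendant vertices $s_{11},\dots,s_{1t}$ attached to $s_1$. Lemma~\ref{lem:yes-instance-equivalence} already shows that this is a polynomial reduction with $(G,s,t)$ a YES-instance iff $(G',s,t+1)$ is a YES-instance. Since Telephone Broadcasting is NP-hard on general graphs~\cite{slater1981information}, the only remaining step is to bound the diameter of $G'$ by an absolute constant. Then, for the fixed parameter value $2$, the problem is NP-hard, which is para-NP-hardness by definition.

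For the diameter bound, take any two vertices $u,w\in V(G')$ and split into cases.

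\begin{enumerate}
\item If one of them is $s_1$, then $d(u,w)\le 1$, because $s_1$ is adjacent to every other vertex of $G'$.
\item Otherwise both $u$ and $w$ are adjacent to $s_1$. This covers the pendant vertices $s_{1i}$, whose only neighbour is $s_1$. The path $u\,s_1\,w$ then gives $d(u,w)\le 2$.
\end{enumerate}

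Hence $\mathrm{diam}(G')\le 2$, independently of $G$ and $t$. The construction also adds only $t+1\le |V(G)|$ vertices when $t<n$ (and we may assume $t<n$, since otherwise the instance is trivially YES), so it remains polynomial.

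Combining these, any algorithm solving Telephone Broadcasting in time $n^{f(\mathrm{diam})}$ would run in polynomial time on the instances $(G',s,t+1)$. By Lemma~\ref{lem:yes-instance-equivalence} it would then decide the NP-hard general problem in polynomial time. Therefore the problem is para-NP-hard with respect to diameter and admits no XP algorithm unless $\mathrm{P}=\mathrm{NP}$.

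The main obstacle is not the diameter bound, which is immediate, but the correctness of the reduction, i.e.\ Lemma~\ref{lem:yes-instance-equivalence}. In particular, its backward direction must rule out $s_1$ helping to inform vertices of $V$. That argument relies on $s_1$ being informed no later than round $1$ and then being forced to spend all of its remaining $t$ calls on the $t$ pendants. I would take that lemma as given, checking only that it applies verbatim here: the same instance with the same deadline $t+1$. No new argument is needed beyond the case analysis above.
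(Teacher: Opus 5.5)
Your proposal is correct and matches the paper's argument: the paper reuses the same construction from Lemma~\ref{lem:yes-instance-equivalence} (a universal vertex $s_1$ plus $t$ pendants on $s_1$) and simply notes that $s_1$ being universal forces the diameter to be at most $2$. Your extra remark that one may assume $t<n$, so that adding $t+1$ vertices keeps the reduction polynomial, fills in a detail the paper leaves implicit.
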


\section{An \textsf{XP} Algorithm Parameterized by (vertex) Distance to Path}

Here in this section we have a modulator set of $M$ of atmost $k$ many vertices and a path $P$ of $n$ many vertices. We build an algorithm sequentially for this special graph structure. Initially, for the exposition, we provide a clear algorithm for the case when $k$ (size of the path modulator) is 1 and then use this algorithm (essentially a $\DP$ procedure to solve the general case in \textsf{XP} time). Note that, we provide these algorithms for the case when the information initially resides with a modulator vertex, finally in the end we argue why the other case is not drastically different from this.

Throughout this section, we use $P=\{p_1,p_2,\cdots, p_{|P|}\}$ to denote the path that results when we delete all the modulator vertices and $M$ to denote all the modulator vertices. Hence $E(G)=\{(u,v)|u,v \in M\} \cup \{(u,v)|u,v \in P\} \cup \{(u,v)|u \in M, v \in P\}$ and $V(G)=V(M)\cup V(P)$.

\subsection{Case 1($k=1$)}
Given $M=\{v\}$ and a path $P$ of $n$ vertices. We need to find an optimal broadcasting protocol for $G[M \cup P]$.
Since we have a path and an additional vertex $v$ (that is adjacent to some of the vertices in the path), we may assume that whenever a vertex of the path is informed from $v$ (assuming $v$ as the source vertex), it transmits the information in the next step to an adjacent uninformed vertex on the path .

\noindent \textbf{Covering Analogy:} Note that informing a vertex in the path at a step $i$, will help to transmit the information to the $t-i$ vertices on one side and the $t-i-1$ vertices on the other. Therefore, the problem of asking if the entire path can be informed in $t$ many steps is same as asking if the entire path can be covered by paths of length $1,2,4,6,8,\cdots 2t-2$ by placing their central vertex at one of the neighbors of $v$.\\
Roughly this covering analogy is what we follow for the rest of the results in this section.


\begin{observation}\label{obs:path-replacement}
In the covering analogy, if a collection of subpaths, each of length $i \leq 2t-2$ and centered at a vertex adjacent to $v$, covers the entire path $P$, then replacing each such subpath of length $i$ by a subpath of length at least $2t-1$ (centered at the same vertex) also covers the entire path $P$.

\end{observation}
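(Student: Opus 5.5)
This is essentially a monotonicity statement, and the plan is to prove it by pointwise containment of vertex sets. Fix a covering family: for each neighbor $u$ of $v$ on $P$ that is used, we have a subpath $Q_u$ of $P$ of length $i_u \le 2t-2$ centered at $u$. By hypothesis $\bigcup_u V(Q_u) = V(P)$. For each $u$ we define the replacement $Q'_u$ as the maximal subpath of $P$ centered at $u$ of length $\ell_u \ge 2t-1$, truncated at the endpoints $p_1,p_{|P|}$ if necessary. The whole argument then reduces to the claim $V(Q_u)\subseteq V(Q'_u)$ for every $u$, after which
\[
V(P)=\bigcup_u V(Q_u)\subseteq \bigcup_u V(Q'_u)\subseteq V(P),
\]
so the new family also covers $P$.

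\textbf{Step 1 (indexing).} Make the notion of ``centered'' precise. Write $u=p_j$. A subpath centered at $p_j$ of length $\ell$ is $\{p_r : j-\alpha \le r \le j+\beta\}\cap V(P)$ with $\alpha+\beta=\ell$ and $|\alpha-\beta|\le 1$; the asymmetric split of $t-i$ on one side and $t-i-1$ on the other in the covering analogy is exactly the case $|\alpha-\beta|=1$. Since the side on which the extra vertex is placed is a free choice in the analogy, I allow $Q'_u$ to put its extra vertex on the same side as $Q_u$ does.

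\textbf{Step 2 (containment).} For $Q_u$ with parameters $(\alpha,\beta)$ and $\alpha+\beta=i_u$, and $Q'_u$ with parameters $(\alpha',\beta')$, same orientation and $\alpha'+\beta'=\ell_u\ge 2t-1>i_u$, balancedness gives $\alpha'\ge\alpha$ and $\beta'\ge\beta$. The index interval of $Q'_u$ therefore contains that of $Q_u$. Intersecting both intervals with $[1,|P|]$ preserves this containment, which handles truncation at the ends of the path. This yields $V(Q_u)\subseteq V(Q'_u)$.

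\textbf{Step 3 (conclusion).} Take the union over $u$ as in the displayed chain above.

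\textbf{Main obstacle.} The only delicate point is Step 2 when one of the two subpaths is balanced and the other is unbalanced. We must check that when the length grows by at least one, neither side shrinks. If $\alpha+\beta=i$ and $\alpha'+\beta'=i'$ with $i'\ge i+1$ and both splits near-balanced with the same orientation, then $\alpha'\ge\lfloor i'/2\rfloor\ge\lceil i/2\rceil\ge\alpha$, and symmetrically $\beta'\ge\beta$. This inequality is the one I would verify carefully, by checking the parities of $i$ and $i'$ separately. If the orientation convention of the paper instead fixes the long side, I would note that $i'\ge i+1$ still suffices, since $\lfloor i'/2\rfloor\ge\lceil i/2\rceil$ holds whenever $i'\ge i+1$.
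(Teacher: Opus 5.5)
Your argument is correct and is the monotonicity reasoning the paper takes for granted; the observation is stated there without proof. A near-balanced subpath with the same center and strictly larger length contains the original one, since $\lfloor i'/2\rfloor \ge \lceil i/2\rceil$ whenever $i' \ge i+1$, so the union of the replacements still covers $P$. Two small points do not affect validity. The paper measures length in vertices rather than edges ($2t-2i$ vertices for a center informed at step $i$), but the hypothesis $\le 2t-2$ and the replacement bound $\ge 2t-1$ use the same unit, so your proof goes through unchanged. Also, the same-orientation assumption in Step 2 is not needed, as your own final inequality already shows.
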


To give a 2-factor approximation algorithm, first note-

\begin{observation}\label{2factobser}
    Informing a path vertex in the first $t$ (out of the available $2t=2T_{\text{opt}}$) time would result in informing at least $t$ vertices on either side of the vertex. Therefore, we have $t$ many paths of length at most $2t-1$, each of whose central vertex can be placed on one of the neighbors of $v$. Therefore, having twice the amount of optimum time ($2t$) guarantees us to have the same number of sub paths (as in the case where the time available was $t$). Moreover the length of each of these paths can be assumed to be at least $2t-1$
\end{observation}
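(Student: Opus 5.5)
We prove Observation~\ref{2factobser} by fixing an optimal protocol for $G[M\cup P]$ with source $v$ and reading off its covering structure, then stretching it to a $2t$-round schedule with $t=T_{\text{opt}}$.

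\textbf{Step 1: read off the optimal covering.} In the optimal schedule, $v$ makes at most $t$ calls. Let $q_1,\dots,q_r$ be the path vertices that $v$ informs directly, with $r\le t$. Vertex $q_j$ is called in some round $i_j\le t$ and is adjacent to $v$. Every other path vertex is informed along $P$ from one of these seeds, since the only vertex off the path is $v$. A vertex informed along the path from the seed $q_j$ lies within distance $t-i_j$ of $q_j$. Hence the subpaths of $P$ centred at the $q_j$ with radius $t-i_j$ cover $P$. This is the covering analogy stated above.

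\textbf{Step 2: build the $2t$-round schedule.} In the new schedule, $v$ calls $q_1,\dots,q_r$ in rounds $1,\dots,r$, so every seed is informed by round $t$.

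\textbf{Step 3: spread from each seed.} From then on, each informed path vertex forwards the message to its uninformed path neighbours on either side, using one round per side, as in the setup of Case~1. A seed informed at round $i\le t$ has at least $2t-i\ge t$ remaining rounds. Over those rounds the message moves outward in both directions. We want to claim that the front on each side advances by one vertex per round, which would reach at least $t-1$ vertices on each side. That gives a covered subpath of length at least $2t-1$ centred at $q_j$.

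\textbf{Step 4: conclude.} By Observation~\ref{obs:path-replacement}, enlarging each subpath to length at least $2t-1$ about the same centre still covers $P$. Since each original radius satisfies $t-i_j\le t-1$, the original covering from Step~1 is dominated by the new one. So the new schedule informs all of $P$ within $2t$ rounds.

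\textbf{Main obstacle.} The delicate step is the claim in Step~3 that the front advances by one vertex per round in both directions at once. The seed itself spends one round on each side. After that, each frontier vertex has only one uninformed path neighbour, namely the next one outward. I would make this precise by induction on the round: after the seed's two calls, both fronts move outward by exactly one vertex per round. This yields reach $\ge t-1$ on each side, since the radii in Step~1 are at most $t-1$.
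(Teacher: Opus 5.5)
Your proposal is correct and is essentially the paper's own argument, which the observation only states in words: seeds called within the first $t$ of the $2t$ rounds each have at least $t$ rounds left, so each covers a subpath of at least $2t-1$ vertices (radius $t-1$), and this dominates the optimal covering, whose radii are $t-i_j\le t-1$. The one place to tighten is Step~3: with several seeds the fronts do not advance ``exactly'' one vertex per round, because they stop when they meet and a seed may already be informed along $P$ before $v$ calls it. So either state the induction as ``every path vertex at distance $d\ge 1$ from a seed informed by round $j$ is informed by round $j+d+1$'', or assign each path vertex to its nearest seed and let each seed spread only inside its own interval, which removes all interference.
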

Therefore, we have, from the previous observations, that the entire path can be covered by informing some vertices in the first $t$ steps out of the total $2t$ steps. To find the exact vertices who are informed in the first $t$ steps, we use a simple greedy procedure. \\
    
We proceed by placing a path of length $2t-1$ centered at the rightmost neighbor of $v$ that allows for all the uninformed vertices to the left of $v$ to lie within $t-1$ vertex distance. We do this iteratively by placing at most $t$ many paths of length $2t-1$. This completes our greedy algorithm.


We now argue that this greedy procedure is correct and that it achieves the claimed factor. Let $T_{\text{opt}}$ be the optimal broadcast time, i.e. the fewest steps in which $v$ can inform the whole path. By Observations~\ref{obs:path-replacement} and \ref{2factobser}, we may assume every path we use in the covering has length $2T_{\text{opt}}-1$. In other words, once a neighbor of $v$ is informed, it can reach up to $t^\ast$ vertices on each side.
    
The key point is that greedy choice is always safe.
\begin{lemma}[Exchange]\label{lem:exchange}
  Considering the optimum placement of the sub path (where the total available time was $T_{\text{opt}}$), our greedy procedure covers a super set of vertices of the path when considering the same number of rightmost placed sub paths.
\end{lemma}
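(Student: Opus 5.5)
We argue by a standard greedy-stays-ahead induction on the number $j$ of placed subpaths, scanning the path from left to right. Index the path as $p_1,\dots,p_{|P|}$ and record each placed subpath by its center, a neighbor of $v$, and its radius $r=T_{\text{opt}}-1$. By Observations~\ref{obs:path-replacement} and \ref{2factobser}, all subpaths used by the greedy have length $2T_{\text{opt}}-1$. We may also replace every subpath of the optimal covering by one of the same length centered at the same vertex, and it still covers $P$. So both coverings consist of intervals of identical length, and each covering is described only by its sorted list of centers. Let $c_1<c_2<\dots$ be the optimal centers and $g_1<g_2<\dots$ the greedy centers. For each $j$, let $R^{\mathrm{opt}}_j$ and $R^{\mathrm{gr}}_j$ be the largest index $\ell$ such that $p_1,\dots,p_\ell$ are covered by the first $j$ subpaths of the respective covering. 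The lemma's claim, read as covering a superset of the prefix using the same number of subpaths, becomes $R^{\mathrm{gr}}_j\ge R^{\mathrm{opt}}_j$ for every $j$.

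We prove this by induction on $j$. For $j=0$ both prefixes are empty. Assume $R^{\mathrm{gr}}_{j-1}\ge R^{\mathrm{opt}}_{j-1}$ and write $u=R^{\mathrm{gr}}_{j-1}+1$ for the leftmost uncovered vertex under the greedy. Discard optimal subpaths that do not extend the covered prefix; this only shortens the optimal list. Then the optimal $j$-th interval must contain $p_{R^{\mathrm{opt}}_{j-1}+1}$, so $c_j\le R^{\mathrm{opt}}_{j-1}+1+r\le u+r$. The greedy chooses $g_j$ as the rightmost neighbor of $v$ with $g_j\le u+r$, i.e. within distance $r$ to the right of $u$. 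Since $c_j$ is also a neighbor of $v$ satisfying this constraint, $g_j\ge c_j$.

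There are now two cases. If $c_j+r\le R^{\mathrm{gr}}_{j-1}$, then $R^{\mathrm{opt}}_j\le\max(R^{\mathrm{opt}}_{j-1},c_j+r)\le R^{\mathrm{gr}}_{j-1}\le R^{\mathrm{gr}}_j$. Otherwise, we must check that the greedy interval $[g_j-r,g_j+r]$ contains $u$, so that the covered prefix remains contiguous. We have $g_j\le u+r$. We also need $g_j-r\le u$, and this holds because some valid candidate exists at all: $c_j$ itself reaches $u$ whenever $c_j+r\ge u$, since $c_j-r\le R^{\mathrm{opt}}_{j-1}+1\le u$. Hence $R^{\mathrm{gr}}_j=g_j+r\ge c_j+r\ge R^{\mathrm{opt}}_j$.

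The main obstacle is the prefix-contiguity bookkeeping. The optimal covering need not place its intervals in left-to-right order, and it need not cover a contiguous prefix with its first $j$ subpaths. We handle this by sorting the centers and discarding redundant intervals before starting the induction. After that normalization, the claim at $j=$ (number of optimal subpaths) gives that the greedy covers all of $P$ with at most $T_{\text{opt}}$ subpaths. These subpaths are seeded in the first $T_{\text{opt}}$ rounds and flooded in the remaining $T_{\text{opt}}$ rounds, which yields the $2$-factor.
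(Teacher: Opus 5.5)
Your argument is correct and is essentially the paper's own greedy-stays-ahead induction on the number of placed subpaths (the paper compares the greedy's $(l+1)$-th subpath with one centered at the optimal's $(l+1)$-th center and then appeals to the rightmost choice); normalizing the optimal subpaths to equal length with sorted centers just makes the paper's informal prefix bookkeeping rigorous. One small slip: in your second case the inequality still to be checked is $g_j+r\ge u$ (the condition $g_j-r\le u$ is just $g_j\le u+r$ restated), and it follows at once from $g_j\ge c_j$ and $c_j+r\ge u$.
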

\begin{proof}
Consider $k$ to be the number of rightmost placed paths considered.

We prove that the lemma is true for each $k\leq T_{\text{opt}}$. To this end, we use induction.

    $k=1$ is easy to prove since the length of each of the paths is longer in our ($2T_{\text{opt}}$ setting).

    We assume that the lemma statement is true when $k=l$, we prove the lemma when $k=l+1$. To do this, since we know that the length of the path covered by the first $l$ leftmost sub paths in our case is a super set of what is covered in the optimal case, we could just place the center of $l+1^{\text{th}}$ sub path of length $2T_{\text{opt}}-1$ at the same vertex where it is placed in the optimal case. Since the length of the sub path in the optimal case is strictly smaller than $2T_{\text{opt}}-1$, we are covering a super set (say $S_2$) of what is covered in the optimal case (say $S_1$ ). Further note that in the greedy step, we are choosing the rightmost vertex to place the center of $l+1^{\text{th}}$ path, leaving no uninformed vertex on the left side, this guarantees us that we are covering (in our greedy procedure) a super set $S_3$ of $S_2$. Hence, the lemma statement is true by induction.
    \end{proof}

\begin{theorem}\label{thm:2-factor-approx}
    There is a 2-factor approximation for broadcasting the information from $v$ to all the vertices in the graph.
\end{theorem}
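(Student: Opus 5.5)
We prove Theorem~\ref{thm:2-factor-approx} for the case $k=1$ with source $v$, where $G=G[\{v\}\cup P]$. The algorithm guesses $T_{\text{opt}}$ by trying $T=\lceil\log_2 n\rceil,\dots,n$ in increasing order, runs the greedy procedure for each $T$, and returns the schedule produced by the first $T$ for which greedy succeeds. Let $t^\ast=T_{\text{opt}}$. The proof has three steps, carried out in order: (i) greedy succeeds at $T=t^\ast$; (ii) any successful greedy placement yields a valid schedule of length at most $2T$; (iii) the first successful $T$ satisfies $T\le t^\ast$, so the output has length at most $2t^\ast$.

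For (i), fix an optimal protocol. By the covering analogy, it informs the neighbours $u_1,\dots,u_r$ of $v$ directly from $v$ in distinct rounds $1,\dots,r$, with $r\le t^\ast$. The subpaths centred at these vertices, of the lengths described in the covering analogy (all at most $2t^\ast-2$), cover $P$. Order these optimal subpaths from left to right. By Lemma~\ref{lem:exchange}, after placing $\ell$ greedy subpaths of length $2t^\ast-1$, the covered prefix of $P$ contains the prefix covered by the first $\ell$ optimal subpaths. With $\ell=r\le t^\ast$, the greedy procedure covers all of $P$ using at most $t^\ast$ subpaths. Hence it never runs out of subpaths at $T=t^\ast$.

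For (ii), convert the greedy placement at value $T$ into an explicit schedule. In rounds $1,\dots,r'$ with $r'\le T$, $v$ calls the chosen centres $c_1,\dots,c_{r'}$ in any order, so every centre is informed by round $T$. Each centre $c_j$ then spreads along the path in both directions. Its neighbours are reached alternately, but a vertex at distance $d$ along the path is informed within about $2d$ rounds, and in particular the $T-1$ vertices on each side, which make up the greedy subpath of length $2T-1$, are informed within the remaining $T$ rounds, because each newly informed path vertex keeps forwarding outward in one direction. This step is the main obstacle. The bound needs care: a vertex informed at round $T$ must reach both sides, giving distance $T$ on one side and $T-1$ on the other, which matches Observation~\ref{2factobser}. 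Where greedy subpaths overlap, vertices are simply informed earlier. If greedy places subpaths of nominal length $2T-1$ rather than the full $2T$ reach, the argument is still safe, since it uses less than the available reach. Overall, every vertex of $P$ is informed by round $2T$.

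For (iii), greedy is monotone in $T$: longer subpaths can only cover more, by Observation~\ref{obs:path-replacement}. The search over $T$ is increasing and, by (i), succeeds at $t^\ast$, so the first successful value satisfies $T\le t^\ast$. By (ii) the returned schedule then has length at most $2T\le 2t^\ast$. Each greedy run takes polynomial time, which gives the claimed 2-factor approximation for $k=1$.
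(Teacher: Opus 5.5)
Your proposal is correct and follows essentially the paper's route: the optimal protocol gives at most $T_{\text{opt}}$ covering subpaths of radius at most $T_{\text{opt}}-1$, greedy places subpaths of length $2T_{\text{opt}}-1$, and the exchange lemma shows greedy covers $P$ within that budget. Your search over $T$ and your explicit conversion of the cover into a schedule of length $2T$ fill in steps the paper leaves implicit. The one slip is the remark that a vertex at distance $d$ is informed ``within about $2d$ rounds''; it is actually $d$ or $d+1$ rounds after its centre, but your next count ($T$ vertices on one side, $T-1$ on the other) is correct and is what the argument actually uses.
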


    We now sketch how this idea can be extended when the path modulator size is $k$. For this, we provide an \textsf{XP}-algorithm. Essentially in the extra ``\textsf{XP}" time, we guess ``how" the modulator vertices are informed. Once this is done, we provide a similarity sketch between the case where $k=1$ and when $k$ is arbitrary.

\subsection{{Case 2($k\ge 2$)}}
In this section, we use the same analogy of covering the path using subpaths as in the previous case. Further, for the case where $k=1$, 
the center of all the subpaths were informed only through the single modulator vertex.
However, in the case where $k\geq 2$, the sub paths can belong to different vertices in the modulator depending on who informed the center vertex of the sub paths. To distinguish a sub path belonging to a vertex $m_i$ in the modulator from a sub path belonging to a vertex $m_j$ in the modulator, we use the notation - $SP(m_i)$ to mean all the sub paths that belong to $m_i$ (i.e. resulting from $m_i$ informing a vertex in the path $P$ which becomes the center vertex in the corresponding sub path). 

In this case, an information can be transmitted from modulator to path vertices or from a path to modulator vertices.
This is the crucial difference between the two cases (otherwise, the algorithm would have been ``almost" identical). This now forces us to consider some paths that are centered at a vertex and is used to cover a vertex in the path that will in turn transmit the information to a modulator vertex. This change forces us not to assume that all the paths are of the same length $2t-1$. However, note that there can be at most $k$ many such paths (possibly belonging to different vertices in the modulator). Even an exhaustive search over all the cases of placement of these ``type'' of paths would be bounded by $n^{f(k)}$, i.e. \textsf{XP}.


\noindent\textbf{Guessing all necessary information:} For each modulator vertex $m_i$, we guess a tuple describing how and when it first receives the information, as follows.
First, we guess the time $t_i^{\rightarrow} \in \{1, \dots, t\}$ at which $m_i$ is informed, which lies within the first $t$ steps. Second, we guess the source of this information: $m_i$ may be informed either by a vertex of the path $P$ or by another modulator vertex, and we guess which of the two cases occurs. Third, depending on this choice, we guess the informing vertex $v_i$ itself, so that $v_i \in V(P)$ in the first case and $v_i$ is a modulator vertex in the second. Fourth, we guess the vertex $vm_i$ in the modulator, whose subpath covers $v_i$. Fifth, we guess the vertex $u_i \in V(P)$ that lies at the center of the subpath associated with $vm_i$. Finally, we guess the time $t_i^{\leftarrow}$ at which $u_i$ is first informed by $vm_i$.

   Once this is done, consider the guess that corresponds to the optimal case. According to the guess, we assign a sub path belonging to the guessed modulator vertex $vm_i$ at the guessed vertex $u_i$ and transmit the information to either direction, finally reaching $v_i$ by the time $t_i^{\rightarrow}-1$ and transmitting the information to $m_i$ at the time $t_i^{\rightarrow}$. Now, by loosing the same number of sub paths (across all the $SP(m_i)'s$) to inform the modulator vertices, we have also covered a super set of vertices in the path $P$ when compared to the covering in the optimal case restricted to the paths centered at $u_i$'s. Now we are ready to start the $\DP$ procedure to cover the remaining vertices of the path $P$. 
   $$\DP(n_1,n_2,\cdots, n_k, p_i)=1$$

   If the vertices till $p_i$ (in $P$, the vertices from $p_1$ till $p_i$) can be covered by $n_j$ many sub paths of length $2t-1$ belonging to $SP(m_j) $  $ \forall j\in[k]$, where $n_j\leq |SP(m_j)|$-\{the paths used to inform the modulator vertices\} (the reduction in the amount is basically because, the reduction amount is exactly the many times we used the modulator vertex $m_j$ to inform something in the path when we guessed and the remaining amount is the number of times we can inform an un informed modulator vertex within the first $t$ time). The $\DP$ entry is 0 otherwise. 
   
   The $\DP$ entries are updated by checking if a ``certain'' number of paths belonging to ``certain'' many modulator vertices is enough top cover the path $P$ till a ``certain'' vertex, to achieve which, we essentially check (consult the previously stored information/$\DP$ entry) if it was possible to cover the path till a previous vertex using all but one available sub paths and then cover the excess with the additional path. We now provide the formal $\DP$ update step-
   $$\DP(n_1,n_2,\cdots, n_k, p_i)=1$$
   $$\iff$$

    \begin{enumerate}
    \item ( $p_i$ already covered by the guessing phase) the interval $[p_1,p_i]$ is already contained in the coverage produced by the (at most) $k$ sub-paths used in the initial step to inform the modulator vertices, using at most $n_j$ paths of $SP(m_j)$ for each $j$ from that step, \emph{or}

    \item There exists $j\in[k]$, $l<i$ and a sub-path in $SP(m_j)$ such that
    $$
    \DP(n_1,\dots,n_{j-1},\,n_j-1,\,n_{j+1},\dots,n_k,\;p_{i-l})=1,
    (l<i),$$
     such that placing the sub path in $SP(m_j)$ centered at some adjacent vertex  of $m_j$ (located between $p_{i-l-2t}$ and $p_{i}$) would result in covering the portion of $P$ from $p_1$ and $p_i$.
\end{enumerate}

For the runtime, since we are making $n^{5}$ many guesses for each vertex in the modulator and finally solving a $DP$ that has $n^k$ entry at each of the $|p|$ vertices (which requires iterating over $n$ choices of $l$ and $j$), we have the runtime bounded by $n^{\mathcal{O}(k)}$. Hence, we have an \textsf{XP} algorithm.

\begin{theorem}
    \textsc{Telephone Broadcasting} admits a 2 factor approximation when parameterized by the distance to path parameter $k$ with runtime $n^{\mathcal{O}(k)}$.
\end{theorem}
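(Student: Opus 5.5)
We prove the statement by combining the XP guessing phase and the $\DP$ of Case~2 with the doubling argument behind Theorem~\ref{thm:2-factor-approx}. Let $t=T_{\text{opt}}$ and fix an optimal protocol; the algorithm may use $2t$ rounds. It iterates over all candidate values $t\le n$, so we may assume $t$ is the optimum. For each modulator vertex $m_i$ we enumerate the tuples $(t_i^{\rightarrow},\text{type},v_i,vm_i,u_i,t_i^{\leftarrow})$ from the paper. There are $n^{O(1)}$ choices per vertex, hence $n^{O(k)}$ guesses overall. We focus on the guess that matches the optimal protocol.

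Under the correct guess we replay the informing of the modulator exactly as the optimum does, which takes at most $t$ rounds. We reserve the at most $k$ ``connector'' subpaths centred at the $u_i$ that carry the message to the $v_i$. Every remaining call that $m_j$ makes to the path in the optimum happens at some time $\tau\le t$. In the doubled schedule we let $m_j$ make the same calls at the same times; each such centre then has at least $2t-\tau\ge t$ remaining rounds and so covers at least $t$ vertices on each side. This gives at least as many subpaths in each $SP(m_j)$ as in the optimum, each of length at least $2t-1$. By Observation~\ref{obs:path-replacement} and Observation~\ref{2factobser} they can be normalised to length exactly $2t-1$. We apply the same normalisation to the connector subpaths after the moment each reaches its $v_i$, which can only enlarge their coverage. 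Consequently, the optimum yields a cover of $P$ that uses at most $n_j$ normalised subpaths from $SP(m_j)$ for every $j$, together with the connector coverage, and all centres are neighbours of the respective $m_j$.

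Next we argue that the $\DP$ detects this cover. The argument generalises Lemma~\ref{lem:exchange} to $k$ colours. Order the optimum's non-connector subpaths from left to right. By induction on $i$, $\DP(n_1,\dots,n_k,p_i)=1$ whenever the prefix $p_1,\dots,p_i$ is covered by the connector coverage together with a left-to-right prefix of the cover that uses $n_j$ subpaths of colour $j$. The inductive step removes the rightmost subpath of colour $j$, which is exactly transition~2 with the window condition on where the centre lies, or the prefix falls under base case~1. The algorithm accepts when $\DP(N_1,\dots,N_k,p_{|P|})=1$, where $N_j$ is the budget left for $m_j$ after the guessing phase.

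Conversely, an accepting table entry yields an actual protocol. Each $m_j$ calls its chosen centres in distinct rounds among its available slots in $[1,t]$. Each centre then floods outward for the remaining rounds, which is at least $t-1$ steps to each side, and the connectors deliver the message to the modulator at the guessed times. Hence everything is informed within $2t$ rounds. The running time is $n^{O(k)}$ guesses times $n^{k}\cdot|P|\cdot nk$ table work, which is $n^{O(k)}$.

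The main obstacle is consistency in the converse direction. A vertex $m_j$ may call only after $t_j^{\rightarrow}$, and its calls must occupy distinct rounds that avoid the rounds it spends informing other modulator vertices. Moreover, a path vertex already informed by a connector flood must not be needed as a fresh centre at a conflicting time. To handle this, I would define $N_j$ precisely as the number of free rounds of $m_j$ in $[t_j^{\rightarrow}+1,t]$ after removing its guessed modulator calls. Then I would check that assigning the $\DP$-chosen centres to these free rounds in any order still leaves at least $t$ rounds for each centre to flood. If a chosen centre is already informed, it is simply skipped, since its coverage is then provided anyway. The source-on-path case is handled by treating the source as a connector guessed at time $0$.
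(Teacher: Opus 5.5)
Your proposal is correct and follows essentially the paper's route. You enumerate the $n^{O(1)}$ tuples $(t_i^{\rightarrow},\text{type},v_i,vm_i,u_i,t_i^{\leftarrow})$ per modulator vertex, reserve the connector subpaths, use the doubling argument so that every remaining centre seeded by round $t$ can be taken to have length $2t-1$, and decide coverage of $P$ with the $\DP(n_1,\dots,n_k,p_i)$ table. Your explicit converse direction is more careful than the paper, which only argues from the correct guess: budgets $N_j$ as free rounds of $m_j$ in $[t_j^{\rightarrow}+1,t]$ (these must also exclude the rounds $t_i^{\leftarrow}$ in which $m_j$ seeds connector centres), skipping already-informed centres, and checking or simulating the guessed connector timings.
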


\section{An \textsf{FPT} Additive Approximation Algorithm Parameterized by (vertex) Distance to Clique}

We fix a clique modulator $M$ with $|M|=k$ and let $C=V\setminus M$, $|C|=n$.

\subsection{Structural observation}
Because $C$ is a clique, once $j\geq$ 1 of its vertices are informed, the remaining vertices of $C$ can all be informed within $\lceil \log_2(n-k)\rceil$ further rounds by doubling. Hence the \emph{only} nontrivial bottleneck is getting information \emph{into} and \emph{out of} the small side $M$. In particular:

\begin{observation}
\label{obs:few-actors}
In any optimal broadcast protocol, at most $k$ vertices of $C$ are ever
directly involved in informing a vertex of $M$ (at most one informer per
vertex of $M$), and at most $k$ vertices of $M$ are ever used to transmit
information back into $C$ (by informing a not-yet-informed neighbor in $C$).
\end{observation}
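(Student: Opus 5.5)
The observation has two halves, and each is a counting argument over a single broadcast protocol. I would use the broadcast-tree view of a protocol, that is, the spanning tree $T$ rooted at $s$. Every vertex other than $s$ has exactly one parent in $T$: the vertex that first informed it. The statement should be read as being about first-informing calls. Calls to already-informed vertices are redundant, and any optimal protocol can be normalized so that every call informs a new vertex without increasing its length.

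\textbf{First half.} Each vertex $u\in M$ with $u\neq s$ is informed exactly once, by its parent in $T$, so the calls that inform vertices of $M$ have at most $|M|=k$ distinct informers. In particular, the set of vertices of $C$ that act as the parent of some vertex of $M$ has size at most $k$. The map $u\mapsto \mathrm{parent}(u)$ restricted to $M$ has image of size at most $k$, so at most $k$ vertices of $C$ are involved, and at most one informer is used per vertex of $M$.

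\textbf{Second half.} This is immediate, since the vertices of $M$ that transmit into $C$ form a subset of $M$, which has size $k$. The substantive quantitative fact behind it, which is what the later algorithm actually uses, is a bound on the number of \emph{edges} of $T$ from $M$ to $C$. For that I would add one remark: a vertex $m\in M$ informed at round $r_m$ makes at most $t-r_m$ first-informing calls into $C$. Hence the $M\to C$ edges of $T$ are described by at most $k$ senders, each with a count bounded by $t$.

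\textbf{Main obstacle.} The only delicate point is the normalization that lets us read the protocol as a tree with unique informers. I would justify it as follows. If two informed vertices call the same uninformed vertex in the same round, delete one of the calls. If a call targets an already-informed vertex, delete it. Neither deletion changes the set of vertices informed at any round, so the result is still an optimal protocol, and its first-informer relation is a tree by the broadcasting model in the preliminaries. With this in place, both counts above are just pigeonhole on parent pointers.
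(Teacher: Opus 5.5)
Your proof is correct and matches the reasoning the paper takes for granted, since it states the observation without proof. Each vertex of $M$ has a unique first informer, which bounds the informers in $C$ by $k$, and the second half follows trivially from $|M|=k$. Your normalization step is harmless but not needed: the paper's model already forbids an uninformed vertex from receiving from two informers, and it defines a protocol through a rooted spanning tree.
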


This observation is what lets us avoid branching over the whole clique: we
only need to \emph{guess} and \emph{order} a set of size $O(k)$ inside $C$,
and then use color coding to locate, among the $\binom{n}{O(k)}$ candidates,
a correctly ordered and positioned witness set, all in \textsf{FPT} time.

\subsection{High-level description of the algorithm}

We first treat the case where a single modulator vertex holds the initial
information; we then show how the general case reduces to this one. We
further split the algorithm into two regimes, based on the relation between
$n$ (the clique size) and $k$ (the modulator size):

\begin{itemize}
    \item \textbf{Case 1:} $k \cdot 2^k < n$
    \item \textbf{Case 2:} $k \cdot 2^k \geq n$
\end{itemize}

Part of the construction is common to both regimes, so we describe it first.

Consider the modulator set $M$. We guess a partition
$\mathcal{P} = \{P_1, P_2, \dots, P_{|\mathcal{P}|}\}$ of $M$ according to the
round in which each vertex is informed: two modulator vertices lie in the
same part $P_i$ iff they are informed in the same round.

We also guess the subset $S \subseteq M$ of modulator vertices that are
informed directly by a vertex of $C$. For every vertex of $M \setminus S$, we
additionally guess the specific modulator vertex that informs it (since such
vertices are informed from within $M$).

This completes the common part of the algorithm. We now describe each
regime.

\subsubsection{Algorithm for Case 1}

Here $k \cdot 2^k \leq n$, i.e. $\log k + k \leq \log n \leq T_{\text{opt}}$ (the optimum time). So there is ample
time to first inform some $k$ vertices of the clique and then use them to
inform all of $M$.

\begin{observation}\label{strtforward}
Once a single vertex of $C$ is informed, there is a polynomial-time procedure
that informs every vertex of $M$, together with at most $k$ further vertices of $C$.
\end{observation}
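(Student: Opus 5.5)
The plan is a direct two-stage procedure. Let $c\in C$ be the single informed clique vertex. In the first stage we do not use $M$ at all: since $C$ is a clique, we double inside $C$ until a set $A\subseteq C$ of informed vertices with $|A|\ge k$ is reached (if $|C|<k$ we simply inform all of $C$ in this way). This takes at most $\lceil\log_2 k\rceil$ rounds and informs at most $k$ vertices of $C$ beyond $c$, up to the rounding. If one wants exactly at most $k$ extra clique vertices, we stop the doubling as soon as the count reaches $k$ and leave the surplus informers idle in the last round.

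The second stage informs $M$ from the informed set. We work in rounds, maintaining the informed set $I$ (the vertices of $A$ together with the modulator vertices informed so far). In each round we compute a maximum matching in the bipartite graph between $I$ and the uninformed vertices of $M$, using only edges of $G$, and we inform the matched vertices. By Observation~\ref{obs:one-round-matching}, every such round is a valid broadcast round. Each round takes polynomial time via Hopcroft--Karp.

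For progress, we use connectivity of $G$. As long as some vertex of $M$ is uninformed, there is an edge from $I\cup(C\setminus I)$ into it. To avoid depending on uninformed clique vertices, we modify the first stage. We pick, for each component of $G[M]$, one vertex $q$ with a neighbour $c_q\in C$, and we direct the doubling inside $C$ so that the informed clique vertices are exactly these $c_q$'s. This works because the clique allows any informed vertex to call any uninformed one. There are at most $k$ such $c_q$'s.

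After that, every uninformed vertex of $M$ lies in a component of $G[M]$ that contains an informed-neighbour path. Hence each maximum-matching round informs at least one new modulator vertex, and $M$ is fully informed after at most $k$ further rounds. In total, at most $k$ additional vertices of $C$ are informed: the chosen $c_q$'s. All computations (choosing the $c_q$'s, the doubling schedule, the matchings) are polynomial.

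The main obstacle is precisely this progress argument, namely making sure the matching rounds never stall. The fix is to seed $C$ with neighbours of every component of $G[M]$ before the matching phase. The observation only asks for polynomial time and a bound of $k$ on the number of extra clique vertices, not for the round count, so no optimality argument is needed here; the round count $O(\log k+k)$ is what Case 1 later charges against $T_{\text{opt}}\ge\log k+k$.
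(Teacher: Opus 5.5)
Your final (seeded) procedure is correct, but it takes a different route from the paper's. The paper relies on the guesses made in the common part of the algorithm. For each round-class $P_i$ of the guessed partition $\mathcal{P}$, it fixes a matching $M_i$ saturating $S\cap P_i$ into $C$. It first informs the at most $|S|\le k$ matched clique partners through the clique. It then informs $P_1,P_2,\dots$ one part per round, with the vertices of $P_i\setminus S$ called by their guessed modulator informers. This is valid only ``provided the initial guesses are correct'', and it mirrors the optimal schedule.

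You instead seed one $C$-neighbour $c_q$ per component of $G[M]$, and then run nonempty matchings from the informed set into the uninformed part of $M$. You should say explicitly why $c_q$ exists: a component of $G[M]$ with no neighbour in $C\neq\emptyset$ would be a connected component of $G$. Progress holds because any component still containing an uninformed vertex has one of two things: either $q$ is uninformed and adjacent to the informed $c_q$, or connectivity of the component gives an informed--uninformed edge inside it.

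Your approach gives an unconditional, guess-free procedure using at most $\lceil\log_2(k+1)\rceil+k$ rounds. That is all Case~1 charges, so the guesses are in fact unnecessary there. The paper's version gains a schedule tied to the optimum's structure, finishing $M$ in $|\mathcal{P}|$ rounds after seeding. That kind of mimicking is what Case~2 needs, since there is no slack over $T_{\text{opt}}$ and your up-to-$k$ greedy rounds would not be affordable.

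Drop your first-paragraph variant (doubling to an arbitrary $A\subseteq C$ with $|A|\ge k$). Such an $A$ may have no neighbours in $M$, so it does not guarantee progress; only the seeded version proves the statement.
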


For each part $P_i$, fix a matching $M_i$ between $S \cap P_i$ and vertices
of $C$ that saturates $S \cap P_i$. Once every clique vertex used in one of
these matchings is informed, the modulator set can be informed in the next
$|\mathcal{P}|$ rounds by processing the parts $P_1, P_2, \dots$ in order,
one per round: within round $i$, the vertices of $S \cap P_i$ are informed
via the matching $M_i$ from their matched partners in $C$, and the remaining
vertices of $P_i$ are informed from the modulator vertices that were guessed
to inform them (Provided the initial guesses are correct). 

The full broadcasting procedure is:

\begin{enumerate}
    \item Inform two vertices of the clique in the first 2 rounds.
    \item The first of these two vertices runs the procedure above
          (Observation~\ref{strtforward}).
    \item The second vertex greedily propagates the information to all
          unmatched vertices of $C$ (taking time $O(\log n)$).
\end{enumerate}

Since the two branches run in parallel, the whole graph is broadcasted in time
$\max(\log n, \log k +k) + 2 \leq T_{\text{opt}} + 2$. This gives a $+2$
approximation in this regime. 

\subsubsection{Algorithm for Case 2}

Here $n \leq k \cdot 2^k$, so the time required to broadcast the whole graph satisfies $\log n + k \leq 2k + \log k = O(k)$. This means we can afford to guess the full partition of $M$ \emph{and} the exact round at which each part is informed.

Before describing the algorithm, note that every vertex of $C$ can be
assigned uniquely to a vertex of $M$: iteratively, a clique vertex inherits the modulator assignment of whichever vertex first informed it and if a vertex in the clique is informed by a modulator vertex, it is assigned to the same modulator vertex.This partitions $C$ into sub-cliques, one or more per modulator vertex (a single modulator vertex may seed several sub-cliques if it informs multiple clique vertices that later inform others clique vertices).

We guess the following:
\begin{enumerate}
    \item For each vertex of $M \setminus S$, we are guessing, which earlier-informed modulator vertex informs it.
    \item A partition of $S$ into subsets $S_1, \dots, S_d$, where each $S_i$
          is informed (in possibly different rounds) by a single vertex
          $s_i \in C$. Since two modulator vertices informed in the same
          round cannot share an informer, we require
          $|S_i \cap P_j| \leq 1$ for all $i, j$.
    \item Which of the vertices in $S_{\text{direct}} = \{s_i : i \in [d]\}$ are informed directly by a modulator vertex, rather than by another clique vertex. Call this subset $M_{S_{\text{direct}}} \subseteq S_{\text{direct}}$.  
    \item The round $t_{s_i}$ at which each $s_i$ is informed.
    \item For each $s_i \in M_{S_{\text{direct}}}$, guess the modulator vertex $m_i$ that informs it directly.
    \item For each $s_i \in S_{\text{direct}} \setminus M_{S_{\text{direct}}}$, guess which sub-clique it belongs to (two such vertices may lie in the same or different sub-cliques).
    \item For each of the (at most $k$) sub-cliques identified in step 6, guess the modulator vertex is associated with.
    \item For each such sub-clique, guess the round at which its first vertex received the information from that modulator vertex.
\end{enumerate}

The feasibility of a guess is not mentioned here, but informally, a guess is feasible/realizable if there are no two vertices in the graph that demand to be informed at the same time/round and more precisely if it is possible to inform the vertices at the exact round they are guessed to be informed (which can be checked in poly time for each guess). 

With these guesses fixed, we run a color-coding procedure:
\begin{enumerate}
    \item The number of vertices $s_i$ is at most $k$ (more precisely $d$),
    and the number of sub-cliques guessed in step 6 is at most $k$(more precisely, say $g\leq k$). So there are at most $d + g \leq 2k$ target vertices i.e. the $s_i$'s and the first-informed vertices of the guessed sub-cliques that we must correctly locate among the $n$ vertices of $C$. Order these target vertices arbitrarily as $ver_1, \dots, ver_{d+g}$.
    \item Randomly color the vertices of $C$ with $d + g$ colors.
    \item Given correct guesses, the probability that $ver_i$ receives color $i$ for every $i \in [d+g]$ is at least $(d+g)^{-(d+g)} \geq k^{-k}$.
\end{enumerate}

If the guesses and the coloring are both correct, we can, in polynomial
time, select one vertex from each relevant color class satisfying the
required adjacency conditions, e.g., if $ver_i$ plays the role of $s_i$, we
need a vertex of color $i$ adjacent to all of $S_i$ and to the appropriate
vertex of $M$; if $ver_i$ plays the role of a sub-clique's first informed
vertex, we need a vertex of color $i$ adjacent to the corresponding modulator vertex $m_j$; and so on.

Hence, conditional on correct guesses and correct coloring, we obtain a
broadcasting schedule that informs all of $M$ within time $T_{\text{opt}}$,
mimicking the optimal protocol:

\begin{observation}\label{case2}
If the optimal broadcast time is $T_{\text{opt}}$ and $n \leq k \cdot 2^k$,
there is an algorithm running in time $2^{O(k \log k)}$ that (possibly)
informs some vertices of $C$ together with all vertices of $M$, within
$T_{\text{opt}}$ rounds.
\end{observation}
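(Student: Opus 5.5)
The plan is to show that, when the guesses are correct, the guessed data together with a correctly colored witness set reproduces the part of an optimal protocol that touches $M$, and that the total number of guesses and colorings is $2^{O(k\log k)}$.

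\textbf{Fixing the optimal protocol.} Fix an optimal protocol $\Pi$ of length $T_{\text{opt}}$. Read off from $\Pi$ the following data:
\begin{itemize}
\item the partition $\mathcal{P}$ of $M$ by informing round, and the set $S$;
\item the modulator informers of the vertices of $M\setminus S$;
\item the grouping $S_1,\dots,S_d$ according to the clique vertex $s_i$ that informs each group;
\item the rounds $t_{s_i}$, the sub-clique labels, their modulator owners, and their seeding rounds.
\end{itemize}
By Observation~\ref{obs:few-actors} all of these objects have size $O(k)$.

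\textbf{Counting the guesses.} Each round index lies in $[T_{\text{opt}}]$. In Case 2 we have $T_{\text{opt}}\le \log n + k = O(k)$, since $b(G,s)$ is at most $k$ (to inform $M$ along a spanning structure) plus $\lceil\log n\rceil$ plus $O(1)$. The partitions and maps are over ground sets of size $O(k)$. Hence the number of guesses is $k^{O(k)}=2^{O(k\log k)}$. One of these guesses coincides with the data of $\Pi$; call it the correct guess.

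\textbf{Locating the witnesses.} Under the correct guess, let $ver_1,\dots,ver_{d+g}$ be the actual vertices used by $\Pi$. A random coloring of $C$ with $d+g\le 2k$ colors makes $ver_i$ receive color $i$ for every $i$ with probability at least $(2k)^{-2k}$. Repeating the coloring $2^{O(k\log k)}\log n$ times succeeds with high probability. This step can be derandomized with an $(n,2k)$-perfect hash family of size $2^{O(k)}\log n$, since we only need the $2k$ target vertices to receive distinct colors, together with a guessed bijection between colors and roles.

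Given a successful coloring, we must choose one vertex per color class satisfying the adjacency requirements: adjacency to all of $S_i$ and to the informing modulator or sub-clique, in the guessed rounds. This is the step I expect to be the main obstacle. Candidates from different classes interact only through their designated roles, so I would argue that any vertex of color $i$ meeting the local adjacency conditions is interchangeable with $ver_i$. The true witness is such a candidate, so every class has one, and a choice can be made independently per class in polynomial time.

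\textbf{Building and checking the schedule.} With the chosen vertices, construct the schedule exactly as $\Pi$ restricted to $M$ and the chosen vertices. Each modulator vertex is called in its guessed round by its guessed informer. The chosen clique vertices are seeded at their guessed rounds; wherever a chosen vertex is specified to be informed by another clique vertex, it is informed inside the clique along a seeding path. Any sub-clique vertices required as relays are available because $C$ is a clique and every relay is an arbitrary vertex of the clique. The feasibility check rejects any guess that forces two calls from one vertex in the same round. The correct guess passes this check, since $\Pi$ witnesses it.

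Consequently all of $M$ is informed by round $T_{\text{opt}}$, and the total running time is $2^{O(k\log k)}\cdot n^{O(1)}$, which proves the observation.
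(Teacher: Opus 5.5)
Your proposal is correct in outline and follows the paper's argument essentially step for step: fix an optimal protocol, guess its $O(k)$-size interface with $M$, use color coding to isolate the at most $2k$ clique witnesses, pick one vertex per color class using only the unary adjacency conditions, and replay. Your explicit bound $T_{\text{opt}}\le k+\lceil\log n\rceil$ and your success probability $(2k)^{-2k}$ are genuine tightenings; the paper's $k^{-k}$ is too optimistic. As in the paper, though, the in-clique relay schedule that carries the message from the sub-clique roots to the non-directly-informed $s_i$ is not among the guesses, so \emph{replay $\Pi$} needs an explicit replacement. The fix is easy: in a clique only the number of free callers per round matters, and a greedy schedule that informs as many clique vertices as possible each round, serving the $s_i$ in earliest-deadline order, succeeds whenever $\Pi$ does.
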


Using Observation~\ref{case2}, we build the full algorithm exactly as in
Case 1:

\begin{enumerate}
    \item In 2 rounds, inform a vertex $v \in C$ not already informed by the
          procedure of Observation~\ref{case2}.
    \item Run the guessing-and-color-coding procedure above, which informs
          all vertices of $M$ (and some vertices of $C$) as described.
    \item Let $v$ propagate the information greedily through the remaining,
          not-yet-informed vertices of $C$.
\end{enumerate}

Since the two procedures run on disjoint vertex sets and hence in parallel,
the total number of rounds is $2 + \max(T_{\text{opt}}, \log n) \leq
T_{\text{opt}} + 2$. 

The color-coding step succeeds with probability at least $k^{-k}$
(conditional on correct guesses), so the algorithm can be derandomized using
a $k$-perfect hash family, yielding a deterministic \textsf{FPT} additive
approximation.

\paragraph{Runtime analysis.} The number of guesses is bounded by
$2^{O(k \log k)}$, and the derandomization step also runs in time
$2^{O(k \log k)}$. Hence the total runtime is
$2^{O(k \log k)} \cdot \text{poly}(n + k)$.Therefore, we have the theorem-

\begin{theorem}
    \textsc{Telephone Broadcasting} admits a +2 additive approximation when parameterized by the distance to clique parameter $k$ with the runtime $2^{\mathcal{O}(k\log(k))}$.
\end{theorem}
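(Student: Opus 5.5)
The plan is to assemble the two regimes described above into one algorithm. We first guess the common data: the ordered partition $\mathcal{P}$ of $M$ by informing round, the set $S\subseteq M$ of modulator vertices informed from $C$, and, for each vertex of $M\setminus S$, its informer inside $M$. There are $k^{O(k)}=2^{O(k\log k)}$ such guesses, since an ordered partition of a $k$-set, a subset, and a map $M\setminus S\to M$ each contribute at most $k^{O(k)}$ choices. We then branch on whether $k\cdot 2^k<n$ or $k\cdot 2^k\ge n$. Along with this, we handle the reduction from an arbitrary source to a modulator source. If $s\in C$, the clique vertex $s$ simply plays the role of the seed vertex used in step 1 of both case algorithms, and this only helps. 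In every branch we check feasibility of the produced schedule in polynomial time. We output the shortest feasible schedule over all branches, and since one branch must correspond to the optimal protocol, it suffices to bound that branch.

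For Case 1 we use $T_{\text{opt}}\ge\lceil\log_2(n+k)\rceil\ge\log n\ge k+\log k$. The informed set can at most double each round, and here $n>k2^k$. Under the correct guess, the matchings $M_i$ saturating $S\cap P_i$ into $C$ are found by bipartite matching. The procedure of Observation~\ref{strtforward} informs the at most $k$ matched clique vertices by doubling within the clique in $\lceil\log k\rceil$ rounds, and then informs $M$ part by part in $|\mathcal{P}|\le k$ further rounds. A second clique vertex, informed in round $2$, floods the rest of $C$ by doubling in $\lceil\log n\rceil$ rounds. The two processes use disjoint informers, so the total is $2+\max(\log n,\log k+k)\le T_{\text{opt}}+2$.

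For Case 2 we invoke Observation~\ref{case2}. Here $T_{\text{opt}}\le \log n+k=O(k)$, so the extra guesses (round numbers, the sets $S_i$, informers of the $s_i$, and sub-clique seeds) still number $2^{O(k\log k)}$. We then locate the at most $2k$ witness vertices in $C$ by colour coding, derandomized via a perfect hash family of size $2^{O(k\log k)}\log n$. Given correct guesses and colouring, each colour class needs a vertex satisfying fixed adjacency constraints to $M$. These choices must be distinct, which the distinct colours guarantee. This reproduces the optimal protocol's treatment of $M$ within $T_{\text{opt}}$ rounds. Finally, one extra clique vertex informed in the first two rounds floods the remaining vertices of $C$, giving $\le T_{\text{opt}}+2$.

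The main obstacle is Case 2. Replaying the optimal protocol on $M$ must not conflict with the vertex that floods $C$. The issue is that in the optimum, clique vertices informed early may be busy relaying. Choosing the witness sets by colour is not enough unless the feasibility check also confirms that every guessed clique vertex is informed at its guessed round using only its guessed informer. I would therefore first phrase feasibility as a round-by-round matching test and argue that the optimal protocol's witnesses pass it. Only then would I charge the two extra rounds to the separate flooding branch.
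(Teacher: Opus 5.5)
Your outline follows the paper's argument almost step for step: the same common guesses, the same two regimes, matchings in Case~1, and colour coding plus one flooding seed in Case~2. As a proof, however, it has two genuine gaps, and the paper's sketch does not close either of them.

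The first gap is the seeding step. Case~1 needs two clique vertices informed by round~$2$, which requires $s\in C$ or $N(s)\cap C\neq\emptyset$. Case~2 needs one clique vertex within the first two rounds, which requires $\operatorname{dist}(s,C)\le 2$. You only reduce the case $s\in C$. The real difficulty is a modulator source far from $C$: the optimum may enter $C$ through many vertices of $N(C)\cap M$ in parallel, while your flood has a single seed that cannot appear before round $\operatorname{dist}(s,C)$.

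For a concrete failure, take the binomial broadcast tree $B_T$ ($T\ge 6$) rooted at $s$. Let $M$ be its $1+T+\binom{T}{2}$ vertices of depth at most $2$, and make the remaining $n=2^T-|M|>2^{T-1}$ vertices a clique $C$.
\begin{itemize}
\item Then $T_{\mathrm{opt}}=T=\lceil\log_2 n\rceil$ and $k2^k\ge n$, so you are in Case~2.
\item In the branch of the binomial protocol, no vertex of $M$ is informed from $C$. So there are no witnesses and your mimic informs no clique vertex.
\item All of $C$ must therefore be flooded from one seed informed at round $\ge 3=\operatorname{dist}(s,C)$. Your schedule for the branch you rely on thus has length at least $T_{\mathrm{opt}}+3$.
\end{itemize}
The general-source case needs a new ingredient, not just the remark about $s\in C$; for example, you could also reproduce how the optimum seeds $C$ from $N(C)\cap M$. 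The paper announces a reduction to a modulator source but never gives it. Even for $s\in C$, $s$ cannot both be the flooding seed and run the mimic. It should call a fresh seed in round~$1$ and then run the mimic shifted by one round.

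The second gap is the one you flag yourself, and the remedy you sketch does not close it. A round-by-round matching test only checks the guessed calls. A witness $s_i\notin M_{S_{\mathrm{direct}}}$, however, receives the message along a chain of unguessed clique vertices of its sub-clique. That chain must reach $s_i$ by round $t_{s_i}$ while the witnesses are tied up calling $M$ at their prescribed rounds. Nothing in your guesses determines the chain, so ``this reproduces the optimal protocol's treatment of $M$ within $T_{\mathrm{opt}}$ rounds'' is asserted rather than proved. The paper likewise only says that feasibility ``can be checked in poly time''.

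What is missing is a lemma that makes this checkable. One option is an exchange argument showing that, inside a sub-clique, you may call each witness exactly at its deadline and use the other free slots only to recruit fresh clique vertices. Swapping an early call to a witness with a later call to a fresh vertex leaves the multiset of free call slots unchanged, because all busy rounds of a witness lie after its deadline. Feasibility then becomes a per-round count, and the branch of the optimal protocol passes it.
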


\section{Polynomial Time Algorithm on Polar graphs}\label{sec:polar}

\begin{lemma}\label{lem:largetime}
Given a clique of size $n$, where only one vertex has the information, all the vertices of the clique can be informed in time $\lceil \log n \rceil$.
\end{lemma}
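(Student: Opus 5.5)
The plan is to use the standard doubling schedule and prove its correctness by induction on the number of rounds. The key invariant is that after round $i$ exactly $\min(2^i, n)$ vertices of the clique are informed. Since every pair of vertices in a clique is adjacent, there are no adjacency constraints, so the only thing to check is that enough informed vertices exist to make the calls.

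\textbf{Base case.} At time $0$ the informed set is the single source, so $2^0 = 1$ vertex is informed.

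\textbf{Inductive step.} Suppose that after round $i$ the informed set $I_i$ has size $\min(2^i, n)$, and that $|I_i| < n$. Let $U_i$ be the set of uninformed vertices. Choose any set $W \subseteq U_i$ with
\[
|W| = \min(|I_i|, |U_i|) = \min\bigl(2^i,\, n - 2^i\bigr).
\]
Pair each vertex of $W$ with a distinct vertex of $I_i$. All these pairs are edges of the clique, so this pairing is a matching saturating $W$ in the bipartite graph $(I_i, W)$. By Observation~\ref{obs:one-round-matching}, the whole set $W$ can therefore be informed in round $i+1$. The new informed set has size
\[
2^i + \min\bigl(2^i,\, n - 2^i\bigr) = \min\bigl(2^{i+1},\, n\bigr),
\]
which is the invariant for round $i+1$.

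\textbf{Conclusion.} Set $i = \lceil \log n \rceil$. Then $2^i \ge n$, so all $n$ vertices are informed after $\lceil \log n \rceil$ rounds. If the statement is meant with equality, the matching lower bound comes from the doubling argument in the preliminaries, which gives $b(G,s) \ge \lceil \log_2 n \rceil$. The only delicate point is the final round, where fewer than $|I_i|$ vertices may remain uninformed. This is why the truncation $\min(|I_i|, |U_i|)$ is used: in that round some informed vertices simply stay idle.
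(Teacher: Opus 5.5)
Your proof is correct and follows the same doubling argument as the paper: the informed set doubles each round while at least as many vertices remain uninformed, and the last round is truncated. Your invariant $\min(2^i,n)$ is actually stated more carefully than the paper's count, which says $2^t-1$ vertices are informed after $t$ steps when the correct number is $2^t$.
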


\begin{proof}
Consider a clique on $n$ vertices. At each time step, every informed vertex can inform exactly one adjacent (uninformed) vertex. Let $I(t)$ denote the number of informed vertices by the end of step $t$. Initially, $I(0)=1$. Since the graph is a clique, every vertex is adjacent to all others. Hence, as long as there are enough uninformed vertices, all informed vertices can simultaneously inform distinct uninformed vertices. Therefore, when $I(t) \le n - I(t)$, we have $I(t+1) = 2I(t)$.

Once $I(t) > n - I(t)$, the number of remaining uninformed vertices is less than the number of informed vertices, so all remaining vertices can be informed in one additional step.

Hence, after $t$ steps, the total number of informed vertices is
$1 + 2 + 4 + \cdots + 2^{t-1} = 2^t - 1.$
To inform all $n$ vertices, it suffices that $2^t \geq n.$
This implies
$t \geq \log_2 n$. 

Since time is discrete, taking $t = \lceil \log_2 n \rceil$ is sufficient to ensure that all vertices in the clique are informed.
\end{proof}

\begin{lemma}\label{maxcliqueDP}
    Given a polar graph, with only one vertex $v$ in the independent side, there exists an algorithm, that in polynomial time decides the maximum number of cliques (on the clique side) that can be completely informed, when the total available number of steps is $\lceil \log(n) \rceil$.
\end{lemma}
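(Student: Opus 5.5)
We read the setting as follows: the graph consists of a single vertex $v$ on the independent side, which is also the source, and cliques $K_1,\dots,K_m$ on the cluster side, with arbitrary edges from $v$ to the cliques. The budget is $T=\lceil \log n\rceil$ rounds.

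The first step is to understand what a fully informed clique costs. Each clique that will be completely informed must receive its first informed vertex from $v$, since cliques are pairwise non-adjacent and $v$ is the only other vertex. This requires $K_j$ to contain a neighbour of $v$. If $v$ calls into $K_j$ at round $r$, we claim the best possible outcome is that $K_j$ floods by itself, by doubling, in the remaining $T-r$ rounds.

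To justify this, note that any further calls from $v$ into $K_j$ only help that one clique and cost $v$ a round. Doubling is optimal inside a clique. So by Lemma~\ref{lem:largetime}, $K_j$ is fully informed within the budget if and only if $|K_j|\le 2^{T-r}$. A second call from $v$ into the same clique never beats spending that round seeding a different clique. We will make this precise with a simple exchange argument: replacing a redundant second call with a call to an unseeded clique cannot decrease the count of completed cliques.

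Thus each clique $K_j$ adjacent to $v$ has a deadline
$$d_j=\max\{r\ge 1 : |K_j|\le 2^{T-r}\}=T-\lceil\log|K_j|\rceil.$$
The vertex $v$ makes one call per round $r=1,\dots,T$. The question becomes: how many cliques can be assigned to distinct rounds $r\le d_j$? This is unit-length scheduling with deadlines, maximizing the number of on-time jobs.

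The second step is to solve this scheduling problem in polynomial time. We would use the classical greedy rule: sort the cliques by increasing size, equivalently decreasing deadline, and assign each clique to the latest free round that is at most its deadline. An equivalent way to say this is to add cliques in increasing order of size and keep the earliest-deadline-first feasibility check. Alternatively, we can compute a maximum bipartite matching between cliques and rounds $\{1,\dots,T\}$, with $K_j$ joined to all rounds $\le d_j$. This is clearly polynomial, since there are $m\le n$ cliques and $T=O(\log n)$ rounds. The matching formulation is the cleanest for a proof, because any feasible schedule yields a matching and conversely.

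The main obstacle is the exchange argument in the first step: showing that an optimal protocol never benefits from $v$ calling the same clique twice, or from calling cliques that will not be completed. On the second point, calls to cliques that will not be completed can simply be ignored when counting. On the first point, if $v$ calls $K_j$ at rounds $r_1<r_2$, the clique is already complete-able at $r_1$ only if $|K_j|\le 2^{T-r_1}$. A second seed at $r_2$ lowers the requirement to at most $2^{T-r_1}+2^{T-r_2}$ vertices. This could matter, and it is exactly where we would have to be careful.

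If multiple seeding does help, we would generalize the model. A set of calls at rounds $R_j$ completes $K_j$ if and only if $|K_j|\le\sum_{r\in R_j}2^{T-r}$, by Lemma~\ref{lem:largetime}-style doubling. Since the budget is only $T=O(\log n)$ rounds, $v$ makes at most $O(\log n)$ calls in total. We would then run a DP over cliques sorted by size, together with the set of rounds used by $v$. Sets of rounds are $2^{O(\log n)}=\mathrm{poly}(n)$ in number, which keeps the running time polynomial and gives the maximum number of completed cliques.
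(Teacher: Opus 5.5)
\textbf{The primary route fails.} The claim that a completed clique needs only one call from $v$ is false, so reducing to unit-job deadline scheduling undercounts. Take a single clique $K=\{a,b,c,d,e,f\}$ with $v$ adjacent to $a$ and $b$ only, so $T=\lceil\log 7\rceil=3$. Your deadline is $d=T-\lceil\log 6\rceil=0$, so you report $0$ completed cliques; one seed yields at most $4$ informed vertices of $K$ after round $3$. But the schedule $v\to a$ in round $1$, then $a\to c$ and $v\to b$ in round $2$, then $a,b,c\to d,e,f$ in round $3$ completes $K$.

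So the contingency at the end of your proposal is not optional; it is the argument you need. It is essentially the paper's proof: a DP $\DP(i,S)$ over an arbitrary fixed order of the cliques and subsets $S\subseteq[\lceil\log n\rceil]$ of rounds in which $v$ may call into the first $i$ cliques. The transition either skips $K_i$ or charges it a subset $S'\subseteq S$ that ``covers'' $K_i$. This is polynomial because $2^{\lceil\log n\rceil}\le 2n$, and the pairs $S'\subseteq S$ number $3^{\lceil\log n\rceil}=O(n^{\log_2 3})$. Sorting the cliques by size is harmless but unnecessary.

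\textbf{The completion test in your fallback is wrong.} You state ``$K_j$ completes iff $|K_j|\le\sum_{r\in R_j}2^{T-r}$''. This ignores that each call of $v$ into $K_j$ must reach a distinct neighbour of $v$ in $K_j$ that is still uninformed at that round. Take the same $6$-clique with $v$ adjacent to a single vertex $w$. Then $R_j=\{1,2\}$ passes your test, since $4+2=6$. Yet after round $1$, $v$ has no uninformed neighbour in $K_j$, and one seed gives only $4$ vertices. Your DP would output $1$ instead of $0$.

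The fix is to require also $|R_j|\le|N(v)\cap K_j|$. Necessity is clear: each successful call uses a fresh neighbour of $v$, and the informed count in $K_j$ satisfies $x_\tau\le 2x_{\tau-1}+[\tau\in R_j]$. For sufficiency, reserve $|R_j|$ neighbours of $v$ for the calls and let informed clique vertices inform unreserved vertices first. While unreserved vertices remain, $x_\tau=2x_{\tau-1}+[\tau\in R_j]$ holds exactly. Once they run out, every uninformed vertex is reserved and is reached by round $T$. With this corrected test your DP is correct and coincides with the paper's. The paper builds the same constraint into its operational definition of ``covers'': $v$ calls only still-uninformed neighbours, and clique vertices inform non-neighbours of $v$ before neighbours.
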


\begin{proof}
    We give a polynomial run time Dynamic Programming algorithm to achieve the required number of max cliques, assuming the cliques are arranged in some particular order (we order them randomly, and then fix the ordering). We start with defining the $\DP$ entry.
    $$\DP (i,S\subseteq [\lceil \log(n) \rceil])$$
    This DP entry stores the maximum number of cliques, from the first $i$ cliques (in the fixed ordering) that can be completely informed by the end, assuming only at the time steps contained in $S$ can be used to transmit information from $v$ to one of the first $i$ cliques. We now go on to specify the exact $\DP$ update steps required to construct all the entries (iteratively).
    $$
    \mathrm{DP}\bigl(i,\, S \subseteq [\lceil \log n \rceil]\bigr)
    =
    \max \left\{
    \mathrm{DP}(i-1, S),\;
    \max_{\substack{S' \subseteq S \\ S' \text{ covers clique } i}}
    \mathrm{DP}\bigl(i-1,\, S \setminus S'\bigr)+1\right\}
    $$
    First, we clarify what ``covers" mean. Note that the vertices in the clique can be arranged in an order such that the vertices that are the neighbors of $v$ are not informed in a step by some other informed vertex in the clique while some non neighbor vertex is uninformed. Therefore, we assume that whenever an informed vertex in the clique transmits the information , it only transmits it to an vertex neighboring $v$ provided all the non adjacent clique vertices have been already informed. In this sense, we say that $S'\subseteq  [\lceil \log n \rceil]$ covers a clique adjacent to $v$ if transmitting information from $v$ to the uninformed clique vertices (neighboring $v$) at steps in $S'$ ensures that the entire clique is informed by the end.
    
    The proof of the DP step is straightforward if one notices that the two terms in the first max function corresponds to when a vertex in clique $i$ is not informed from $v$ at a time step in $S$ and otherwise respectively.
  
    As with general DP algorithms, we can backtrack to exactly know which of the cliques are informed and at what steps.
\end{proof}

\begin{theorem}
     Given a polar graph, with only one vertex $v$ in the independent side, there exists an algorithm, that correctly determines if all the vertices in the graph can be informed within time $t$, provided the information starts off with $v$.
\end{theorem}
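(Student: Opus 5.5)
We prove the statement by reducing the decision question to the covering dynamic program of Lemma~\ref{maxcliqueDP}, generalized from the horizon $\lceil \log n\rceil$ to an arbitrary horizon $t$.

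\textbf{Setup and first reductions.} Let the cliques on the clique side be $K_1,\dots,K_m$. The only vertex outside them is $v$, and the $K_j$ are pairwise non-adjacent, so every clique can receive the information only through calls from $v$. Two consequences follow. If some $K_j$ has no neighbour of $v$, the graph is disconnected and we answer NO. If $t<m$, then $v$ cannot make even one call into every clique, and we answer NO.

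\textbf{Normalizing a single clique.} Fix a clique $K_j$ with $a_j$ vertices adjacent to $v$, and suppose $v$ calls into $K_j$ at rounds $r_1<r_2<\dots<r_q$. Inside $K_j$ each informed vertex keeps doubling, and $v$ additionally injects one fresh informed vertex at each round $r_\ell$. By the exchange argument already used in Lemma~\ref{maxcliqueDP}, clique-internal calls may be assumed to go first to vertices non-adjacent to $v$, so that $v$'s neighbours stay available for $v$.

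We would make this precise with an explicit count. A vertex first informed at round $r$ (by $v$) generates at most $2^{t-r}$ informed vertices by round $t$, counting itself. Hence $K_j$ is fully informed by round $t$ if and only if
\[
\sum_{\ell=1}^{q} 2^{t-r_\ell}\ge |K_j| \quad\text{and}\quad q\le a_j .
\]
Sufficiency needs an argument that the doubling trees can be scheduled disjointly inside the clique. Within a clique any informed vertex may call any uninformed vertex, so the only constraint is the supply of $v$-neighbours, and $q\le a_j$ guarantees that supply.

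This test shows that \emph{covers} in Lemma~\ref{maxcliqueDP} is a simple numerical predicate on the set of rounds assigned to $K_j$. Since only the first round matters most, it is optimal to give each clique a set of rounds that is an initial segment of the rounds $v$ uses for it.

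\textbf{Main step: assigning $v$'s rounds to cliques.} The rounds $1,\dots,t$ of $v$ must be partitioned among the cliques, with each clique's weighted sum $\sum 2^{t-r}$ reaching $|K_j|$. The DP of Lemma~\ref{maxcliqueDP} indexes subsets $S\subseteq[t]$, which is exponential in $t$. This is harmless when $t=O(\log n)$, which holds whenever the instance is a YES instance with $t$ small; note that $b(G,v)\ge m$ and also $b(G,v)\ge\lceil\log n\rceil$.

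The main obstacle is that $t$ may be as large as $m$, or larger than $\log n$, so subset indexing is not polynomial. I would replace it by an exchange argument on the order of service. Sort the cliques by decreasing $|K_j|$, or more precisely by the minimum number of $v$-rounds each needs. Then show that some optimal assignment gives larger cliques earlier rounds, and gives each clique a contiguous block of $v$'s rounds.

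With contiguity, the DP state becomes (clique index $i$, next free round $r$). Its value records feasibility of covering $K_1,\dots,K_i$ using rounds $1,\dots,r-1$. A transition tries every block length for $K_{i+1}$ and checks the displayed inequality together with $q\le a_{i+1}$. This gives $O(m t^2)=O(n^3)$ time, since we may cap $t\le n$, matching the claimed bound.

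The exchange proof is the delicate part. Swapping an earlier round from a small clique to a large one must not break the small clique. I would argue that the weights $2^{t-r}$ are powers of two, so a greedy, Huffman-like domination argument applies. If that fails, I would fall back to the answer of Lemma~\ref{maxcliqueDP} for the instance, asking whether the maximum number of completely informed cliques equals $m$, in the regime $t\le c\log n$. For larger $t$ I would argue that $t\ge m+\lceil\log n\rceil$ always suffices, which leaves only a polynomial range of $t$ to handle.
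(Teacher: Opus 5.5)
Your single-clique criterion is sound. If internal calls prefer non-neighbours of $v$ and $v$ calls $K_j$ at rounds $r_1<\dots<r_q$ with $q\le a_j$, then $K_j$ ends with $\min\bigl(|K_j|,\sum_{\ell}2^{t-r_\ell}\bigr)$ informed vertices.

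\textbf{The gap is in the main step.} Your exchange claim says some optimal assignment gives each clique a contiguous block of $v$'s rounds. That claim is false. Take $s$ cliques of size $8$, a clique $K_1$ of size $5$, a clique $K_2$ of size $2$, and $t=s+3$. Every clique has a neighbour of $v$, and $K_1$ has at least two. The round weights are then $2^{s+2},\dots,8$ followed by $4,2,1$.
\begin{itemize}
\item Each size-$8$ clique needs one of the first $s$ rounds, because the last three rounds weigh only $7$ in total.
\item Hence $K_1$ and $K_2$ must be served from the last three rounds.
\item The only way to do that is $K_1\leftarrow\{s+1,s+3\}$ and $K_2\leftarrow\{s+2\}$, which is interleaved.
\end{itemize}
This is a YES instance, and your (clique index, next free round) DP rejects it under any clique order.

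Your fallbacks do not catch it either. Here $n=8s+8$ and $m=s+2$, so $c\log n<t<m+\lceil\log n\rceil$ for large $s$. More generally, knowing that only polynomially many values of $t$ remain does not give a polynomial test for any single one of them. Your subset DP over $[t]$ is exponential throughout that middle range.

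\textbf{The missing idea is already visible in your weights, and it is how the paper proceeds.} Let $L=\lceil\log n\rceil$. Every round $r\le t-L$ has weight $2^{t-r}\ge 2^L\ge n>|K_j|$. So a single call in such a round informs a whole clique (Lemma~\ref{lem:largetime}). The first $t-L$ rounds are therefore interchangeable one-clique slots, and only the last $L$ rounds need fine-grained packing.

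Lemma~\ref{maxcliqueDP} is only ever used with horizon $L$, where there are just $2^L=O(n)$ subsets. The algorithm is:
\begin{itemize}
\item If $t\le L$, run that DP directly.
\item Otherwise, compute the maximum number $c^\ast$ of cliques that can be completely informed using disjoint sets of the last $L$ rounds, and answer YES iff $m-c^\ast\le t-L$.
\end{itemize}
Exactness follows by normalizing any schedule so that a clique called in an early round receives no further calls from $v$. After this, the cliques served only late use disjoint subsets of the last $L$ rounds, and the early-called cliques use distinct early rounds.
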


\begin{proof}
    Note that if $t\leq \lceil \log(n) \rceil$, we can just run the same algorithm as presented in lemma~\ref{maxcliqueDP}.

    If $t > \lceil \log(n) \rceil$, from lemma~\ref{lem:largetime}, we know that if $v$ transmits the information to any of the clique at any of the steps from $1$ till $t-\lceil \log(n) \rceil$ (both inclusive), by the end of $t$ steps, that clique would be completely informed. Since all the cliques are adjacent to the vertex $v$, what remains is to determine the maximum number of cliques that can be informed in the last $\lceil \log(n) \rceil$ steps. Once we know this by running the DP algorithm of lemma~\ref{maxcliqueDP} , if the remaining number of cliques if more than $t-\lceil \log(n) \rceil$, we know that informing all the vertices is not possible in time $t$, however if remaining number of cliques is at most $t-\lceil \log(n) \rceil$, we know that its a yes instance by informing a vertex from these cliques in the first $t-\lceil \log(n) \rceil$ steps and in the remaining steps, proceeding with the DP algorithm of lemma~\ref{maxcliqueDP} (which we ran in the beginning).
\end{proof}

Therefore, we have,
\begin{theorem}
    \textsc{Telephone Broadcasting} admits a polynomial time algorithm running in time $\mathcal{O}({n^3})$ on polar graphs.
\end{theorem}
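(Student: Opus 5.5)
We derive the $\mathcal{O}(n^3)$ bound by combining the decision procedure of the preceding theorem with a careful count of the DP of Lemma~\ref{maxcliqueDP}. The algorithm is the one already described. If $t\le\lceil\log n\rceil$, we run the DP directly. Otherwise we run the DP on the last $\lceil\log n\rceil$ rounds and compare the number of uncovered cliques against $t-\lceil\log n\rceil$. Correctness is exactly the preceding theorem, and Lemma~\ref{lem:largetime} guarantees that any clique seeded in one of the first $t-\lceil\log n\rceil$ rounds finishes in time. It therefore remains to bound the running time, and to observe that $t\ge n$ can be answered YES immediately, so $t$ is polynomially bounded.

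For the running time, note that the DP is indexed by a subset $S\subseteq[\lceil\log n\rceil]$. There are $2^{\lceil\log n\rceil}\le 2n$ such subsets, so this index ranges over only $O(n)$ values, and the first coordinate $i$ ranges over at most $n$ cliques. The table thus has $O(n^2)$ entries. A naive transition enumerates every $S'\subseteq S$, which costs $O(3^{\log n})=O(n^{1.58})$ summed over $S$ and is too slow. The key step, and the main obstacle, is to replace this enumeration by a canonical choice of $S'$. Whether $S'$ covers clique $i$ depends only on the multiset of rounds in $S'$ and on the clique's size $|Q_i|$ and its number of $v$-neighbours $d_i$. I would argue by an exchange argument that it suffices to use a set $S'$ consisting of the $j$ earliest available rounds of $S$, for some $j\le d_i$: earlier seeds dominate later ones, and a seed that is not needed can be released back to other cliques.

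With this reduction, covering can be tested in $O(1)$ amortised time via a closed-form count. A clique seeded at rounds $r_1<\dots<r_j$, with doubling inside it, informs at most $\sum_{\ell}2^{T-r_\ell}$ vertices, where $T=\lceil\log n\rceil$, subject to the ordering rule that non-neighbours of $v$ are informed first. Each transition then ranges over at most $O(n)$ values of $j$. This gives $O(n^2)\cdot O(n)=O(n^3)$ in total. Backtracking through the table recovers the schedule in the same time bound.

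If the exchange argument for the canonical $S'$ fails in full generality, my fallback is to note that $|S|\le\log n$. Enumerating subsets $S'$ by size and prefix structure still keeps the per-entry cost polynomial in $n$. The overall bound then remains polynomial, although a finer argument would be needed to obtain exactly $n^3$.
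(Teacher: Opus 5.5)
Your algorithm and its correctness argument are the paper's: the DP of Lemma~\ref{maxcliqueDP} on the last $T=\lceil\log n\rceil$ rounds, plus counting the cliques that must be seeded earlier. The paper gives no justification of the $\mathcal{O}(n^3)$ bound, so your running-time analysis is the new content, and its key step is false. Restricting $S'$ to the $j$ earliest rounds of $S$ is not safe in this recurrence. The rounds in $S\setminus S'$ are passed to cliques $1,\dots,i-1$ through $\DP(i-1,S\setminus S')$, so giving clique $i$ the earliest rounds can starve a clique that needs them.

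Here is a counterexample. Let $v$ be adjacent to $a$ and $c$, with cliques $Q_1=\{a,b\}$ and $Q_2=\{c\}$, so $n=4$ and $t=T=2$. The schedule ``$v\to a$; then $a\to b$ and $v\to c$'' finishes in two rounds. With your canonical choice, $\DP(2,\{1,2\})$ has two options. Skipping $Q_2$ gives $\DP(1,\{1,2\})=1$. Giving $Q_2$ round $1$ leaves only round $2$ for $Q_1$, which a single late seed cannot cover, so it gives $\DP(1,\{2\})+1=1$. The algorithm therefore answers NO on a YES instance. ``Earlier seeds dominate later ones'' holds for one clique in isolation, not when cliques compete for the same rounds.

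The detour was also unnecessary, because your claim that enumerating all $S'\subseteq S$ is too slow is a miscalculation. For a fixed $i$, $\sum_{S\subseteq[T]}2^{|S|}=3^{T}<3\cdot n^{\log_2 3}$. Over all $i\le n$ this gives only $O(n^{1+\log_2 3})=O(n^{2.59})$ transitions. Index the rounds of the window as $1,\dots,T$ and precompute, for each of the $2^T\le 2n$ subsets $S'$, the numbers $w(S')=\sum_{r\in S'}2^{T-r}$ and $|S'|$. Each cover test is then $O(1)$: $S'$ covers $Q_i$ iff $w(S')\ge |Q_i|$ and $|S'|\le d_i$.

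For this test you need $w(S')$ to be attained, not merely an upper bound as you wrote. It is attained because clique vertices inform non-neighbours of $v$ first, so each of $v$'s at most $d_i$ calls finds an uninformed neighbour. Adding $O(n^2)$ preprocessing for the $|Q_i|$ and $d_i$, your own fallback already gives the $\mathcal{O}(n^3)$ bound, with no exchange argument needed.
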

\bibliographystyle{alpha}
\bibliography{references}
\end{document}